\documentclass[11pt,a4paper,oneside]{article}

\usepackage[T1]{fontenc}
\usepackage[utf8]{inputenc}
\usepackage{microtype}
\usepackage[margin=1in]{geometry}

\usepackage{fancyhdr}
\newcommand{\runningtitle}{MCSI: Masked Commutative Supersingular Isogeny Key Exchange}

\usepackage{amsmath,amssymb,amsthm,mathtools}
\usepackage{needspace}
\usepackage{newtxtext,newtxmath}
\usepackage{bm}

\usepackage{algorithm}
\usepackage{algpseudocode}
\usepackage{booktabs}
\usepackage{graphicx}
\usepackage{tikz}
\usetikzlibrary{arrows.meta,positioning,calc,decorations.pathmorphing}
\usepackage{pgfplots}
\usepgfplotslibrary{groupplots}
\pgfplotsset{compat=1.18}

\usepackage[hidelinks]{hyperref}
\usepackage{cleveref}

\theoremstyle{plain}
\newtheorem{theorem}{Theorem}[section]
\newtheorem{lemma}[theorem]{Lemma}
\newtheorem{proposition}[theorem]{Proposition}
\newtheorem{corollary}[theorem]{Corollary}

\theoremstyle{definition}
\newtheorem{definition}[theorem]{Definition}
\newtheorem{problem}[theorem]{Problem}
\newtheorem{assumption}[theorem]{Assumption}

\theoremstyle{remark}
\newtheorem{remark}[theorem]{Remark}

\newcommand{\ZZ}{\mathbb{Z}}
\newcommand{\QQ}{\mathbb{Q}}

\newcommand{\Fp}{\mathbb{F}_p}
\newcommand{\Fpsq}{\mathbb{F}_{p^2}}
\newcommand{\Fpbar}{\overline{\mathbb{F}}_p}

\newcommand{\Eprime}{E'}
\DeclareMathOperator{\End}{End}

\DeclareMathOperator{\ord}{ord}
\DeclareMathOperator{\disc}{disc}
\DeclareMathOperator{\Norm}{N}
\newcommand{\Ord}{\mathcal{O}}              
\newcommand{\clgp}{\mathrm{cl}(\Ord)}       
\newcommand{\ida}{\mathfrak{a}}
\newcommand{\idb}{\mathfrak{b}}

\newcommand{\idl}{\mathfrak{l}}
\newcommand{\Ellset}{\mathcal{E}\ell\ell_p(\Ord,\pi)}
\newcommand{\isog}{\varphi}
\newcommand{\dualisog}{\hat{\varphi}}
\newcommand{\Ptinf}{\mathcal{O}_E}          
\newcommand{\frob}{\pi}

\newcommand{\Etors}[1]{E[#1]}

\newcommand{\act}{\star}
\newcommand{\Gact}{G}
\newcommand{\Xset}{X}
\newcommand{\Ezero}{E_0}
\newcommand{\Mont}[1]{E_{#1}}

\newcommand{\KeyGen}{\mathrm{KeyGen}}

\newcommand{\Validate}{\mathrm{Validate}}
\newcommand{\MCSI}{\ensuremath{\mathrm{MCSI}}}
\newcommand{\Mask}{\mathrm{Mask}}
\newcommand{\Unmask}{\mathrm{Unmask}}
\newcommand{\AEnc}{\mathrm{Enc}}
\newcommand{\ADec}{\mathrm{Dec}}
\newcommand{\Mac}{\mathrm{MAC}}
\newcommand{\encode}{\mathrm{enc}}
\newcommand{\decode}{\mathrm{dec}}
\newcommand{\Itos}{\mathrm{I2OSP}}

\newcommand{\secpar}{\lambda}
\newcommand{\negl}{\mathrm{negl}}
\newcommand{\Adv}{\mathcal{A}}
\newcommand{\Bdv}{\mathcal{B}}
\newcommand{\Hash}{\mathrm{H}}

\newcommand{\cat}{\mathbin{\|}}
\newcommand{\getsr}{\stackrel{\$}{\leftarrow}}
\newcommand{\advantage}[2]{\mathbf{Adv}^{\mathrm{#1}}_{#2}}
\newcommand{\bits}{\{0,1\}}
\newcommand{\ctxt}{\mathit{ctx}}
\newcommand{\tr}{\mathit{tr}}

\title{MCSI: A Masked Commutative Supersingular Isogeny\\ Key Exchange with Blinded Ephemeral Keys}

\author{%
\begin{tabular}{@{}c@{\hspace{0.8cm}}c@{\hspace{0.8cm}}c@{}}
Furkan Cifci\textsuperscript{$\star$} & Osman Emre Donder & Reyyan Cifci \\[2pt]
\small M.Emin Sarac High School & \small Computer Engineering & \small Computer Engineering \\
\small Istanbul, Turkiye & \small Bilkent University & \small King Fahd University of \\
 & \small Ankara, Turkiye & \small Petroleum and Minerals \\
\small \textsuperscript{$\star$}\,Corresponding author & & \small Saudi Arabia \\[3pt]
\small \texttt{thefurkancifci@gmail.com} & \small \texttt{a.o.bafrali@gmail.com} & \small \texttt{spherosic@gmail.com}
\end{tabular}%
}

\date{\today}

\begin{document}
\maketitle

\begin{abstract}
We introduce \MCSI{}, a two message key exchange we design over the CSIDH class group action, in
which each party sends its ephemeral public element under an authenticated encryption keyed by the
value the two static keys determine. The design gives implicit mutual authentication, hides the
ephemeral element from an eavesdropper, and lets a recipient discard an unauthenticated message
after one tag check rather than after an evaluation of the group action, which is four orders of
magnitude more expensive.

On the analytic side, we prove that our protocol is correct with zero error, and prove three
statements in the random oracle model, all reducing to the strong parallelisation problem:
indistinguishability of the session key against a passive adversary, confidentiality of the
blinded ephemeral element, and integrity of the blinded transport. None uses the decisional group
action assumption, which is false for class group actions of non-prime discriminant. We also show
that a blinding key cannot come from the session secret it is meant to establish.

To instantiate the design we select parameters and show that a prime chosen for elliptic curve
discrete logarithms is unusable: for $p = 2^{521}-1$, the NIST P-521 prime, the action admits no
efficiently evaluable generator.

On the practical side, we build and test the design. We implement the protocol twice, in C and
independently in Python, cross check the two, and measure what a session costs in field
operations, time and memory. We also audit our code for secret dependent control flow: the field
arithmetic and the symmetric layer show none, while the group action leaks the key by
construction, and two hundred timings separate two keys whose one-norms differ by five out of 370.

Finally, we state what we do not prove, among them security under ephemeral key reveal, forward
secrecy of the blinding, and constant time execution.

\medskip
\noindent\textbf{Keywords:} post-quantum cryptography; isogeny based cryptography; CSIDH;
cryptographic group actions; authenticated key exchange.
\end{abstract}
\begingroup
\setcounter{tocdepth}{1}
\setlength{\parskip}{0pt}
\needspace{15\baselineskip}
\tableofcontents
\endgroup
\section{Introduction}
\label{sec:intro}

A cryptographically relevant quantum computer would break the public key cryptography that
secures most network traffic today, and although no such machine exists, the migration has to
begin long before one does, because data recorded now can be decrypted later~\cite{Mosca2018}.
The response has been a decade of work on post-quantum
cryptography~\cite{BernsteinLange2017} and, since 2024, a standardised lattice based key
encapsulation mechanism in FIPS 203~\cite{FIPS203}. Standardisation on a single mathematical
family is uncomfortable, and the interest in alternatives with different underlying problems
is not merely academic.

Isogeny based cryptography is one such alternative, and it has the smallest public keys of any
post-quantum family by a wide margin. It also had, in 2022, the most dramatic failure. The
following two paragraphs matter for how this paper positions itself, so we state them
carefully.

\subsection{The isogeny landscape after the SIDH break}
\label{sec:landscape}

There are two branches. The first, initiated by Jao and De
Feo~\cite{JaoDeFeo2011,DeFeoJaoPlut2014}, works with supersingular curves over $\Fpsq$, where
the endomorphism ring is an order in a quaternion algebra and is therefore noncommutative. To
recover enough structure for a Diffie--Hellman analogue, SIDH publishes the images of a
torsion basis under the secret isogeny. That auxiliary data proved fatal: Castryck and
Decru~\cite{CastryckDecru2023} recovered keys from it in 2022, and Maino et
al.~\cite{Maino2023} and Robert~\cite{Robert2023} generalised the attack, the last of these to
a polynomial time algorithm with no remaining restrictions. SIKE, the SIDH based candidate in
the NIST process, was withdrawn.

The second branch, going back to Couveignes~\cite{Couveignes2006} and Rostovtsev and
Stolbunov~\cite{RostovtsevStolbunov2006} and made practical for supersingular curves by
Castryck, Lange, Martindale, Panny and Renes~\cite{CSIDH2018}, works with curves defined over
the prime field $\Fp$. There the $\Fp$-rational endomorphism ring is an order in an imaginary
quadratic field and is commutative, and the ideal class group of that order acts freely and
transitively on the relevant set of curves. A Diffie--Hellman analogue follows from the
commutativity of the class group. This branch, CSIDH, transmits nothing but a single field
element per public key and publishes no torsion data at all, and it was untouched by the 2022
attacks. Its weaknesses are elsewhere: a single evaluation of the action takes tens of
milliseconds, and the concrete quantum security of the smaller parameter sets is disputed, with
published estimates spanning a wide range~\cite{Peikert2020,BonnetainSchrottenloher2020,SQALE2022}.

We stress one consequence of this history, because a design in this area is easy to motivate
incorrectly. The SIDH break was caused by publishing torsion images. CSIDH publishes none, so
there is no torsion disclosure in the commutative branch to defend against, and any scheme in
that branch that presents itself as closing that particular gap is describing a gap it never
had. Work on hiding torsion images is real and useful, and M-SIDH and
MD-SIDH~\cite{MSIDH2023} are the developed version of it, but it belongs to the first branch.
This paper belongs to the second, and the thing it hides is not torsion data.

\subsection{What this paper does}
\label{sec:contributions}

Consider a key exchange built directly on the CSIDH action. Each party sends a curve, applies
its secret to the curve it receives, and hashes the result. The expensive step, by a wide margin, is the evaluation of the action, and in such a protocol a party performs
that evaluation on data supplied by whoever sent the packet. It also validates that data first,
which is cheaper but not free. An unauthenticated responder therefore does tens of
milliseconds of work on behalf of anybody who can reach it, and it does so on an element of
the curve set that the sender chose.

\MCSI{}, for masked commutative supersingular isogeny, changes this. Both parties hold static
key pairs, and the static-static shared value $Z_{\mathrm{ss}}$, which by commutativity both
can compute and which can be cached once per peer, keys an authenticated encryption of the
ephemeral element carried in each message. A recipient that cannot verify the tag stops after
one hash and one tag check, having evaluated no group action; a recipient that can verify the
tag knows that the sender holds the peer's static secret. The same layer hides the ephemeral
element from an observer, and the resulting protocol is implicitly mutually authenticated. The
word masked refers to this blinding of the ephemeral public element, and to nothing else.

Our contributions are the following.

\begin{itemize}
  \item A specification of \MCSI{} over an abstract effective group action
        (\Cref{sec:protocol}), together with the design argument that fixes it. In particular
        we show that a blinding key cannot be derived from the session secret it is supposed
        to help establish, enumerate the alternatives, and justify the one we take
        (\Cref{sec:rationale}).
  \item A correctness theorem with a proof that identifies exactly the algebraic facts used,
        namely commutativity of the class group, the action axioms, and uniqueness of the
        Montgomery representative (\Cref{thm:correctness}). We also record which of these
        fails for the iterated isogeny walk that a first attempt at such a protocol tends to
        produce.
  \item Three security theorems, all in the random oracle model and all reducing to the
        strong parallelisation assumption: indistinguishability of the session key against a
        passive adversary (\Cref{thm:sk-ind}), confidentiality of the blinded ephemeral
        element (\Cref{thm:blinding}), and integrity of the blinded transport
        (\Cref{thm:integrity}). None of them uses the decisional assumption, which for this
        instantiation is false~\cite{CSV2020}.
  \item A short lemma showing that a fixed public byte substitution applied to ciphertexts,
        a layer sometimes added to constructions of this shape, leaves every security notion
        exactly where it was (\Cref{lem:sbox}).
  \item A parameter analysis (\Cref{sec:parameters}), including the observation that primes
        chosen for elliptic curve discrete logarithms are unusable here. For the concrete case
        $p = 2^{521}-1$, the prime of NIST P-521, the class group action cannot be evaluated
        at all, because $p+1$ is a power of two and the only available generator would need an
        exponent of size $2^{260}$ (\Cref{rem:2521}).
  \item An explicit list of the properties we do not prove (\Cref{sec:limits}).
  \item A reference implementation, in portable C and in Python, checked against each other
        by known answer vectors and against published test vectors for the hash primitives,
        with measured costs and a measured account of how far its running time follows the
        private key (\Cref{sec:impl}).
\end{itemize}

We claim no new hardness assumption, no speed advantage over any existing scheme, and no
novelty for the idea of building an authenticated key exchange on a group action, which is
already done in~\cite{deKock2021,Kawashima2021} with stronger security models than ours. The
measurements we report in \Cref{sec:impl} are of our own implementation, which is neither
constant time nor optimised, and we say what that means for how they should be read. Every
size in this paper is derived from the parameter choices and is labelled as such.

\subsection{Organisation}
\label{sec:organisation}

\Cref{sec:prelim} fixes notation and states the mathematical background, including the
definition of supersingularity, V\'elu's formulae with their domain of validity, the class
group action, and the hard problems we assume. \Cref{sec:protocol} gives the design rationale
and the protocol. \Cref{sec:correctness} proves correctness and \Cref{sec:security} proves the
security statements and surveys the known attacks. \Cref{sec:parameters} selects parameters,
\Cref{sec:impl} describes the reference implementation and reports sizes, costs and the
measured timing leak, \Cref{sec:related} places
the work in the literature, and \Cref{sec:conclusion} lists what remains open.
\section{Preliminaries}
\label{sec:prelim}

\subsection{Notation}
\label{sec:notation}

We write $\secpar$ for the security parameter and $\bits^{n}$ for the set of bit strings of
length $n$. Concatenation of strings is written $x \cat y$. Sampling $x$ uniformly at random
from a finite set $S$ is written $x \getsr S$. A function is negligible in $\secpar$ if it
decays faster than the inverse of every polynomial, and we write $\negl(\secpar)$ for such a
function.

Throughout, $p$ denotes a prime with $p > 3$, $\Fp$ the field with $p$ elements, $\Fpbar$ an
algebraic closure of $\Fp$, and $\frob$ the $p$-power Frobenius endomorphism. Elliptic curves
are written $E$, their point at infinity $\Ptinf$, and their $n$-torsion subgroup
$\Etors{n} = \{P \in E(\Fpbar) : nP = \Ptinf\}$. We reserve $\Ord$ for an order in an
imaginary quadratic field and $\clgp$ for its ideal class group, so that $\Ptinf$ and $\Ord$
denote different objects and are never interchanged. Ideals of $\Ord$ are written in Fraktur,
$\ida, \idb, \idl$, and their classes in square brackets, $[\ida]$.

$\Hash$ denotes a hash function modelled as a random oracle in the security analysis, and
$\Itos(i,k)$ denotes the big-endian encoding of the non-negative integer $i$ into exactly $k$
bytes.

\subsection{Elliptic curves over finite fields}
\label{sec:curves}

An elliptic curve over $\Fp$ is a smooth projective curve of genus one together with a
distinguished rational point $\Ptinf$. Since $p > 3$, every such curve admits a short
Weierstrass model
\begin{equation}
  E : y^2 = x^3 + a_4 x + a_6, \qquad a_4, a_6 \in \Fp ,
  \label{eq:short-weierstrass}
\end{equation}
and the model defines a smooth curve exactly when its discriminant is nonzero, that is when
$4a_4^3 + 27 a_6^2 \neq 0$ in $\Fp$. The chord and tangent construction turns
$E(\Fp) \cup \{\Ptinf\}$ into a finite abelian group with $\Ptinf$ as the identity. We assume
familiarity with the group law and do not reproduce it; see Silverman~\cite{Silverman2009} or
Washington~\cite{Washington2008}.

Two facts are used repeatedly. First, by Hasse's theorem the trace of Frobenius
$t = p + 1 - \#E(\Fp)$ satisfies $|t| \leq 2\sqrt{p}$. Second, for every $n$ coprime to $p$
one has $\Etors{n} \cong (\ZZ/n\ZZ)^2$ as a group, so that $\Etors{n}$ has a basis consisting
of two points of order $n$. Bases of this kind are the objects that SIDH transmits and whose
disclosure was fatal to that scheme; they play no role in the protocol of this paper, a point
we return to in \Cref{sec:torsion-attacks}.

We will also use the Montgomery model
\begin{equation}
  \Mont{A} : y^2 = x^3 + A x^2 + x, \qquad A \in \Fp, \ A^2 \neq 4 ,
  \label{eq:montgomery}
\end{equation}
because it gives a canonical one field element representative for the curves that occur in
our protocol (\Cref{prop:montgomery-normal-form}).

\subsection{Supersingular curves}
\label{sec:supersingular}

The following is the definition of supersingularity. It is a statement about $p$-torsion, and
it is unrelated to whether the defining Weierstrass equation is smooth.

\begin{definition}[Supersingular curve]
\label{def:supersingular}
Let $E$ be an elliptic curve over a field of characteristic $p > 0$. Then $E$ is
\emph{supersingular} if
\[
  \Etors{p}(\Fpbar) = \{ \Ptinf \} ,
\]
that is, if $E$ has no nontrivial $p$-torsion over the algebraic closure. Otherwise
$\Etors{p}(\Fpbar) \cong \ZZ/p\ZZ$ and $E$ is called \emph{ordinary}.
\end{definition}

\begin{proposition}[Equivalent characterisations]
\label{prop:supersingular-equiv}
Let $E$ be an elliptic curve defined over $\Fp$ with $p > 3$, and let $t$ denote the trace of
Frobenius. The following are equivalent.
\begin{enumerate}
  \item $E$ is supersingular in the sense of \Cref{def:supersingular}.
  \item $p \mid t$, which over the prime field $\Fp$ forces $t = 0$ and hence
        $\#E(\Fp) = p + 1$.
  \item $\End_{\Fpbar}(E)$ is an order in a quaternion algebra ramified exactly at $p$ and
        at infinity.
  \item $j(E) \in \Fpsq$.
\end{enumerate}
\end{proposition}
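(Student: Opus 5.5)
The plan is to prove the equivalences through a cycle centred on (1), using the Frobenius $\frob$ and its dual (Verschiebung) together with standard facts on endomorphism algebras from Silverman~\cite[V.3]{Silverman2009}.

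\emph{(1)$\Leftrightarrow$(2).} First I write $[p] = \hat{\frob}\circ\frob$. Since $\frob$ is purely inseparable of degree $p$, the group $\Etors{p}(\Fpbar)$ is trivial exactly when $\hat{\frob}$ is inseparable. Next I use the identity $\hat{\frob} = [t] - \frob$ and the fact that an isogeny $\psi$ is inseparable exactly when $\psi^*\omega = 0$ on the invariant differential. Pulling back gives $\hat{\frob}^*\omega = t\,\omega - \frob^*\omega = t\,\omega$, because $\frob^*\omega = 0$. Hence $\hat{\frob}$ is inseparable iff $p \mid t$. Over $\Fp$ with $p>3$, Hasse's bound gives $|t| \le 2\sqrt{p} < p$, so $p \mid t$ forces $t=0$. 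Then $\#E(\Fp) = p+1$.

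\emph{(1)$\Leftrightarrow$(4).} For the forward direction, Deuring's theorem says supersingular $j$-invariants are roots of the Hasse polynomial of degree $(p-1)/2$. The standard argument shows this polynomial splits over $\Fpsq$; equivalently, $j^{p^2}=j$ because $\frob^2$ acts on the isomorphism class through a scalar once $\hat{\frob}$ is inseparable. The converse is \emph{false as literally stated}. Every $E$ defined over $\Fp$ has $j(E)\in\Fp\subset\Fpsq$, ordinary or not. So (4) is vacuous under the standing hypothesis, and (4)$\Rightarrow$(1) cannot be proved. I would flag this and reformulate (4) either as ``every curve isogenous to $E$ has $j$-invariant in $\Fpsq$'', or as the condition that $j(E)$ be a root of the Hasse polynomial. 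With the first reformulation, the converse follows from the existence over $\Fpbar$ of isogenous curves with $j$-invariants outside $\Fpsq$ when $E$ is ordinary; this uses the infinitude of the $\ell$-isogeny graph of ordinary curves over $\Fpbar$. I expect this repair to be the main obstacle, since it is a correction of the statement rather than a proof of it.

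\emph{(1)$\Leftrightarrow$(3).} By Deuring's classification, $\End_{\Fpbar}(E)\otimes\QQ$ is $\QQ$, an imaginary quadratic field, or a definite quaternion algebra. In the ordinary case $E[p^\infty]\cong\QQ_p/\ZZ_p$, and the action on its Tate module gives a map $\End(E)\otimes\ZZ_p \to \ZZ_p$. This map is injective, so $\End$ is commutative and hence not a quaternion order. In the supersingular case I would cite Silverman's Theorem V.3.1: $\End(E)$ is a maximal order in the quaternion algebra ramified at $p$ and $\infty$. Ramification at $p$ comes from $\End\otimes\ZZ_p$ being the maximal order of a division algebra, since no nonzero endomorphism acts trivially on the connected $p$-divisible group. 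Ramification at $\infty$ comes from the positivity of the degree form.
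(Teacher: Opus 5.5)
\textbf{Item (4).} You are right that (4)$\Rightarrow$(1) is false, and here your proposal is more accurate than the paper. The paper's proof handles (1)$\Leftrightarrow$(3)$\Leftrightarrow$(4) by citing \cite[V.3.1]{Silverman2009}. The corresponding condition there, however, is that $[p]$ is purely inseparable \emph{and} $j(E)\in\Fpsq$. Once the first conjunct is dropped, (4) holds for every curve defined over $\Fp$. The ordinary curve $y^2=x^3+2x+3$ over $\mathbb{F}_{101}$ of \Cref{rem:not-supersingular} is a counterexample taken from the paper itself. So the proposition as stated cannot be proved; the minimal repair is to restore Silverman's full condition or to delete (4). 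Also, (1)$\Rightarrow$(4) needs no Deuring here, since $j(E)\in\Fp$ trivially.

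\textbf{Your proposed repairs.} Your isogeny-class reformulation is valid. Its ordinary direction needs only that the $\Fpbar$-isogeny class of an ordinary curve is infinite: descending $\ell$-isogenies give endomorphism rings of unbounded $\ell$-power conductor. A pigeonhole count against the $p^2$ elements of $\Fpsq$ then finishes. Your Hasse-polynomial reformulation needs rewording, because the polynomial of degree $(p-1)/2$ is in the Legendre parameter $\lambda$, not in $j$.

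\textbf{Item (2).} For (1)$\Leftrightarrow$(2) you follow the paper's route exactly. Your computation $\hat{\frob}^{*}\omega=t\,\omega$ supplies the step the paper only asserts.

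\textbf{A false step in the ordinary case of (3).} The map $\End(E)\otimes\ZZ_p\to\End(T_pE)\cong\ZZ_p$ is not injective. For $E$ ordinary over $\Fp$, $\End_{\Fpbar}(E)\supseteq\ZZ[\frob]$ has rank $2$, so the source is free of rank $2$ over $\ZZ_p$ while the target has rank $1$. Concretely, if $\frob$ acts on $T_pE$ by $u$, then $\frob\otimes1-1\otimes u$ is a nonzero element of the kernel. The injectivity theorem \cite[III.7.4]{Silverman2009} requires $\ell\neq p$ because it uses separability of $[\ell^n]$; for $\ell=p$, the map $\hat{\frob}^{\,n}$ kills $E[p^n]$ without being divisible by $p^n$.

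The fix is to use the untensored map. $\End(E)\to\End(T_pE)\cong\ZZ_p$ is injective, because a nonzero endomorphism has finite kernel while $E[p^\infty]$ is infinite. A subring of $\ZZ_p$ is commutative, so $\End(E)$ is commutative.

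\textbf{Ramification in the supersingular case of (3).} ``Ramified exactly at $p$ and $\infty$'' also requires splitting at every $\ell\neq p$, which you do not address. This follows because $\End(E)\otimes\QQ_\ell\hookrightarrow\End(V_\ell E)\cong M_2(\QQ_\ell)$ is an injection of four-dimensional $\QQ_\ell$-algebras, hence an isomorphism. The algebra is definite at $\infty$, since the reduced norm is the degree and is positive definite. The number of ramified places is even, so ramification at $p$ is forced. This is cleaner than your appeal to the connected $p$-divisible group.
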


\begin{proof}
This is classical; see~\cite[V.3.1]{Silverman2009} for the equivalence of (1), (3) and (4).
For (2), the $p$-torsion is trivial exactly when the dual of Frobenius is inseparable, which
happens exactly when $p \mid t$; over the prime field the Hasse bound $|t| \leq 2\sqrt{p}$ then
leaves $t = 0$ as the only multiple of $p$ in range once $p > 4$, whence
$\#E(\Fp) = p+1$.
\end{proof}

\begin{remark}[A definition that is sometimes confused with supersingularity]
\label{rem:not-supersingular}
The condition $4a_4^3 + 27a_6^2 \neq 0$ appearing after
\Cref{eq:short-weierstrass} is the condition that the Weierstrass equation defines a
\emph{smooth} curve. Every elliptic curve satisfies it by definition, ordinary curves
included, so it cannot distinguish supersingular curves from any others. A concrete
illustration: over $\Fp$ with $p = 101$, the curve $y^2 = x^3 + 2x + 3$ has discriminant
$4 \cdot 2^3 + 27 \cdot 3^2 = 275 \equiv 73 \not\equiv 0 \pmod{101}$ and is therefore smooth,
but $\#E(\Fp) = 96 \neq 102 = p+1$, so by
\Cref{prop:supersingular-equiv}\,(2) it is ordinary. We state this explicitly because the
distinction is the foundation of everything that follows: the commutativity that our protocol
relies on comes from the $\Fp$-rational endomorphism ring of a \emph{supersingular} curve
over the prime field, and is simply absent for an ordinary curve.
\end{remark}

Supersingular curves over $\Fp$ exist for every $p$, and $E_0 : y^2 = x^3 + x$ is
supersingular whenever $p \equiv 3 \pmod 4$~\cite{CSIDH2018}. This is the starting
curve we use.

\subsection{Isogenies and V\'elu's formulae}
\label{sec:isogenies}

An \emph{isogeny} $\isog : E \to \Eprime$ between elliptic curves over a field $k$ is a
nonconstant morphism of varieties that maps $\Ptinf$ to the point at infinity of $\Eprime$.
Every isogeny is a group homomorphism on points. Its degree is its degree as a morphism, and
for a separable isogeny this equals $\# \ker \isog$. Every isogeny $\isog$ of degree $d$ has a
dual $\dualisog$ with $\dualisog \circ \isog = [d]$.

Separable isogenies are determined by their kernels. Given a finite subgroup
$G \subset E(\Fpbar)$ that is stable under the Galois action, there is a curve $E/G$ and a
separable isogeny $\isog : E \to E/G$ with kernel $G$, both defined over $k$ and unique up to
post-composition with an isomorphism. V\'elu~\cite{Velu1971} gave explicit formulae for
$E/G$ and for $\isog$, which we now record in the general Weierstrass form
$E : y^2 + a_1xy + a_3y = x^3 + a_2x^2 + a_4x + a_6$ because that is the form in which they
are usually implemented~\cite{Shumow2009}.

Let $G_2 = \{ Q \in G : 2Q = \Ptinf, \ Q \neq \Ptinf \}$ and let $R$ contain exactly one point
from each pair $\{Q, -Q\}$ of $G \setminus (G_2 \cup \{\Ptinf\})$. Put $S = R \cup G_2$. For
$Q = (x_Q, y_Q) \in S$ define
\begin{align}
  g^x_Q &= 3x_Q^2 + 2a_2 x_Q + a_4 - a_1 y_Q, &
  g^y_Q &= -2y_Q - a_1 x_Q - a_3, \\
  v_Q &= \begin{cases}
           g^x_Q & \text{if } 2Q = \Ptinf, \\
           2g^x_Q - a_1 g^y_Q & \text{otherwise,}
         \end{cases} &
  u_Q &= (g^y_Q)^2 ,
\end{align}
and set $v = \sum_{Q \in S} v_Q$ and $w = \sum_{Q \in S} (u_Q + x_Q v_Q)$. Then
\begin{equation}
  E/G : \ y^2 + a_1 xy + a_3 y = x^3 + a_2 x^2 + (a_4 - 5v) x + \bigl(a_6 - (a_1^2 + 4a_2)v - 7w\bigr) ,
  \label{eq:velu-curve}
\end{equation}
and for a point $P = (x,y) \in E(\Fpbar)$ \emph{with $P \notin G$} the image $\isog(P)$ has
coordinates
\begin{align}
  X &= x + \sum_{Q \in S} \left( \frac{v_Q}{x - x_Q} - \frac{u_Q}{(x-x_Q)^2} \right),
  \label{eq:velu-x}\\
  Y &= y - \sum_{Q \in S} \left( u_Q \frac{2y + a_1 x + a_3}{(x-x_Q)^3}
      + v_Q \frac{a_1(x-x_Q) + y - y_Q}{(x-x_Q)^2}
      + \frac{a_1 u_Q - g^x_Q g^y_Q}{(x-x_Q)^2} \right).
  \label{eq:velu-y}
\end{align}

Two properties of these formulae govern how they may be used, and both are easy to overlook.

\begin{remark}[Domain of validity]
\label{rem:velu-domain}
\Cref{eq:velu-x} and \Cref{eq:velu-y} have poles at every $x = x_Q$ with $Q \in S$. They
compute $\isog(P)$ only for $P \notin G$; if $P$ generates $G$, then $\isog(P) = \Ptinf$ and
the formulae do not apply. An implementation that evaluates them at a kernel point and relies
on a modular inversion routine returning $0$ for the input $0$ will produce a well formed but
meaningless field element rather than an error.
\end{remark}

\begin{remark}[Cost]
\label{rem:velu-cost}
Evaluating the sums requires enumerating $S$, so the cost of V\'elu's formulae is
$\Theta(\deg \isog)$ field operations. They are therefore usable only for kernels of small
order. For a point $P$ of large order on a curve over a large field, the subgroup
$\langle P \rangle = \{P, 2P, \dots, \Ptinf\}$ cannot be enumerated at all, and any
construction that requires this enumeration does not scale beyond toy parameters. The square
root V\'elu algorithm of Bernstein, De Feo, Leroux and Smith~\cite{SqrtVelu2020} reduces the
cost to $\tilde{O}(\sqrt{\deg \isog})$ and moves the practical threshold upward, but it does
not remove the dependence on the degree. Large degree isogenies are computed as compositions
of many small degree steps, never in one application of \Cref{eq:velu-curve}.
\end{remark}

\subsection{Endomorphism rings and the class group action}
\label{sec:endo}

Let $E$ be a supersingular elliptic curve defined over $\Fp$. Two endomorphism rings must be
distinguished.

Over the algebraic closure, $\End_{\Fpbar}(E)$ is a maximal order in the quaternion algebra
$B_{p,\infty}$ ramified at $p$ and $\infty$~\cite{Deuring1941}. This ring is noncommutative,
and it is the object underlying SIDH and SQISign.

Over the prime field, the situation is different and much simpler. Write $\End_{\Fp}(E)$ for
the ring of endomorphisms of $E$ that are defined over $\Fp$.

\begin{proposition}[$\Fp$-rational endomorphisms are commutative]
\label{prop:endo-commutative}
Let $p > 3$ and let $E$ be a supersingular elliptic curve defined over $\Fp$, with Frobenius
endomorphism $\frob$. Then $\frob^2 = -p$, and $\End_{\Fp}(E)$ is an order $\Ord$ in the
imaginary quadratic field $\QQ(\sqrt{-p})$ containing $\ZZ[\frob]$. In particular
$\End_{\Fp}(E)$ is commutative.
\end{proposition}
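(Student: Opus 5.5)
The plan has three steps: establish $\frob^2 = -p$ from the trace, embed $\End_{\Fp}(E)$ in a commutative algebra by showing it commutes with $\frob$, and then prove that this image is an order containing $\ZZ[\frob]$.

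\emph{Step one: $\frob^2=-p$.} Frobenius satisfies its characteristic polynomial $\frob^2 - t\frob + p = 0$ in $\End_{\Fpbar}(E)$, where $t = p+1-\#E(\Fp)$. By \Cref{prop:supersingular-equiv}\,(2), supersingularity over $\Fp$ with $p>3$ forces $t=0$, so $\frob^2 = -p$. Since $\frob \notin \ZZ$ (it is inseparable of degree $p$, and $[n]$ has degree $n^2$), the subring $\ZZ[\frob] \cong \ZZ[\sqrt{-p}]$ is an order in $K=\QQ(\sqrt{-p})$.

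\emph{Step two: embedding in $K$.} An endomorphism $\alpha$ is defined over $\Fp$ exactly when it commutes with the Galois action, that is, when $\alpha\frob = \frob\alpha$. Hence $\End_{\Fp}(E)$ is contained in the centraliser of $\frob$ inside $\End_{\Fpbar}(E)\otimes\QQ = B_{p,\infty}$, using \Cref{prop:supersingular-equiv}\,(3). In a quaternion algebra, the centraliser of an element $\frob\notin\QQ$ is the commutative subfield $\QQ(\frob)$. To justify this I will use the following argument: the centraliser is a subalgebra containing the field $\QQ(\frob)$, so it is a $\QQ(\frob)$-vector space of dimension $1$ or $2$. Dimension $2$ would make it all of $B_{p,\infty}$, which would put $\frob$ in the centre $\QQ$, a contradiction. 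Therefore $\End_{\Fp}(E)\otimes\QQ = \QQ(\frob) \cong K$, and in particular $\End_{\Fp}(E)$ is commutative.

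\emph{Step three: it is an order.} $\End_{\Fp}(E)$ is a subring of $K$ containing $\ZZ[\frob]$. It is finitely generated as a $\ZZ$-module because it sits inside the rank-four lattice $\End_{\Fpbar}(E)$. It has rank $2$ because it contains $\ZZ[\frob]$. Hence it is an order $\Ord$ of $K$ with $\ZZ[\frob]\subseteq\Ord$. The only real obstacle is the centraliser argument in step two. If one prefers not to invoke the quaternion structure, the same conclusion should follow from the fact that $\End_{\Fp}(E)$ acts faithfully on the $\ell$-adic Tate module, which has rank $2$ over $\ZZ_\ell$, and there commutes with $\frob$. Since $\frob$ has distinct eigenvalues $\pm\sqrt{-p}$, its commutant in $M_2(\QQ_\ell)$ is $\QQ_\ell[\frob]$, which is commutative.
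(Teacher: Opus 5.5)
Your proposal is correct and follows the paper's route. You use $t=0$ to get $\frob^2=-p$, observe that $\Fp$-rational endomorphisms commute with $\frob$, and identify the centraliser of $\frob$ with $\QQ(\frob)$. Where the paper simply asserts the centraliser step and cites Waterhouse, you prove it with the $\QQ(\frob)$-dimension count in $B_{p,\infty}$, and you add the lattice argument that $\End_{\Fp}(E)$ is an order containing $\ZZ[\frob]$.
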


\begin{proof}
By \Cref{prop:supersingular-equiv} the trace of Frobenius is zero, so $\frob$ satisfies
$\frob^2 + p = 0$ and $\ZZ[\frob] \cong \ZZ[\sqrt{-p}]$. An $\Fp$-rational endomorphism
commutes with $\frob$, so $\End_{\Fp}(E)$ is contained in the centraliser of $\frob$ inside
$\End_{\Fpbar}(E)$, which is a commutative ring of rank two over $\ZZ$ contained in
$\QQ(\frob) = \QQ(\sqrt{-p})$. The result is due to Waterhouse~\cite{Waterhouse1969}; see also
Delfs and Galbraith~\cite{DelfsGalbraith2016}.
\end{proof}

\begin{remark}[On the phrase ``isogenies computed over the rationals'']
\label{rem:rationals}
\Cref{prop:endo-commutative} is the precise statement behind the informal claim that
restricting to the prime field makes isogeny computation ``rational'' and therefore
commutative. The isogenies themselves are of course computed over $\Fp$, not over $\QQ$. What
lives in an imaginary quadratic field, and hence in a commutative ring, is the endomorphism
ring, and it is that commutativity, not any property of the arithmetic, which the protocols
of this family exploit. Curves over $\Fpsq$ do not have this property, since there
$\End(E)$ is a quaternion order.
\end{remark}

Fix an order $\Ord \subseteq \QQ(\sqrt{-p})$ containing $\ZZ[\frob]$ and let
\[
  \Ellset = \left\{ E/\Fp \ \text{supersingular} \ : \ \End_{\Fp}(E) \cong \Ord \right\} / \cong_{\Fp}
\]
be the set of $\Fp$-isomorphism classes of supersingular curves over $\Fp$ whose
$\Fp$-rational endomorphism ring is $\Ord$, with $\frob$ corresponding to the Frobenius
endomorphism under that identification. For an invertible ideal $\ida \subseteq \Ord$ put
$E[\ida] = \bigcap_{\alpha \in \ida} \ker \alpha$; this is a finite subgroup of $E$, and the
quotient isogeny $E \to E/E[\ida]$ has degree $\Norm(\ida)$. Writing $[\ida] \act E$ for the
class of $E/E[\ida]$ gives the following.

\begin{theorem}[Class group action]
\label{thm:class-action}
The map
\[
  \clgp \times \Ellset \longrightarrow \Ellset, \qquad ([\ida], E) \longmapsto [\ida] \act E
\]
is a well defined group action of the abelian group $\clgp$ on $\Ellset$, and it is free and
transitive.
\end{theorem}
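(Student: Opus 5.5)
I would prove the theorem in three stages: first that the action is well defined, then that it satisfies the action axioms, and finally that it is free and transitive. The argument follows Waterhouse~\cite{Waterhouse1969} and Delfs and Galbraith~\cite{DelfsGalbraith2016}, adapted from the ordinary case.

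\emph{Well definedness.} Let $\ida \subseteq \Ord$ be invertible. Since $\frob \in \Ord$ commutes with every element of $\ida$, the group $E[\ida]$ is Galois stable. Hence the quotient $E/E[\ida]$ and the isogeny $\isog_\ida$ are defined over $\Fp$, as recalled in \Cref{sec:isogenies}. The codomain is supersingular because isogenous curves have the same number of points, and \Cref{prop:supersingular-equiv} then applies.

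Next I would show that $\End_{\Fp}(E/E[\ida]) \cong \Ord$, compatibly with $\frob$. The containment of $\Ord$ follows by transporting endomorphisms through $\isog_\ida$. Equality requires invertibility of $\ida$: it gives $\isog_{\bar\ida}\circ\isog_\ida = [\Norm(\ida)]$, so the construction can be reversed. Independence of the representative comes from $E[\alpha\ida] = \alpha^{-1}(E[\ida])$, which makes $E/E[\alpha\ida] \cong E/E[\ida]$ via $\alpha$. The same identity shows that principal ideals act trivially.

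\emph{Action axioms.} The axioms $[\Ord]\act E = E$ and $[\ida]\act([\idb]\act E) = [\ida\idb]\act E$ both reduce to the kernel identity $E[\ida\idb] = \isog_\idb^{-1}\bigl((E/E[\idb])[\ida]\bigr)$. For ideals of coprime norm this follows from $E[\ida\idb] = E[\ida] + E[\idb]$. In general I would first move both ideals to coprime representatives within their classes.

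\emph{Freeness and transitivity.} These carry the real content, and I expect transitivity to be the main obstacle.

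For freeness, suppose $E/E[\ida] \cong E$. Composing $\isog_\ida$ with this isomorphism yields an $\Fp$-endomorphism $\alpha \in \Ord$ with $\ker\alpha = E[\ida]$. This forces $\ida = \alpha\Ord$, so $[\ida]$ is trivial.

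For transitivity, take $E, E' \in \Ellset$. I would first argue that they are $\Fp$-isogenous: curves over $\Fp$ with the same number of points $p+1$ are $\Fp$-isogenous by Tate's theorem. Choose an $\Fp$-isogeny $\psi : E \to E'$, and factor it into prime degree steps. The delicate point is that each step is either horizontal, meaning both curves have endomorphism ring $\Ord$, or changes the order. For $\Ord \supseteq \ZZ[\frob]$ in $\QQ(\sqrt{-p})$, the only possible change is at the prime $2$, between $\ZZ[\sqrt{-p}]$ and $\ZZ[(1+\sqrt{-p})/2]$.

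I would then cite Waterhouse's theorem that an isogeny between curves with the same endomorphism ring $\Ord$ has kernel of the form $E[\ida]$ for an invertible ideal $\ida$, up to a separable part and an endomorphism factor. The Frobenius factor is harmless, since $\frob$ lies in $\Ord$ and is principal. Together these give $E' \cong [\ida]\act E$.

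If a fully self-contained argument is too long, I would defer this step to~\cite{Waterhouse1969} and~\cite[Thm.~2.7]{DelfsGalbraith2016}, as the paper already does for \Cref{prop:endo-commutative}.
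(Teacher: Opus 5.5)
Your proposal is correct and follows the same route as the paper. The paper's proof simply cites Waterhouse~\cite{Waterhouse1969} and~\cite{CSIDH2018}; you unpack well-definedness, the action axioms and freeness in the standard way and, like the paper, defer the substantive transitivity step to Waterhouse and Delfs--Galbraith, noting only that once you invoke the result that every $\Fp$-isogeny between two curves with endomorphism ring $\Ord$ is, up to a Frobenius factor, the quotient by $E[\ida]$ for an invertible $\ida$, your factorisation of $\psi$ into prime degree steps and the discussion of order changes at $2$ are not needed, since that result applies to $\psi$ directly.
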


\begin{proof}
See Waterhouse~\cite{Waterhouse1969}; the statement in this form and for this
choice of $\Ord$ is proved in~\cite{CSIDH2018}, building on the theory of complex
multiplication and on Deuring's correspondence~\cite{Deuring1941}. Freeness and transitivity
say that $\Ellset$ is a principal homogeneous space, or torsor, under $\clgp$.
\end{proof}

\Cref{fig:action} illustrates the structure that \Cref{thm:class-action} describes. Because the
action is free and transitive, the set $\Ellset$ is a copy of $\clgp$ with the origin
forgotten, and the isogeny graphs attached to the individual generators overlay one another on
that set. Commutativity is visible in the picture as the closing of every square built from
one step in each of two directions.

\begin{figure}[t]
\centering
\begin{tikzpicture}[
  font=\small,
  dot/.style={circle, fill=black, minimum size=3.6pt, inner sep=0pt},
  h/.style={-{Latex[length=1.7mm]}, gray!65, line width=0.5pt},
  v/.style={-{Latex[length=1.7mm]}, gray!65, line width=0.5pt, densely dashed},
  hb/.style={-{Latex[length=2.2mm]}, black, line width=1.0pt},
  vb/.style={-{Latex[length=2.2mm]}, black, line width=1.0pt, densely dashed},
]
  \foreach \i in {0,1,2,3} {
    \foreach \j in {0,1,2} {
      \node[dot] (n\i\j) at (2.9*\i, -1.65*\j) {};
    }
  }
  \foreach \j in {0,1,2} {
    \foreach \i/\k in {0/1, 1/2, 2/3} {
      \draw[h] (n\i\j) -- (n\k\j);
    }
    \draw[h] (n3\j) -- (10.35, -1.65*\j);
    \draw[h] (-0.75, -1.65*\j) -- (n0\j);
  }
  \foreach \i in {0,1,2,3} {
    \foreach \j/\l in {0/1, 1/2} {
      \draw[v] (n\i\j) -- (n\i\l);
    }
    \draw[v] (n\i2) -- (2.9*\i, -3.85);
    \draw[v] (2.9*\i, 0.55) -- (n\i0);
  }
  \draw[hb] (n00) -- node[above, font=\footnotesize] {$[\idl_1]$} (n10);
  \draw[vb] (n10) -- node[right, font=\footnotesize] {$[\idl_2]$} (n11);
  \draw[vb] (n00) -- node[left, font=\footnotesize] {$[\idl_2]$} (n01);
  \draw[hb] (n01) -- node[below, font=\footnotesize] {$[\idl_1]$} (n11);

  \node[anchor=south east, font=\footnotesize] at (-0.06,0.06) {$\Ezero$};
  \node[anchor=south west, font=\footnotesize] at (2.96,0.06) {$[\idl_1] \act \Ezero$};
  \node[anchor=north east, font=\footnotesize] at (-0.06,-1.71) {$[\idl_2] \act \Ezero$};
  \node[anchor=north west, font=\footnotesize] at (2.96,-1.71) {$[\idl_1\idl_2] \act \Ezero$};

  \node[anchor=west, font=\footnotesize] at (11.0,-0.55) {\ };
\end{tikzpicture}
\caption{The action of two generators on $\Ellset$. Each dot is an $\Fp$-isomorphism class of
supersingular curves; solid arrows are the action of $[\idl_1]$, that is $\ell_1$-isogenies,
and dashed arrows the action of $[\idl_2]$. The edges of each kind form disjoint cycles, and
the stubs at the boundary indicate that the picture wraps around. The highlighted square
commutes, that is $[\idl_1\idl_2] \act \Ezero = [\idl_2\idl_1] \act \Ezero$, which is
\Cref{eq:commutativity} for the two generators; it is this closing of squares that makes a
Diffie--Hellman style exchange possible.}
\label{fig:action}
\end{figure}

Two consequences of \Cref{thm:class-action} are used later. Since $\clgp$ is abelian,
\begin{equation}
  [\ida] \act \bigl( [\idb] \act E \bigr) \ = \ [\ida \idb] \act E \ = \
  [\idb] \act \bigl( [\ida] \act E \bigr)
  \label{eq:commutativity}
\end{equation}
for all classes $[\ida], [\idb]$ and all $E \in \Ellset$. Since the action is free and
transitive, $\#\Ellset = \#\clgp$, and each orbit element is reached by exactly one class.

Finally we record the fact that makes public keys short.

\begin{proposition}[Montgomery normal form]
\label{prop:montgomery-normal-form}
Let $p \equiv 3 \pmod 8$ and $p > 3$, and let $\Ord = \ZZ[\frob]$. Every class in
$\Ellset$ has a unique representative of the form $\Mont{A} : y^2 = x^3 + Ax^2 + x$ with
$A \in \Fp$. Consequently an element of $\Ellset$ is described by a single element of $\Fp$.
\end{proposition}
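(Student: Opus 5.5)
We follow the argument of Castryck et al.~\cite{CSIDH2018} and split the claim into \emph{existence} and \emph{uniqueness} of a Montgomery representative, both over $\Fp$. Throughout, $E$ is a supersingular curve over $\Fp$ with $\End_{\Fp}(E) = \ZZ[\frob]$, so $\#E(\Fp) = p+1$ by \Cref{prop:supersingular-equiv}(2).

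\textbf{Existence.} A short Weierstrass curve $y^2 = x^3 + a_4x + a_6$ over $\Fp$ is $\Fp$-isomorphic to a Montgomery curve $By^2 = x^3 + Ax^2 + x$ if and only if two conditions hold.
\begin{enumerate}
  \item The cubic has a root $\alpha \in \Fp$, equivalently $E$ has a rational point of order $2$.
  \item $3\alpha^2 + a_4$ is a square in $\Fp$.
\end{enumerate}
The twist parameter $B$ can then be absorbed into $y$ precisely when $B$ is a square.

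Condition~1 holds because $4 \mid p+1$. More precisely, $E(\Fp) \cong \ZZ/(p+1)$ or $\ZZ/2 \times \ZZ/((p+1)/2)$, and either way the order is even, so a $2$-torsion point exists.

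The remaining work is to place $E$ in Montgomery form with $B = 1$. For this we use $p \equiv 3 \pmod 8$ together with $\Ord = \ZZ[\frob]$, i.e.\ $E$ lies on the floor of the $2$-isogeny volcano.
\begin{itemize}
  \item By Delfs--Galbraith~\cite{DelfsGalbraith2016}, a curve on the floor has exactly one rational $2$-torsion point, so $E(\Fp)[2] \cong \ZZ/2$ and hence $E(\Fp) \cong \ZZ/(p+1)$ is cyclic.
  \item Translating the unique rational $2$-torsion point $T = (\alpha, 0)$ to $x = 0$ gives $y^2 = x(x^2 + ax + b)$ with $a^2 - 4b$ a nonsquare.
  \item We want $b$ to be a square, so that scaling $x \mapsto \sqrt{b}\,x$ yields $x^3 + A x^2 + x$ up to a factor $B$ on $y^2$. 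Since $p \equiv 3 \pmod 4$, $-1$ is a nonsquare, so exactly one of $E$ and its quadratic twist has $B$ a square. Both have $\End_{\Fp} = \ZZ[\frob]$, and both are in $\Ellset$, the twist being $[\bar{\ida}] \act E$ for the appropriate class.
\end{itemize}

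The main obstacle is to prove that $b$ is a square and $B = 1$ for $E$ itself, not only for one of $E$ and its twist. For $b$, I would argue via the halving condition: $b$ is a square iff $T \in 2E(\Fp)$ up to the standard descent map $x \mapsto x$ modulo squares. Cyclicity of order $p+1 \equiv 4 \pmod 8$ makes $T$ a double of a point of order $4$. For $B$, I would use the point of order $4$, whose existence follows from $4 \mid p+1$ and cyclicity. Following~\cite[Prop.~8]{CSIDH2018}, a Montgomery curve with $B = 1$ is characterised by having a rational point of order $4$ above $(0,0)$, namely $(1, \pm\sqrt{A+2})$ or $(-1, \pm\sqrt{A-2})$. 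So I would show directly that the order-$4$ point $P$ with $2P = T$ can be normalised to $x(P) = \pm 1$ by the scaling, forcing $B$ to be a square.

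\textbf{Uniqueness.} Suppose $E_A \cong_{\Fp} E_{A'}$. Any isomorphism between Montgomery forms is $(x,y) \mapsto (u^2 x + r, u^3 y)$ sending $2$-torsion to $2$-torsion. Since each curve has a unique rational $2$-torsion point, located at $x = 0$, we get $r = 0$. Comparing the coefficients of $x$ then forces $u^4 = 1$, so $u^2 = \pm 1$. The case $u^2 = -1$ is impossible over $\Fp$ for $p \equiv 3 \pmod 4$. Hence $u^2 = 1$ and $A' = A$. (Without the floor condition, $x \mapsto -x$ relates $E_A$ to $E_{-A}$ as the twist, not as an $\Fp$-isomorphism, which is consistent.)

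Combining existence and uniqueness gives the single field element description.
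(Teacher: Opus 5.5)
The paper does not prove this itself; it only cites \cite{CSIDH2018}. Your reconstruction has the right skeleton, and most of it is sound. On the floor there is a unique rational $2$-torsion point $T$. Since $p+1\equiv 4\pmod 8$, the $2$-part of $E(\Fp)$ is cyclic of order $4$, so $T=2P$ with $P\in E(\Fp)$. After moving $T$ to $(0,0)$, the tangent line $y=\lambda x$ at $P$ passes through $(0,0)$, and $x^3+ax^2+bx-\lambda^2x^2 = x\bigl(x-x(P)\bigr)^2$ gives $x(P)^2=b$ directly. This is the clean form of your descent remark, and it shows that $b$ is a square. Your uniqueness argument is correct: $r=0$ because $(0,0)$ is the only rational $2$-torsion point on either curve, then $u^4=1$, and $u^2=-1$ is impossible over $\Fp$.

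The gap is the step you call the main obstacle, $B=1$, and the route you propose for it does not work.

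\emph{$B$ is not an invariant of the curve.} Scaling $x=sX$ with $s^2=b$ gives $y^2=s^3\bigl(X^3+(a/s)X^2+X\bigr)$, so $B$ lies in the square class of $s$. Replacing $s$ by $-s$ multiplies $B$ by $-1$. Your claim that exactly one of $E$ and its twist has $B$ a square is therefore true only for a fixed choice of $s$, and the detour through the twist is unnecessary.

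\emph{A rational point of order $4$ above $(0,0)$ does not characterise $B=1$.} On $By^2=x^3+Ax^2+x$ the points above $(0,0)$ have $x=\pm 1$ and $By^2=A\pm 2$. On the floor $(A+2)(A-2)=A^2-4$ is a nonsquare, so exactly one of $(A+2)/B$ and $(A-2)/B$ is a square, whatever $B$ is. So $P$ exists in both square classes of $B$. Normalising to $x(P)=1$ only puts $B$ in the square class of $x(P)$, and nothing forces that to be a square.

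\emph{The fix.} Apply the observation you already made, that $-1$ is a nonsquare modulo $p$, to the choice of square root rather than to the twist. Exactly one of $x(P)$ and $-x(P)$ is a square; call it $d$. Then $d^2=b$, and $x=dX$, $y=d\sqrt{d}\,Y$ is an $\Fp$-isomorphism onto $Y^2=X^3+(a/d)X^2+X$. Equivalently, if $B$ is a nonsquare, then $By^2=x^3+Ax^2+x$ is $\Fp$-isomorphic to $\Mont{-A}$ via $x\mapsto -x$. With this in place of your final existence paragraph, the argument is complete.
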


\begin{proof}
This is proved in~\cite{CSIDH2018}.
\end{proof}

\subsection{Effective evaluation of the action}
\label{sec:effective-action}

\Cref{thm:class-action} is not by itself an algorithm. To evaluate $[\ida] \act E$ one needs a
representation of $[\ida]$ for which the corresponding isogeny is computable. The standard
device, introduced in this setting by Couveignes~\cite{Couveignes2006} and Rostovtsev and
Stolbunov~\cite{RostovtsevStolbunov2006} for ordinary curves and by Castryck, Lange,
Martindale, Panny and Renes~\cite{CSIDH2018} for supersingular curves over $\Fp$, is to use
only ideals of small prime norm.

Suppose $\ell$ is an odd prime with $\ell \mid p+1$ and $\ell \nmid \disc(\Ord)$. Then
$\ell \Ord = \idl \bar{\idl}$ splits, with
\[
  \idl = (\ell, \frob - 1), \qquad \bar{\idl} = (\ell, \frob + 1) .
\]
The kernel $E[\idl] = \Etors{\ell} \cap \ker(\frob - 1)$ is exactly the subgroup of
$\Fp$-rational points of order $\ell$, so it is generated by a point that can be found by
sampling a random $P \in E(\Fp)$ and multiplying by the cofactor $(p+1)/\ell$. The isogeny is
then evaluated with V\'elu's formulae at cost $\Theta(\ell)$, which is acceptable because
$\ell$ is small. The conjugate ideal $\bar{\idl}$ is handled the same way using the quadratic
twist, whose $\Fp$-rational points are the points of $E$ with $x$-coordinate in $\Fp$ and
$y$-coordinate in $\Fpsq \setminus \Fp$.

Writing $\idl_1, \dots, \idl_n$ for the ideals attached to the primes $\ell_1, \dots, \ell_n$
dividing $p+1$, a private key is an exponent vector $\bm{e} = (e_1, \dots, e_n) \in \ZZ^n$
representing the class of $\ida = \prod_i \idl_i^{e_i}$, and evaluating the action costs
$\sum_i |e_i|$ small degree isogeny steps. This is a \emph{restricted} effective group action
in the terminology of Alamati, De Feo, Montgomery and Patranabis~\cite{AlamatiEtAl2020}: the
action of the generators is efficiently computable, but the group structure of $\clgp$ is not
known a priori, so one cannot sample uniformly from $\clgp$ or compute with arbitrary group
elements. Beullens, Kleinjung and Vercauteren~\cite{CSIFiSh2019} computed the class group and
a short relation lattice for the CSIDH-512 parameter set, which lifts the restriction for
that one parameter set and gives a genuine effective group action there.

\begin{definition}[Effective group action]
\label{def:ega}
An \emph{effective group action} is a triple $(\Gact, \Xset, \act)$ where $\Gact$ is a finite
abelian group, $\Xset$ a finite set, and $\act : \Gact \times \Xset \to \Xset$ a free and
transitive action, together with efficient algorithms for the group operations and equality
test in $\Gact$, for sampling from a distribution on $\Gact$ statistically close to uniform,
for a unique representation of elements of $\Xset$, and for computing $g \act x$ given
$g \in \Gact$ and $x \in \Xset$. A distinguished element $x_0 \in \Xset$ is called the origin.
\end{definition}

We will state the protocol and its security in the language of
\Cref{def:ega} so that the argument does not depend on the particular instantiation, and then
instantiate with $\Gact = \clgp$, $\Xset = \Ellset$ and $x_0 = \Ezero$ in
\Cref{sec:parameters}.

\subsection{Hard problems}
\label{sec:hard-problems}

Let $(\Gact, \Xset, \act)$ be an effective group action with origin $x_0$.

\begin{problem}[Vectorisation, \ensuremath{\mathrm{GA-VP}}]
\label{prob:vectorisation}
Given $x, y \in \Xset$, find $g \in \Gact$ with $g \act x = y$.
\end{problem}

\begin{problem}[Parallelisation, \ensuremath{\mathrm{GA-CDH}}]
\label{prob:cdh}
Given $x_0$, $g \act x_0$ and $h \act x_0$ for $g, h \getsr \Gact$, compute
$(gh) \act x_0$.
\end{problem}

\begin{problem}[Decisional parallelisation, \ensuremath{\mathrm{GA-DDH}}]
\label{prob:ddh}
Distinguish the distributions $(g \act x_0, h \act x_0, (gh) \act x_0)$ and
$(g \act x_0, h \act x_0, r \act x_0)$ for $g,h,r \getsr \Gact$.
\end{problem}

Vectorisation is the group action analogue of the discrete logarithm problem and
parallelisation the analogue of computational Diffie--Hellman. Note that neither problem is a
discrete logarithm problem: there is no exponentiation map and no pairing, and in particular
the group $\Gact$ is not given to the adversary as a set of exponents acting on a cyclic
group. We return to the consequences of this in \Cref{sec:pohlig}.

The decisional problem requires care, and we do not assume it.

\begin{remark}[\ensuremath{\mathrm{GA-DDH}} is false for the instantiation we use]
\label{rem:ddh-broken}
Castryck, Sot\'akov\'a and Vercauteren~\cite{CSV2020} showed that for class group actions
whose discriminant is not prime, genus theory supplies efficiently computable quadratic
characters on $\clgp$ that are also computable from the acted-upon curve, and that these
characters distinguish the two distributions of \Cref{prob:ddh}. The CSIDH instantiation has
$\Ord = \ZZ[\sqrt{-p}]$ of discriminant $-4p$, which is not prime, so \ensuremath{\mathrm{GA-DDH}} is
broken there. Any security argument for a protocol in this family must therefore avoid
\ensuremath{\mathrm{GA-DDH}}. We do so by working with hashed values in the random oracle model and
reducing to the following computational problem instead.
\end{remark}

\begin{problem}[Strong parallelisation, \ensuremath{\mathrm{GA-StCDH}}]
\label{prob:stcdh}
Given $x_0$, $g \act x_0$ and $h \act x_0$ for $g,h \getsr \Gact$, compute $(gh) \act x_0$,
with the help of an oracle $\mathcal{D}_{g}(\cdot,\cdot)$ that on input $(u, w) \in \Xset^2$
returns $1$ if and only if $w = g \act u$.
\end{problem}

The oracle in \Cref{prob:stcdh} fixes its first argument to the challenge element $g$, which
is exactly the form needed to make random oracle proofs of Diffie--Hellman style key
agreement go through; it plays the same role as the oracle in the oracle Diffie--Hellman
assumption of Abdalla, Bellare and Rogaway~\cite{ABR2001}. Strong variants of group
action problems in this shape appear in the isogeny based key exchange
literature~\cite{deKock2021,Kawashima2021}.

\begin{assumption}[\ensuremath{\mathrm{GA-StCDH}}]
\label{ass:stcdh}
For the parameter sets of \Cref{sec:parameters}, no probabilistic polynomial time algorithm
solves \Cref{prob:stcdh} with non-negligible probability. We write
$\advantage{StCDH}{\Gact,\Xset}(\Bdv)$ for the success probability of an algorithm $\Bdv$
against \Cref{prob:stcdh}.
\end{assumption}

\Cref{ass:stcdh} is the only computational assumption about the group action that our results
use. \Cref{sec:attacks} surveys what is known about its difficulty, and
\Cref{sec:security-levels} discusses the substantial disagreement in the literature about the
concrete quantum security of the smallest parameter set.

\subsection{Authenticated encryption}
\label{sec:aead}

The masking layer of our protocol is an authenticated encryption scheme with associated data,
which we treat as a black box. Such a scheme
$\Pi = (\AEnc, \ADec)$ takes a key $K \in \bits^{\secpar}$, a nonce $r$, associated data
$\mathit{ad}$ and a message $M$, and produces a ciphertext $C = \AEnc_K(r, \mathit{ad}, M)$
that $\ADec_K(r, \mathit{ad}, C)$ either maps back to $M$ or rejects. We use the standard
notions of indistinguishability under chosen plaintext attack and integrity of ciphertexts,
written $\advantage{ind\text{-}cpa}{\Pi}(\Adv)$ and $\advantage{int\text{-}ctxt}{\Pi}(\Adv)$;
together they imply security under chosen ciphertext attack by the result of Bellare and
Namprempre~\cite{BellareNamprempre2008}, and the associated data formalism is that of
Rogaway~\cite{Rogaway2002}.
\section{The \MCSI{} protocol}
\label{sec:protocol}

\subsection{Design rationale}
\label{sec:rationale}

Before specifying the protocol we explain the three design decisions that shape it, because
each of them rules out a construction that looks natural at first sight.

\paragraph{The secret must act through a group action, not through an iterated walk.}
A tempting way to build a Diffie--Hellman analogue from isogenies is to define one ``step''
as an isogeny whose kernel is generated by the current base point, to push the base point
through that isogeny, and to iterate the step $k$ times, taking $k$ as the private key. This
does not work, for two independent reasons. First, the base point generates the kernel, so its
image under the step isogeny is the point at infinity and the iteration is undefined; see
\Cref{rem:velu-domain}. Second, even after repairing this by pushing some other point, the
resulting map depends on the whole trajectory rather than on an abstract group element, and
the two orders of composition do not agree: if $\Phi_k$ denotes the $k$-fold iteration
starting from $E$, then in general
$\Phi_{k}^{(\Phi_{j}(E))} \circ \Phi_{j}^{(E)} \neq \Phi_{j}^{(\Phi_{k}(E))} \circ \Phi_{k}^{(E)}$,
because the intermediate base points differ on the two sides. Commutativity is not an
incidental property that can be hoped for; it is exactly the statement of
\Cref{eq:commutativity}, and it holds because $\clgp$ is an abelian group acting on a torsor,
not because isogenies are in any sense commutative. We therefore build the protocol on the
class group action of \Cref{thm:class-action} and on nothing else.

\paragraph{A blinding key cannot be derived from the value the blinded message is meant to establish.}
Suppose the first protocol message carries a blinded payload, and suppose the blinding key is
derived from the session secret. The receiver can derive that key only after it has obtained
the session secret, and it obtains the session secret only by processing the message. The
requirement is therefore circular and no ordering of the steps resolves it. The blinding key
must be computable by both parties from material that exists before the session starts, and
there are only three kinds of such material: public data, which provides no confidentiality;
a pre-shared symmetric secret, which defeats the purpose of a public key protocol; or a
secret determined by long term asymmetric keys. We take the third option. Each party holds a
static key pair, the two static keys determine a shared value
$Z_{\mathrm{ss}} = (ab) \act x_0$ by \Cref{eq:commutativity}, and this value, which can be
computed once per peer and cached, keys the blinding layer.

This choice has a consequence that must be stated plainly rather than glossed over: it turns
\MCSI{} into an implicitly authenticated key exchange with static keys, and it presupposes an
authentic distribution of static public keys. It also means that the blinding is not forward
secret, since an adversary who later learns $a$ or $b$ can recompute
$Z_{\mathrm{ss}}$ and remove the blinding from recorded transcripts. The session key does
remain secure in that situation, because it additionally depends on the ephemeral secrets;
see \Cref{sec:limits}. The pattern is the group action analogue of the static and ephemeral
mixing used in Diffie--Hellman based authenticated key exchange, and the accounting of which
secrets protect which property is the same.

\paragraph{A public bijection is not a security layer, and a keystream without a tag is not integrity.}
It is sometimes proposed to strengthen an exclusive-or blinding step by pushing every byte of
the result through a fixed, publicly known substitution box. This adds nothing at all, and
the statement can be made exact.

\begin{lemma}[Public bijections do not change advantage]
\label{lem:sbox}
Let $\Pi$ be an encryption scheme with ciphertext space $\bits^{8n}$ and let
$S : \bits^{8} \to \bits^{8}$ be a bijection whose description is public. Let $\Pi^S$ be the
scheme obtained from $\Pi$ by applying $S$ to each byte of every ciphertext, and applying
$S^{-1}$ before every decryption. Then for every notion of security defined by a game in
which the adversary's interface consists of encryption and decryption oracles, and for every
adversary $\Adv$ against $\Pi^S$, there is an adversary $\Bdv$ against $\Pi$ with
$\advantage{}{\Pi}(\Bdv) = \advantage{}{\Pi^S}(\Adv)$ and whose running time exceeds that of
$\Adv$ by at most $n$ table lookups per oracle call.
\end{lemma}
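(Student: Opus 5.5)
We prove the lemma by a direct simulation argument in which $\Bdv$ runs $\Adv$ and translates each oracle interaction through the public map. Write $\hat S : \bits^{8n} \to \bits^{8n}$ for the bytewise extension of $S$, which applies $S$ to each of the $n$ bytes. It is a bijection with inverse $\hat S^{-1}$, the bytewise extension of $S^{-1}$, and both are computable with $n$ table lookups from the public description of $S$. By construction $\AEnc^{S}_K(r,\mathit{ad},M) = \hat S(\AEnc_K(r,\mathit{ad},M))$ and $\ADec^{S}_K(r,\mathit{ad},C) = \ADec_K(r,\mathit{ad},\hat S^{-1}(C))$.

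The adversary $\Bdv$, playing the game for $\Pi$, runs $\Adv$ and answers its queries as follows.
\begin{itemize}
  \item When $\Adv$ makes an encryption query (including a challenge query), $\Bdv$ forwards it unchanged to its own oracle, receives $C$, and returns $\hat S(C)$ to $\Adv$.
  \item When $\Adv$ makes a decryption query on $C'$, or submits a forgery or output ciphertext $C'$, $\Bdv$ forwards $\hat S^{-1}(C')$ and relays the answer.
  \item When $\Adv$ outputs a guess bit, $\Bdv$ outputs the same bit.
\end{itemize}
Since $\hat S$ is a bijection, the view of $\Adv$ inside $\Bdv$ is identically distributed to its view in the game against $\Pi^S$, for the same key, the same hidden bit and the same oracle coins. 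Hence the winning events coincide and the advantages are equal exactly, not merely up to a negligible term.

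The main obstacle is making ``every notion defined by such a game'' precise enough that the winning condition transfers. Games such as INT-CTXT or CCA impose freshness or bookkeeping conditions on ciphertexts, for example that a forgery must not have been returned by the encryption oracle, or that a decryption query must differ from the challenge ciphertext. I would handle this by noting that $\hat S$ is a bijection on ciphertexts, so equality is preserved: $C'$ equals a previously returned $\hat S(C)$ if and only if $\hat S^{-1}(C')$ equals $C$. The game's lists for $\Pi$ are therefore exactly the images under $\hat S^{-1}$ of the lists for $\Pi^S$, and each predicate in the winning condition holds in one game if and only if it holds in the other.

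To keep the argument honest I would state the formal hypothesis as follows: the winning condition is invariant under relabelling ciphertexts by a public bijection. Every standard notion has this property, since these notions refer to ciphertexts only through oracle outputs and equality tests.

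For the running time, each oracle call costs $\Bdv$ one application of $\hat S$ or $\hat S^{-1}$, which is $n$ table lookups, on top of the running time of $\Adv$.

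The converse direction holds symmetrically, with $S^{-1}$ in place of $S$. This explains the phrase ``leaves every security notion exactly where it was'' in the paper.
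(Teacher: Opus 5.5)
Your proposal is correct and uses the same simulation as the paper. $\Bdv$ relays $\Adv$'s queries, applying $S$ bytewise to ciphertexts going to $\Adv$ and $S^{-1}$ bytewise to ciphertexts coming from it, outputs whatever $\Adv$ outputs, and the converse follows by symmetry. Your extra steps make precise what the paper's one-line ``the view is distributed exactly as in the real game'' leaves implicit, and what the lemma's ``every notion'' tacitly requires. These are the check that freshness and bookkeeping conditions in games such as INT-CTXT or CCA transfer because $\hat S$ preserves equality of ciphertexts, and the explicit hypothesis that the winning condition be invariant under a public relabelling of ciphertexts.
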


\begin{proof}
$\Bdv$ runs $\Adv$ and relays its oracle queries, applying $S$ bytewise to every ciphertext
it passes to $\Adv$ and $S^{-1}$ bytewise to every ciphertext it receives from $\Adv$. Since
$S$ is a bijection with a public description, the view of $\Adv$ inside this simulation is
distributed exactly as in the real game against $\Pi^S$, and $\Bdv$ outputs whatever $\Adv$
outputs. The same construction in the other direction gives the converse.
\end{proof}

We therefore include no such layer. For the same reason we do not use a bare keystream: an
exclusive-or of the payload with a pseudorandom string is malleable, and an active adversary
who flips bits of the ciphertext flips the corresponding bits of the payload. Since our
payload is an element of $\Xset$, and since a modified element of $\Xset$ is exactly what
adaptive attacks against static key isogeny protocols
require~\cite{GPST2016}, malleability here is not a theoretical concern. The masking layer is
consequently a full authenticated encryption scheme, and the receiver additionally validates
the recovered element of $\Xset$ before acting on it.

\subsection{Public parameters and building blocks}
\label{sec:params-syntax}

\MCSI{} is parameterised by an effective group action $(\Gact, \Xset, \act)$ in the sense of
\Cref{def:ega} with origin $x_0$, a hash function
$\Hash : \bits^{*} \to \bits^{2\secpar}$ modelled as a random oracle, an authenticated
encryption scheme $\Pi = (\AEnc, \ADec)$ with $\secpar$-bit keys, and an encoding
$\encode : \Xset \to \bits^{8L}$ with inverse $\decode$ defined on the image. The parameters
also fix a protocol label $\mathrm{lbl} = \text{\texttt{"MCSI-v1"}}$ and a nonce length
$\nu$. We write $\Validate(\cdot)$ for the algorithm that decides membership in $\Xset$; its
instantiation is given in \Cref{sec:validation}.

The concrete instantiation used in this paper is
$\Gact = \clgp$, $\Xset = \Ellset$ and $x_0 = \Ezero : y^2 = x^3 + x$, with the parameter sets
of \Cref{sec:parameters}. There $L = \lceil \log_2 p / 8 \rceil$ and $\encode$ is the
big-endian encoding of the Montgomery coefficient supplied by
\Cref{prop:montgomery-normal-form}.

\subsection{Static keys}
\label{sec:static-keys}

Static key generation, given as \Cref{alg:keygen}, is the group action key generation of
Couveignes~\cite{Couveignes2006}, Rostovtsev and Stolbunov~\cite{RostovtsevStolbunov2006} and
Castryck et al.~\cite{CSIDH2018}.

\begin{algorithm}[t]
\caption{$\MCSI.\KeyGen$}
\label{alg:keygen}
\begin{algorithmic}[1]
  \State $a \getsr \Gact$ \Comment{for a restricted action, sample an exponent vector as in \Cref{sec:effective-action}}
  \State $\mathit{pk} \gets a \act x_0$
  \State \Return $(a, \mathit{pk})$
\end{algorithmic}
\end{algorithm}

Static public keys are distributed authentically, for example through a certificate or an
out-of-band channel. On receipt of a static public key a party runs $\Validate$ on it once
and rejects it if it fails. This is not optional: an adaptive attacker who is able to submit
malformed public keys to a party that reuses a static secret can recover that secret one bit
at a time, which is the group action analogue of the attack of Galbraith, Petit, Shani and
Ti~\cite{GPST2016} and is discussed for CSIDH in~\cite{CSIDH2018}.

\subsection{The masking layer}
\label{sec:masking}

Both parties derive two directional masking keys from the static-static shared value together
with the two static public keys. Since either party may initiate, the derivation must not
depend on who does, so we order the two public keys canonically. Let $P_1, P_2$ be the two
static public keys of the pair, labelled so that $\encode(P_1) \leq \encode(P_2)$ in
lexicographic order on byte strings, and put
\begin{equation}
  Z_{\mathrm{ss}} \gets a \act \mathit{pk}_B = b \act \mathit{pk}_A , \qquad
  \bigl(K_{1 \to 2},\, K_{2 \to 1}\bigr) \gets
  \Hash\bigl( \mathrm{lbl} \cat \texttt{"mask"} \cat \encode(Z_{\mathrm{ss}})
              \cat \encode(P_1) \cat \encode(P_2) \bigr) .
  \label{eq:mask-keys}
\end{equation}
Each party uses the half that matches the direction it is sending in. We write $K_{A \to B}$
for whichever of $K_{1 \to 2}, K_{2 \to 1}$ carries messages from Alice to Bob, and
$K_{B \to A}$ for the other. The two directions therefore never share a key, so nonce
collisions between directions are impossible and a message cannot be reflected back at its
sender. The value $Z_{\mathrm{ss}}$ and the pair of keys depend only on the two static key
pairs and not on any session or on any choice of roles, so they are computed once per peer and
cached; no per-session evaluation of the group action is spent on them.

Masking and unmasking are then
\begin{equation}
  \Mask_K(r, \mathit{ad}, x) = \AEnc_K\bigl(r, \mathit{ad}, \encode(x)\bigr), \qquad
  \Unmask_K(r, \mathit{ad}, c) = \decode\bigl(\ADec_K(r, \mathit{ad}, c)\bigr) ,
\end{equation}
where $\Unmask$ returns $\bot$ if $\ADec$ rejects or if the recovered string is not a valid
encoding.

We recommend instantiating $\Pi$ with a standard authenticated encryption scheme such as
AES-256-GCM or ChaCha20-Poly1305. For deployments that prefer to depend on a single hash
primitive, the following hash based instantiation is also adequate and is closer to the
counter mode construction from which this design grew. Let
$Z = Z_0 \cat Z_1 \cat \cdots$ with $Z_i = \Hash(K_e \cat r \cat \Itos(i,4))$ truncated to
$8L$ bits, set $c_0 = \encode(x) \oplus Z$, and set
$\tau = \Mac_{K_m}(r \cat \mathit{ad} \cat c_0)$ with $\Mac$ instantiated by HMAC. The
ciphertext is $c = (c_0, \tau)$ and the keys $K_e, K_m$ are the two halves of a hash of $K$.
This is encrypt-then-MAC, so its authenticated encryption security follows from the
pseudorandomness of the keystream and the unforgeability of the MAC by the composition
theorem of Bellare and Namprempre~\cite{BellareNamprempre2008}.

\begin{remark}[The payload is redundant, and this is harmless]
\label{rem:redundancy}
For the instantiation of \Cref{sec:parameters} the payload is an element of $\Xset$ with
$\#\Xset \approx \sqrt{p}$, encoded into $L = \lceil \log_2 p / 8 \rceil$ bytes. The encoding
is therefore highly redundant: only about a $p^{-1/2}$ fraction of byte strings of that length
decode to valid elements, and $\Validate$ detects the rest. A consequence is that a guessed
masking key can be tested offline by checking whether the recovered string is a valid element
of $\Xset$. This does not weaken the scheme, since the key space has $2^{\secpar}$ elements
and the test only replaces one constant factor by another, but it does mean that the
ciphertext is not indistinguishable from random to a party that knows the key, and any claim
of key privacy would have to be argued separately. We make no such claim.
\end{remark}

\subsection{The protocol}
\label{sec:the-protocol}

\Cref{fig:protocol} gives the protocol in full. Alice and Bob hold static key pairs
$(a, \mathit{pk}_A)$ and $(b, \mathit{pk}_B)$ and have already validated each other's static
public keys and cached the masking keys of \Cref{eq:mask-keys}. The session consists of one
round trip.

\begin{figure}[t]
\centering
\begin{tikzpicture}[
  font=\footnotesize,
  msg/.style={-{Latex[length=2.2mm]}, thick},
]
  \def\lx{0}
  \def\rx{15.4}

  \node[anchor=west] at (\lx, 0.6) {\normalsize\textbf{Alice} \ $(a,\ \mathit{pk}_A = a \act x_0)$};
  \node[anchor=east] at (\rx, 0.6) {\normalsize\textbf{Bob} \ $(b,\ \mathit{pk}_B = b \act x_0)$};
  \draw[thick] (\lx,0.25) -- (\rx,0.25);

  \draw[gray!60] (\lx,0.15) -- (\lx,-9.55);
  \draw[gray!60] (\rx,0.15) -- (\rx,-9.55);

  \node[anchor=west] at (\lx+0.15,-0.35) {$Z_{\mathrm{ss}} \gets a \act \mathit{pk}_B$};
  \node[anchor=east] at (\rx-0.15,-0.35) {$Z_{\mathrm{ss}} \gets b \act \mathit{pk}_A$};
  \node[anchor=west] at (\lx+0.15,-0.95)
    {$(K_{1 \to 2},\, K_{2 \to 1}) \gets \Hash\bigl(\mathrm{lbl} \cat \texttt{"mask"} \cat Z_{\mathrm{ss}} \cat P_1 \cat P_2\bigr)$, \ with $\encode(P_1) \leq \encode(P_2)$};
  \node[anchor=east, gray] at (\rx-0.15,-0.95) {\itshape computed once per peer};
  \draw[gray!60, dashed] (\lx,-1.35) -- (\rx,-1.35);

  \node[anchor=west] at (\lx+0.15,-1.85) {$u \getsr \Gact$, \quad $T_A \gets u \act x_0$, \quad $r_A \getsr \bits^{\nu}$};
  \node[anchor=west] at (\lx+0.15,-2.4) {$\mathit{ad}_1 \gets \mathrm{lbl} \cat \mathit{pk}_A \cat \mathit{pk}_B \cat r_A$};
  \node[anchor=west] at (\lx+0.15,-2.95) {$c_A \gets \Mask_{K_{A \to B}}(r_A, \mathit{ad}_1, T_A)$};

  \draw[msg] (\lx,-3.55) -- node[above,font=\footnotesize] {$m_1 = (r_A,\ c_A)$} (\rx,-3.55);

  \node[anchor=east] at (\rx-0.15,-4.15) {$T_A \gets \Unmask_{K_{A \to B}}(r_A, \mathit{ad}_1, c_A)$; \ abort if $T_A = \bot$ or $\Validate(T_A) = 0$};
  \node[anchor=east] at (\rx-0.15,-4.7) {$w \getsr \Gact$, \quad $T_B \gets w \act x_0$, \quad $r_B \getsr \bits^{\nu}$};
  \node[anchor=east] at (\rx-0.15,-5.25) {$\mathit{ad}_2 \gets \mathrm{lbl} \cat \mathit{pk}_B \cat \mathit{pk}_A \cat r_B \cat m_1$};
  \node[anchor=east] at (\rx-0.15,-5.8) {$c_B \gets \Mask_{K_{B \to A}}(r_B, \mathit{ad}_2, T_B)$};

  \draw[msg] (\rx,-6.4) -- node[above,font=\footnotesize] {$m_2 = (r_B,\ c_B)$} (\lx,-6.4);

  \node[anchor=west] at (\lx+0.15,-7.0) {$T_B \gets \Unmask_{K_{B \to A}}(r_B, \mathit{ad}_2, c_B)$; \ abort if $T_B = \bot$ or $\Validate(T_B) = 0$};

  \draw[gray!60, dashed] (\lx,-7.4) -- (\rx,-7.4);

  \node[anchor=west] at (\lx+0.15,-7.9)
    {Alice: \ $Z_{\mathrm{ee}} \gets u \act T_B$, \qquad $Z_{\mathrm{es}} \gets u \act \mathit{pk}_B$, \qquad $Z_{\mathrm{se}} \gets a \act T_B$};
  \node[anchor=west] at (\lx+0.15,-8.45)
    {Bob: \ \ \ \ $Z_{\mathrm{ee}} \gets w \act T_A$, \qquad $Z_{\mathrm{es}} \gets b \act T_A$, \qquad $Z_{\mathrm{se}} \gets w \act \mathit{pk}_A$};
  \node[anchor=west] at (\lx+0.15,-9.0)
    {$\tr \gets \mathrm{lbl} \cat \mathit{pk}_A \cat \mathit{pk}_B \cat m_1 \cat m_2$, \qquad
     $\mathit{sk} \gets \Hash\bigl(\mathrm{lbl} \cat \texttt{"key"} \cat Z_{\mathrm{ee}} \cat Z_{\mathrm{es}} \cat Z_{\mathrm{se}} \cat Z_{\mathrm{ss}} \cat \tr \bigr)$};
\end{tikzpicture}
\caption{The \MCSI{} protocol. Elements of $\Xset$ are encoded with $\encode$ wherever they
appear as hash or associated-data inputs; the encoding is omitted from the figure for
readability. The two steps above the first dashed rule depend only on the static keys and are
computed once per peer.}
\label{fig:protocol}
\end{figure}

The three shared values other than $Z_{\mathrm{ss}}$ are
\begin{equation}
  Z_{\mathrm{ee}} = (uw) \act x_0, \qquad
  Z_{\mathrm{es}} = (ub) \act x_0, \qquad
  Z_{\mathrm{se}} = (aw) \act x_0 ,
  \label{eq:three-dh}
\end{equation}
and each is computed by the two parties along different routes;
\Cref{sec:correctness} proves that the routes agree. Their roles are distinct.
$Z_{\mathrm{ee}}$ depends only on ephemeral secrets and supplies forward secrecy for the
session key. $Z_{\mathrm{es}}$ and $Z_{\mathrm{se}}$ each mix one static secret with one
ephemeral secret and supply implicit authentication in one direction.
$Z_{\mathrm{ss}}$ depends only on static secrets; it is already needed for the masking layer,
so including it in the session key derivation costs nothing.

\Cref{alg:peerctx} and \Cref{alg:session} restate the same protocol as pseudocode, in the form
an implementation would take. \textsc{PeerCtx} is run once for each peer and its output is
cached; \textsc{Init}, \textsc{Resp} and \textsc{Fin} are the three per-session steps.

\begin{algorithm}[t]
\caption{Per-peer setup, run once and cached}
\label{alg:peerctx}
\begin{algorithmic}[1]
\Procedure{PeerCtx}{$\mathit{sk}_{\mathrm{self}},\ \mathit{pk}_{\mathrm{self}},\ \mathit{pk}_{\mathrm{peer}}$}
  \State \textbf{return} $\bot$ \textbf{if} $\Validate(\mathit{pk}_{\mathrm{peer}}) = 0$
  \State $Z_{\mathrm{ss}} \gets \mathit{sk}_{\mathrm{self}} \act \mathit{pk}_{\mathrm{peer}}$
  \State let $(P_1, P_2)$ be $\{\mathit{pk}_{\mathrm{self}}, \mathit{pk}_{\mathrm{peer}}\}$ ordered so that $\encode(P_1) \leq \encode(P_2)$
  \State $(K_{1 \to 2}, K_{2 \to 1}) \gets \Hash\bigl(\mathrm{lbl} \cat \texttt{"mask"} \cat \encode(Z_{\mathrm{ss}}) \cat \encode(P_1) \cat \encode(P_2)\bigr)$
  \State $K_{\mathrm{out}} \gets K_{1 \to 2}$ \textbf{if} $\mathit{pk}_{\mathrm{self}} = P_1$ \textbf{else} $K_{2 \to 1}$; \quad $K_{\mathrm{in}} \gets$ the other half
  \State \Return $\ctxt = (\mathit{sk}_{\mathrm{self}},\, Z_{\mathrm{ss}},\, K_{\mathrm{out}},\, K_{\mathrm{in}},\, \mathit{pk}_{\mathrm{self}},\, \mathit{pk}_{\mathrm{peer}})$
\EndProcedure
\end{algorithmic}
\end{algorithm}

\begin{algorithm}[t]
\caption{The per-session steps. Every procedure reads $\mathit{sk}_{\mathrm{self}}$,
$Z_{\mathrm{ss}}$, $K_{\mathrm{out}}$, $K_{\mathrm{in}}$, $\mathit{pk}_{\mathrm{self}}$ and
$\mathit{pk}_{\mathrm{peer}}$ from the cached $\ctxt$ of \Cref{alg:peerctx}. Here
$\mathit{pk}_A$ is the initiator's static public key and $\mathit{pk}_B$ the responder's, so
the initiator has $\mathit{pk}_A = \mathit{pk}_{\mathrm{self}}$ and the responder has
$\mathit{pk}_A = \mathit{pk}_{\mathrm{peer}}$.}
\label{alg:session}
\begin{algorithmic}[1]
\Procedure{Init}{$\ctxt$}
  \State $u \getsr \Gact$; \quad $T_A \gets u \act x_0$; \quad $r_A \getsr \bits^{\nu}$
  \State $\mathit{ad}_1 \gets \mathrm{lbl} \cat \encode(\mathit{pk}_A) \cat \encode(\mathit{pk}_B) \cat r_A$
  \State $c_A \gets \AEnc_{K_{\mathrm{out}}}\bigl(r_A,\, \mathit{ad}_1,\, \encode(T_A)\bigr)$; \quad $m_1 \gets (r_A, c_A)$
  \State \Return $\bigl(\mathit{st} = (u, m_1),\ m_1\bigr)$
\EndProcedure
\Statex
\Procedure{Resp}{$\ctxt, m_1$}
  \State parse $m_1$ as $(r_A, c_A)$, returning $\bot$ on failure
  \State $\mathit{ad}_1 \gets \mathrm{lbl} \cat \encode(\mathit{pk}_A) \cat \encode(\mathit{pk}_B) \cat r_A$
  \State $M \gets \ADec_{K_{\mathrm{in}}}(r_A, \mathit{ad}_1, c_A)$; \quad \textbf{return} $\bot$ \textbf{if} $M = \bot$
  \State $T_A \gets \decode(M)$; \quad \textbf{return} $\bot$ \textbf{if} $T_A = \bot$ \textbf{or} $\Validate(T_A) = 0$
  \State $w \getsr \Gact$; \quad $T_B \gets w \act x_0$; \quad $r_B \getsr \bits^{\nu}$
  \State $\mathit{ad}_2 \gets \mathrm{lbl} \cat \encode(\mathit{pk}_B) \cat \encode(\mathit{pk}_A) \cat r_B \cat m_1$
  \State $c_B \gets \AEnc_{K_{\mathrm{out}}}\bigl(r_B,\, \mathit{ad}_2,\, \encode(T_B)\bigr)$; \quad $m_2 \gets (r_B, c_B)$
  \State $Z_{\mathrm{ee}} \gets w \act T_A$; \quad $Z_{\mathrm{es}} \gets \mathit{sk}_{\mathrm{self}} \act T_A$; \quad $Z_{\mathrm{se}} \gets w \act \mathit{pk}_{\mathrm{peer}}$
  \State \Return $\bigl(\textsc{Derive}(\ctxt, Z_{\mathrm{ee}}, Z_{\mathrm{es}}, Z_{\mathrm{se}}, m_1, m_2),\ m_2\bigr)$
\EndProcedure
\Statex
\Procedure{Fin}{$\ctxt, \mathit{st}, m_2$}
  \State parse $\mathit{st}$ as $(u, m_1)$ and $m_2$ as $(r_B, c_B)$, returning $\bot$ on failure
  \State $\mathit{ad}_2 \gets \mathrm{lbl} \cat \encode(\mathit{pk}_B) \cat \encode(\mathit{pk}_A) \cat r_B \cat m_1$
  \State $M \gets \ADec_{K_{\mathrm{in}}}(r_B, \mathit{ad}_2, c_B)$; \quad \textbf{return} $\bot$ \textbf{if} $M = \bot$
  \State $T_B \gets \decode(M)$; \quad \textbf{return} $\bot$ \textbf{if} $T_B = \bot$ \textbf{or} $\Validate(T_B) = 0$
  \State $Z_{\mathrm{ee}} \gets u \act T_B$; \quad $Z_{\mathrm{es}} \gets u \act \mathit{pk}_{\mathrm{peer}}$; \quad $Z_{\mathrm{se}} \gets \mathit{sk}_{\mathrm{self}} \act T_B$
  \State \Return $\textsc{Derive}(\ctxt, Z_{\mathrm{ee}}, Z_{\mathrm{es}}, Z_{\mathrm{se}}, m_1, m_2)$
\EndProcedure
\Statex
\Procedure{Derive}{$\ctxt, Z_{\mathrm{ee}}, Z_{\mathrm{es}}, Z_{\mathrm{se}}, m_1, m_2$}
  \State $\tr \gets \mathrm{lbl} \cat \encode(\mathit{pk}_A) \cat \encode(\mathit{pk}_B) \cat m_1 \cat m_2$
  \State \Return $\Hash\bigl(\mathrm{lbl} \cat \texttt{"key"} \cat \encode(Z_{\mathrm{ee}}) \cat \encode(Z_{\mathrm{es}}) \cat \encode(Z_{\mathrm{se}}) \cat \encode(Z_{\mathrm{ss}}) \cat \tr\bigr)$
\EndProcedure
\end{algorithmic}
\end{algorithm}

\begin{remark}[Key registration, unknown key share and reflection]
\label{rem:registration}
Three points about how identities enter the protocol. First, the two static public keys enter
both the masking key derivation of \Cref{eq:mask-keys} and the transcript hashed into the
session key, so a completed session binds the key to the identities of both parties and an
unknown key share attack would have to make one party accept a different identity for the
same session, which the transcript prevents. Second, registration should nevertheless require
proof of possession of the static secret. A party that registers a copy of somebody else's
public key cannot run the protocol under it, since it cannot compute $Z_{\mathrm{ss}}$, but
allowing such registrations invites confusion in the surrounding system and costs nothing to
forbid. Third, the two directions use different keys by construction, so a message cannot be
replayed back at its sender as a reply; the associated data of the two messages also differ
in the order of the two public keys, which blocks the same attack a second time.
\end{remark}

\begin{remark}[Identifying the peer, and replay]
\label{rem:peer-id}
Two engineering points follow from the fact that the first message is encrypted. First, the
recipient must know which cached masking key to try, and the message as specified carries no
identifier. A deployment must either attach a sender or key identifier in the clear, which
reduces the privacy benefit of the blinding to the ephemeral element alone, or have the
recipient try its cached keys in turn, which is only practical when the number of peers is
small. We regard the first option as the normal one and note the cost rather than hiding it.
Second, an adversary can replay a recorded $m_1$ to the responder, who will run one session in
response. This does not produce a repeated session key, because the key binds the full
transcript and the responder contributes a fresh ephemeral element and a fresh nonce, and it
does not let the adversary learn the key, but it does let a party that has recorded one valid
message make the responder do work. A responder that cares about this should cache recently
seen nonces and reject repeats.
\end{remark}

\subsection{Optional explicit key confirmation}
\label{sec:key-confirmation}

As specified, \MCSI{} provides implicit authentication only: a party that completes the
protocol knows that nobody other than the intended peer can compute the session key, but it
does not know that the peer actually completed the protocol. Where explicit confirmation is
wanted, derive an additional confirmation key
$K_c$ from the same hash call by taking a longer output, and append
$\tau_A = \Mac_{K_c}(\texttt{"A"} \cat \tr)$ to a third message and
$\tau_B = \Mac_{K_c}(\texttt{"B"} \cat \tr)$ to a fourth. This adds a second round trip and
does not change any of the statements proved in \Cref{sec:correctness} and
\Cref{sec:security}, all of which concern the two message protocol.

\subsection{Cost}
\label{sec:cost}

Each party evaluates the group action four times per session, namely once to produce its own
ephemeral element and three times to produce
$Z_{\mathrm{ee}}, Z_{\mathrm{es}}, Z_{\mathrm{se}}$, plus one further evaluation per peer for
$Z_{\mathrm{ss}}$ that is cached across all sessions with that peer. Unauthenticated
Diffie--Hellman over the same group action costs two evaluations per party, so \MCSI{} pays a
factor of two in group action evaluations for implicit mutual authentication and for blinding
of the ephemeral element. The symmetric cost is two hash calls and two authenticated
encryption operations per party, which is negligible beside a single evaluation of the action.

On the wire, each message carries a nonce, an $L$-byte ciphertext and an authentication tag.
With $\nu = 128$, a $128$-bit tag and the CSIDH-512 instantiation, where $L = 64$, each
message is $96$ bytes and the whole exchange is $192$ bytes. These figures follow from the
parameter choices rather than from measurement, and the implementation of \Cref{sec:impl}
produces exactly them. What the same implementation costs in time is reported in
\Cref{sec:measured-cost}.
\section{Correctness}
\label{sec:correctness}

Correctness of \MCSI{} is not an assumption. It follows from the commutativity of $\Gact$
together with the fact that elements of $\Xset$ have a unique representation, and we prove it
here. The proof is short, but it is worth writing out because it identifies exactly which
algebraic property the protocol depends on, and therefore what would have to be re-established
if the group action were replaced by something else.

\begin{theorem}[Correctness]
\label{thm:correctness}
Let $(\Gact, \Xset, \act)$ be an effective group action in the sense of \Cref{def:ega}, let
$\Pi$ be a perfectly correct authenticated encryption scheme, and let $\encode$ be injective
on $\Xset$. Suppose Alice and Bob execute \MCSI{} of \Cref{fig:protocol} with honestly
generated static keys, and suppose both messages are delivered unmodified. Then neither party
aborts, and both compute the same session key.
\end{theorem}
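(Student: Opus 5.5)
The proof follows the protocol step by step, and only the action axioms and commutativity are needed. Let $(a,\mathit{pk}_A=a\act x_0)$ and $(b,\mathit{pk}_B=b\act x_0)$ be the honest static pairs.

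\textbf{Step 1: the cached values agree.} Both parties compute the same $Z_{\mathrm{ss}}$. By the compatibility axiom and commutativity, \Cref{eq:commutativity},
\[
a\act\mathit{pk}_B = a\act(b\act x_0) = (ab)\act x_0 = (ba)\act x_0 = b\act(a\act x_0) = b\act\mathit{pk}_A .
\]
The action algorithm outputs a unique representation of each element of $\Xset$, and $\encode$ is a function, so both parties feed the same string $\encode(Z_{\mathrm{ss}})$ to $\Hash$. The pair $\{P_1,P_2\}$ is ordered by the lexicographic order on $\encode$, which depends only on the unordered set of public keys. Hence both parties obtain the same pair $(K_{1\to2},K_{2\to1})$. \textsc{PeerCtx} gives the key of the $P_1$ party's sending direction to that party as $K_{\mathrm{out}}$ and to the other party as $K_{\mathrm{in}}$, and symmetrically for the other direction. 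So Alice's $K_{\mathrm{out}}$ is Bob's $K_{\mathrm{in}}$, and vice versa.

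One edge case needs care. If $\mathit{pk}_A=\mathit{pk}_B$, injectivity of $\encode$ makes the "other half" rule ambiguous. I would note that honestly generated distinct keys fall outside this case, or else treat it by convention; this degenerate case is the only real subtlety I expect.

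\textbf{Step 2: neither party aborts.} Bob recomputes $\mathit{ad}_1$ from the same public data and the delivered $r_A$, so the associated data match. Perfect correctness of $\Pi$ then gives $\ADec_{K_{\mathrm{in}}}(r_A,\mathit{ad}_1,c_A)=\encode(T_A)$. Since $\encode$ is injective, $\decode$ returns $T_A$. $\Validate$ accepts $T_A$ because $T_A=u\act x_0\in\Xset$ and $\Validate$ decides membership exactly. The same argument applies to $m_2$: Alice's $\mathit{ad}_2$ uses her stored $m_1$, which equals the delivered $m_1$ that Bob used.

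\textbf{Step 3: the session keys agree.} Use the action axioms and commutativity once more:
\[
u\act T_B=(uw)\act x_0=w\act T_A,\qquad u\act\mathit{pk}_B=(ub)\act x_0=b\act T_A,\qquad a\act T_B=(aw)\act x_0=w\act\mathit{pk}_A .
\]
This is \Cref{eq:three-dh}. Unique representation gives identical encodings, so $Z_{\mathrm{ee}},Z_{\mathrm{es}},Z_{\mathrm{se}}$ agree as strings, and $Z_{\mathrm{ss}}$ agrees by Step 1. The transcript $\tr$ is built from the same $\mathit{pk}_A$, $\mathit{pk}_B$ in initiator/responder order, as fixed in \Cref{alg:session}, and from the same $m_1,m_2$. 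The inputs to $\Hash$ are therefore equal, and so are the outputs $\mathit{sk}$.

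The proof uses exactly commutativity, the action axioms, and uniqueness of representation. For the concrete instantiation, uniqueness is supplied by \Cref{prop:montgomery-normal-form}.
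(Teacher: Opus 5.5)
Your proof is correct and is essentially the paper's own argument, which only splits your Step~3 into agreement of $Z_{\mathrm{ee}}, Z_{\mathrm{es}}, Z_{\mathrm{se}}$ and a separate step for the transcripts. The degenerate case you flag is genuine, and the paper's ``they select consistently'' passes over it: if $\mathit{pk}_A = \mathit{pk}_B$, the \textsc{PeerCtx} rule gives both parties $K_{\mathrm{out}} = K_{1 \to 2}$, so Bob decrypts under $K_{2 \to 1}$ and aborts except with negligible probability, and since any ``convention'' would have to change \textsc{PeerCtx} itself, the clean fix is the explicit hypothesis $\mathit{pk}_A \neq \mathit{pk}_B$ (equivalently $a \neq b$ in $\Gact$, by freeness), without which the correctness error is negligible but not zero.
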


\begin{proof}
Write $a, b \in \Gact$ for the static secrets and $u, w \in \Gact$ for the ephemeral secrets,
so that $\mathit{pk}_A = a \act x_0$, $\mathit{pk}_B = b \act x_0$, $T_A = u \act x_0$ and
$T_B = w \act x_0$.

\emph{Step 1: the masking keys agree.} Alice computes
$a \act \mathit{pk}_B = a \act (b \act x_0) = (ab) \act x_0$, using that $\act$ is a group
action. Bob computes $b \act \mathit{pk}_A = b \act (a \act x_0) = (ba) \act x_0$. Since
$\Gact$ is abelian, $ab = ba$, so the two parties obtain the same element of $\Xset$. By
\Cref{def:ega} elements of $\Xset$ have a unique representation, so the two parties obtain the
same bit string $\encode(Z_{\mathrm{ss}})$. The remaining inputs to \Cref{eq:mask-keys} are the
two static public keys in the canonical order, which both parties can compute from the pair
they hold, so the hash inputs coincide and hence so do the two directional keys. Each party
then selects the half matching the direction it is sending in, and since the two parties
disagree about neither the pair nor the direction, they select consistently.

\emph{Step 2: the ephemeral elements are recovered.} Bob computes $\mathit{ad}_1$ from
$\mathrm{lbl}$, the two static public keys and the nonce $r_A$ carried in $m_1$, so his
associated data equals Alice's. By Step 1 he holds the key Alice encrypted under, and $m_1$
was delivered unmodified, so perfect correctness of $\Pi$ gives
$\ADec_{K_{A \to B}}(r_A, \mathit{ad}_1, c_A) = \encode(T_A)$. Since $\encode$ is injective on
$\Xset$, applying $\decode$ returns $T_A$ exactly. Because $T_A = u \act x_0$ with $u \in \Gact$
and $x_0 \in \Xset$, we have $T_A \in \Xset$, so $\Validate(T_A) = 1$ and Bob does not abort.
The same argument applied to $m_2$ shows that Alice recovers $T_B \in \Xset$ and does not
abort. Note that Alice can form $\mathit{ad}_2$ because it is built from data she holds,
namely the label, the two static public keys, the nonce $r_B$ carried in $m_2$, and the
message $m_1$ she herself sent.

\emph{Step 3: the three remaining shared values agree.} Using the action axioms and
commutativity of $\Gact$,
\[
  \underbrace{u \act T_B}_{\text{Alice}} = u \act (w \act x_0) = (uw) \act x_0
  = (wu) \act x_0 = w \act (u \act x_0) = \underbrace{w \act T_A}_{\text{Bob}} ,
\]
so both parties obtain the same $Z_{\mathrm{ee}}$. Identically,
$u \act \mathit{pk}_B = (ub) \act x_0 = b \act T_A$ gives a common $Z_{\mathrm{es}}$, and
$a \act T_B = (aw) \act x_0 = w \act \mathit{pk}_A$ gives a common $Z_{\mathrm{se}}$.
Uniqueness of representation again turns equality in $\Xset$ into equality of encodings.

\emph{Step 4: the transcripts agree.} Both parties compute
$\tr = \mathrm{lbl} \cat \mathit{pk}_A \cat \mathit{pk}_B \cat m_1 \cat m_2$ from the same
four strings, since the messages were delivered unmodified.

Combining Steps 1, 3 and 4, the argument of the final hash call is the same string for both
parties, so the derived session keys are equal.
\end{proof}

\begin{corollary}[Perfect correctness]
\label{cor:perfect}
With a perfectly correct authenticated encryption scheme, \MCSI{} has correctness error zero.
\end{corollary}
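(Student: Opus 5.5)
The corollary follows from \Cref{thm:correctness} once correctness error is made precise. I would define it as the probability, over all randomness of key generation and of both parties (the choices of $a, b, u, w, r_A, r_B$ and any internal coins of $\Pi$), that an honest execution with unmodified delivery ends with some party aborting or with the two session keys differing. The goal is to show this event is empty, not merely improbable.

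The plan is to observe that the proof of \Cref{thm:correctness} never conditions on a particular outcome of the random choices. Step 1 uses only the action axioms, commutativity of $\Gact$ and uniqueness of representation in $\Xset$. Step 2 uses perfect correctness of $\Pi$, which by definition holds for every key, nonce, associated data, message and encryption coin, together with injectivity of $\encode$ and the fact that $u \act x_0 \in \Xset$. Steps 3 and 4 are again purely algebraic or syntactic. So for every fixed value of the random tape the theorem's conclusion holds, and the failure event has probability exactly zero.

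One point needs care, and it is the only real obstacle. Key generation for a restricted action samples exponent vectors instead of uniform elements of $\Gact$, so I would check that the theorem's hypotheses still apply to such samples. Each exponent vector denotes a genuine element $\prod_i [\idl_i]^{e_i}$ of $\Gact = \clgp$. Its evaluation is $\prod_i [\idl_i]^{e_i} \act x_0$, which by the action axioms is that element acting on $x_0$, so it lies in $\Xset$. I would also confirm that $\Validate$ accepts every element of $\Xset$, that is, that it has no false negatives. This is implicit in Step 2, and I would take it from the specification of $\Validate$ in \Cref{sec:validation} as an exact membership test. With these two facts the corollary holds for every sampled tape, giving error zero.
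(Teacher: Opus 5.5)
Your route is the paper's own. Its proof of \Cref{cor:perfect} is the single remark that no step in the proof of \Cref{thm:correctness} is probabilistic, and your tape-by-tape reading of Steps 1--4 spells that remark out; your check on exponent vectors is the content of \Cref{rem:non-unique-exponents}. But the claim you set yourself, that the failure event is empty, is false for the protocol as specified. The hole is in Step 1, which you, like the paper, treat as purely algebraic.

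The choice of directional key is not determined by the pair of static public keys when the two coincide. Honestly and independently generated keys satisfy $\mathit{pk}_A = \mathit{pk}_B$ exactly when $a = b$ in $\Gact$, by freeness. This has probability $1/\#\Gact$ under uniform sampling and at least $11^{-74}$ for the CSIDH-512 box, so it is positive either way. In that case both parties run \textsc{PeerCtx} with $\mathit{pk}_{\mathrm{self}} = P_1$, so both set $K_{\mathrm{out}} = K_{1\to 2}$ and $K_{\mathrm{in}} = K_{2\to 1}$ (\Cref{alg:peerctx}). Alice encrypts $m_1$ under $K_{1\to 2}$ and Bob decrypts it under $K_{2\to 1}$. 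Perfect correctness of $\Pi$ says nothing about decryption under a different key. For any scheme with integrity of ciphertexts, Bob's decryption rejects except with negligible probability, so he aborts. The prose of \Cref{sec:masking} does not settle the tie either: ``the half that matches the direction'' presupposes a labelling that the rule $\encode(P_1) \le \encode(P_2)$ leaves open when the encodings are equal. So the sentence in Step 1 that the parties ``disagree about neither the pair nor the direction'' carries a hidden hypothesis.

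To get error exactly zero you need one of three repairs. First, you could add the hypothesis $\mathit{pk}_A \neq \mathit{pk}_B$ to the theorem and corollary. Second, you could change the specification so that a tie is broken by role, for instance the initiator always sends under $K_{1\to 2}$ when $P_1 = P_2$; both halves are cached anyway, so nothing is lost. Third, you could weaken the conclusion to a correctness error of at most the probability that two honest static keys collide.

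A smaller point concerns $\Validate$. The no-false-negatives property you rightly isolate should be taken from the abstract specification in \Cref{sec:params-syntax}, where $\Validate$ is defined to decide membership in $\Xset$. It should not be taken from \Cref{sec:validation}, whose procedure is randomized and accepts a valid curve only once it has sampled a point of sufficiently large order.
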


\begin{proof}
No step of the proof of \Cref{thm:correctness} is probabilistic.
\end{proof}

Two remarks locate the boundaries of this statement.

\begin{remark}[Non-unique representations of secrets are harmless]
\label{rem:non-unique-exponents}
In the restricted setting of \Cref{sec:effective-action} a private key is an exponent vector
$\bm{e} \in \ZZ^n$ rather than an element of $\Gact$, and distinct vectors can represent the
same class. This does not disturb \Cref{thm:correctness}: the proof only uses the class
$[\ida] = \prod_i [\idl_i]^{e_i} \in \clgp$ that a vector represents, and the evaluation
algorithm computes the action of that class whichever representative it is handed. What the
non-uniqueness does affect is the distribution of secrets, which is not uniform on $\clgp$
when vectors are sampled uniformly from a box, and the security analysis has to take the
resulting statistical distance into account; see \Cref{rem:sampling}.
\end{remark}

\begin{remark}[Where correctness would fail]
\label{rem:correctness-fails}
The proof uses exactly three properties: that $\Gact$ is abelian, that the action is a genuine
action so that $g \act (h \act x) = (gh) \act x$, and that elements of $\Xset$ have a unique
representation so that two parties who agree on an element also agree on its encoding. The
third is supplied for the instantiation by \Cref{prop:montgomery-normal-form}: without a
normal form, two parties could hold $\Fp$-isomorphic but syntactically different curves and
derive different keys. The first two fail for the iterated walk construction discussed in
\Cref{sec:rationale}, which is why no correctness statement of this kind can be proved for it.
\end{remark}
\section{Security analysis}
\label{sec:security}

This section proves three statements about \MCSI{} and is explicit about a fourth that we do
not prove. The three statements are indistinguishability of the session key from random
against a passive adversary (\Cref{thm:sk-ind}), confidentiality of the blinded ephemeral
element against the same adversary (\Cref{thm:blinding}), and integrity of the blinded
transport against an active adversary (\Cref{thm:integrity}). The statement we do not prove is
security in a model that allows the adversary to reveal ephemeral secrets or session state,
and \Cref{sec:limits} says precisely what is missing and why we prefer to leave it open rather
than assert it.

All three proofs are in the random oracle model~\cite{BellareRogaway1993ROM} and all three
reduce to \Cref{ass:stcdh}. None of them uses the decisional assumption
\Cref{prob:ddh}, which is false for our instantiation by \Cref{rem:ddh-broken}.

\subsection{Security model}
\label{sec:model}

We use a Bellare--Rogaway style model~\cite{BellareRogaway1993AKE} restricted to passive
transcript observation, and we state the restriction rather than hiding it in the details.

There are $N$ parties $P_1, \dots, P_N$. In the setup phase each party runs
\Cref{alg:keygen} to obtain $(a_i, \mathit{pk}_i)$ and all static public keys are given to the
adversary $\Adv$. A \emph{session} is one execution of \MCSI{} between an initiator and a
responder; we identify a session by the pair of parties together with the transcript
$(m_1, m_2)$, and two sessions at the two parties are \emph{partners} if their transcripts and
the identities agree. The adversary has access to the following oracles.

\begin{itemize}
  \item $\ensuremath{\mathrm{Execute}}(i,j)$: runs a complete honest session between $P_i$ as initiator and
        $P_j$ as responder, using fresh randomness, and returns the transcript $(m_1, m_2)$.
        At most $q_s$ such queries are made.
  \item $\ensuremath{\mathrm{Reveal}}(\mathit{sid})$: returns the session key of the session
        $\mathit{sid}$.
  \item $\ensuremath{\mathrm{Corrupt}}(i)$: returns the static secret $a_i$.
  \item $\Hash(\cdot)$: the random oracle, on which at most $q_H$ queries are made.
  \item $\ensuremath{\mathrm{Test}}(\mathit{sid})$: asked once. A bit $\beta$ is chosen at setup; the oracle
        returns the real session key of $\mathit{sid}$ if $\beta = 0$ and a uniformly random
        string of the same length if $\beta = 1$.
\end{itemize}

The tested session must be \emph{fresh}: neither of its two parties was the subject of a
$\ensuremath{\mathrm{Corrupt}}$ query before that session completed, and neither the session nor its
partner has been the subject of a $\ensuremath{\mathrm{Reveal}}$ query. The advantage of $\Adv$ is
$\advantage{sk\text{-}ind}{\MCSI}(\Adv) = \left| \Pr[\beta' = \beta] - \tfrac{1}{2} \right|$,
where $\beta'$ is the bit output by $\Adv$.

\begin{remark}[What this model deliberately omits]
\label{rem:model-omissions}
The adversary here observes sessions but does not deliver messages of its own choosing to
honest parties, and it cannot learn ephemeral secrets or intermediate session state. Both
restrictions matter, and neither is standard for a protocol that calls itself authenticated.
We keep them because they delimit exactly what our proof establishes.
\Cref{thm:integrity} recovers a partial statement about active adversaries, and
\Cref{sec:limits} lists the properties that remain unproved.
\end{remark}

\subsection{Indistinguishability of the session key}
\label{sec:sk-ind}

\begin{theorem}[Session-key indistinguishability]
\label{thm:sk-ind}
Let $\Adv$ be an adversary in the model of \Cref{sec:model} making at most $q_s$
$\ensuremath{\mathrm{Execute}}$ queries and $q_H$ random oracle queries. Then there is an algorithm $\Bdv$
against \Cref{prob:stcdh}, running in time essentially that of $\Adv$ plus $O(q_H)$
evaluations of the decision oracle, such that
\[
  \advantage{sk\text{-}ind}{\MCSI}(\Adv) \ \leq \
  q_s \cdot \advantage{StCDH}{\Gact,\Xset}(\Bdv) .
\]
\end{theorem}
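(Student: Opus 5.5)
The proof is a single embedding argument in the random oracle model. A passive adversary only ever sees honestly generated transcripts, so the reduction $\Bdv$ can generate every static key itself. It then plants the challenge of \Cref{prob:stcdh} into the two ephemeral elements of one guessed $\ensuremath{\mathrm{Execute}}$ session. On input $(x_0, X = g \act x_0, Y = h \act x_0)$ and with access to $\mathcal{D}_g$, $\Bdv$ proceeds as follows.

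It runs \Cref{alg:keygen} for all $N$ parties and keeps every $a_i$, and it picks $i^\ast \getsr \{1,\dots,q_s\}$. It answers every $\ensuremath{\mathrm{Execute}}$ query other than the $i^\ast$-th honestly, with ephemerals it samples itself. In the $i^\ast$-th query, between $P_i$ and $P_j$, it sets $T_A \gets X$ and $T_B \gets Y$. It can still produce a perfectly distributed transcript, because it holds $Z_{\mathrm{ss}}$ and hence the masking keys of \Cref{eq:mask-keys}, so it can compute $\Mask$ on $X$ and $Y$ with fresh nonces. It can also compute three of the four hash inputs for this session: $Z_{\mathrm{es}} = a_j \act X$, $Z_{\mathrm{se}} = a_i \act Y$ and $Z_{\mathrm{ss}}$, all of which use only static secrets it knows. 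Only $Z_{\mathrm{ee}} = (gh) \act x_0$ is unknown to it.

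Queries are handled as follows. $\ensuremath{\mathrm{Corrupt}}$ is answered by returning the stored $a_i$; freshness is not even needed for this. $\ensuremath{\mathrm{Reveal}}$ on any session other than the $i^\ast$-th is answered by computing the key honestly. Random oracle queries are answered by lazy sampling, with one special rule for the embedded session. Its session key is a placeholder $\mathit{sk}^\ast \getsr \bits^{2\secpar}$. Whenever $\Adv$ queries $\Hash$ on a string that parses as
\[
  \mathrm{lbl} \cat \texttt{"key"} \cat \encode(Z) \cat \encode(a_j \act X) \cat \encode(a_i \act Y) \cat \encode(Z_{\mathrm{ss}}) \cat \tr^\ast ,
\]
where $\tr^\ast$ is the transcript of that session, $\Bdv$ calls $\mathcal{D}_g(Y, Z)$. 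If the oracle answers $1$, then $Z = g \act Y = (gh) \act x_0$, and $\Bdv$ outputs $Z$ and stops. This costs at most one decision-oracle call per hash query, which gives the $O(q_H)$ overhead in the statement. If the $\ensuremath{\mathrm{Test}}$ query does not target the $i^\ast$-th session, $\Bdv$ aborts. Otherwise it answers with $\mathit{sk}^\ast$, whatever the value of $\beta$.

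The analysis is a standard two-game argument. Let $\mathsf{Q}$ be the event that $\Adv$ queries $\Hash$ at the true key input of the tested session. Until $\mathsf{Q}$ occurs, the simulation equals the real game; in particular, because $\tr^\ast$ contains the fresh nonces, no other session or mask-key query shares that input. Moreover, the real key of the tested session is then an unqueried random oracle value, so it is distributed exactly like the random key. Hence $\advantage{sk\text{-}ind}{\MCSI}(\Adv) \le \Pr[\mathsf{Q}]$. On $\mathsf{Q}$, provided $i^\ast$ was guessed correctly (probability $1/q_s$, independent of $\Adv$'s view), $\Bdv$ wins. This yields $\Pr[\mathsf{Q}] \le q_s \cdot \advantage{StCDH}{\Gact,\Xset}(\Bdv)$. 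Injectivity of $\encode$ and the uniqueness of representations in \Cref{def:ega} ensure that the parsing step is unambiguous.

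I expect the main obstacle to be the distributional bookkeeping rather than the embedding itself. The challenge elements $X$ and $Y$ come from uniform $g, h \getsr \Gact$, whereas honest ephemerals in the restricted CSIDH setting are sampled from an exponent box. The simulated tested session is therefore identical to a real one only under the exact uniform sampling assumed in \Cref{def:ega}. Without that, one would have to add a statistical-distance term (\Cref{rem:sampling}). I would state the proof for the idealised effective group action, so that the bound is exactly as claimed, and note the additive term for the restricted instantiation.
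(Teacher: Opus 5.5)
Your proposal is correct and follows the paper's proof. Like the paper, you guess the tested $\ensuremath{\mathrm{Execute}}$ session, embed $g \act x_0$ and $h \act x_0$ as its two ephemeral elements, compute $Z_{\mathrm{es}}$, $Z_{\mathrm{se}}$ and $Z_{\mathrm{ss}}$ from the static secrets the reduction generated, and use $\mathcal{D}_g(T_B^{*}, Z)$ to pick $Z_{\mathrm{ee}}$ out of the random oracle queries. The differences are cosmetic: you apply the identical-until-bad step before taking the $1/q_s$ guessing loss rather than after it, and you state explicitly the uniform-sampling idealisation that the paper defers to \Cref{rem:sampling}.
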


\begin{proof}
We proceed by a short sequence of games. Let $S_k$ denote the event that $\Adv$ outputs
$\beta' = \beta$ in Game $k$.

\emph{Game 0} is the real experiment, so
$\advantage{sk\text{-}ind}{\MCSI}(\Adv) = |\Pr[S_0] - 1/2|$.

\emph{Game 1} is Game 0 except that at the start the challenger guesses uniformly which of the
$q_s$ sessions produced by $\ensuremath{\mathrm{Execute}}$ will be tested, and aborts if the guess is
wrong. The guess is independent of $\Adv$'s view up to the point of the $\ensuremath{\mathrm{Test}}$ query,
so $|\Pr[S_1] - 1/2| \geq \frac{1}{q_s} |\Pr[S_0] - 1/2|$. Write $\mathit{sid}^{*}$ for the
guessed session, $P_i$ and $P_j$ for its parties, $T_A^{*} = u^{*} \act x_0$ and
$T_B^{*} = w^{*} \act x_0$ for its ephemeral elements, and $\tr^{*}$ for its transcript.

\emph{Game 2} is Game 1 except that the session key of $\mathit{sid}^{*}$ is replaced by an
independent uniformly random string, both when it is returned by $\ensuremath{\mathrm{Test}}$ and wherever
it would be used internally. In Game 2 the bit $\beta$ is information theoretically hidden, so
$\Pr[S_2] = 1/2$.

Games 1 and 2 proceed identically unless $\Adv$ queries the random oracle at the point
\begin{equation}
  \mathrm{lbl} \cat \texttt{"key"} \cat Z_{\mathrm{ee}}^{*} \cat Z_{\mathrm{es}}^{*}
  \cat Z_{\mathrm{se}}^{*} \cat Z_{\mathrm{ss}}^{*} \cat \tr^{*}
  \label{eq:bad-query}
\end{equation}
at which the real key is defined. Call this event $\mathrm{Bad}$. By the difference lemma,
$|\Pr[S_1] - \Pr[S_2]| \leq \Pr[\mathrm{Bad}]$.

It remains to bound $\Pr[\mathrm{Bad}]$ by the advantage of a \ensuremath{\mathrm{GA-StCDH}} algorithm
$\Bdv$. On input $(x_0, g \act x_0, h \act x_0)$ and with access to the decision oracle
$\mathcal{D}_g$, the algorithm $\Bdv$ behaves as follows. It simulates Game 2 rather than
Game 1, which it can do without knowing $g$ or $h$; since the two games are identical until
$\mathrm{Bad}$ occurs, the probability of $\mathrm{Bad}$ is the same in both, and it is
detectable inside the simulation.

\emph{Setup.} $\Bdv$ generates all $N$ static key pairs itself, so it knows every $a_i$ and
can answer every $\ensuremath{\mathrm{Corrupt}}$ query. It guesses $\mathit{sid}^{*}$ as in Game 1.

\emph{Simulating sessions.} For every session other than $\mathit{sid}^{*}$, $\Bdv$ samples
ephemeral secrets itself and follows the protocol, so it can produce the transcript and
compute the session key, and hence answer $\ensuremath{\mathrm{Reveal}}$. For $\mathit{sid}^{*}$ it embeds
the challenge by setting $T_A^{*} = g \act x_0$ and $T_B^{*} = h \act x_0$. It can still
produce the transcript of $\mathit{sid}^{*}$: the masking keys are determined by
$Z_{\mathrm{ss}}^{*} = a_i \act \mathit{pk}_j$, which $\Bdv$ can compute because it knows
$a_i$, and the payloads $T_A^{*}, T_B^{*}$ are the challenge elements, which $\Bdv$ holds. The
session key of $\mathit{sid}^{*}$ is set to a fresh random string; by freshness $\Adv$ may not
$\ensuremath{\mathrm{Reveal}}$ it or its partner, so this is consistent.

\emph{Detecting the query.} $\Bdv$ can compute three of the four secret inputs to
\Cref{eq:bad-query} without knowing $g$ or $h$, namely
\[
  Z_{\mathrm{es}}^{*} = a_j \act T_A^{*}, \qquad
  Z_{\mathrm{se}}^{*} = a_i \act T_B^{*}, \qquad
  Z_{\mathrm{ss}}^{*} = a_i \act \mathit{pk}_j ,
\]
where $a_i, a_j$ are the static secrets of the two parties, which $\Bdv$ generated. It
therefore recognises every random oracle query whose \texttt{"key"} label, last three group
action components and transcript match those of $\mathit{sid}^{*}$. For each such query,
$\Bdv$ takes the remaining component $Z$ and calls $\mathcal{D}_g(T_B^{*}, Z)$. Since
$\mathcal{D}_g(T_B^{*}, Z) = 1$ if and only if $Z = g \act T_B^{*} = (gh) \act x_0$, the
oracle identifies the correct value exactly. If some query passes the test, $\Bdv$ outputs
that $Z$ and halts.

The simulation is perfect: every oracle answer $\Bdv$ gives is distributed exactly as in
Game 1, and the guess of $\mathit{sid}^{*}$ together with the abort on a wrong guess is part
of Game 1 itself. Since $\mathrm{Bad}$ can occur only when the guess was right, and since in
that case $\Bdv$ recovers $(gh) \act x_0$ from the query, we get
$\Pr[\mathrm{Bad}] \leq \advantage{StCDH}{\Gact,\Xset}(\Bdv)$. Combining,
\[
  \advantage{sk\text{-}ind}{\MCSI}(\Adv)
  \ \leq \ q_s \cdot |\Pr[S_1] - 1/2|
  \ \leq \ q_s \cdot \Pr[\mathrm{Bad}]
  \ \leq \ q_s \cdot \advantage{StCDH}{\Gact,\Xset}(\Bdv) ,
\]
which is the claim.
\end{proof}

\begin{remark}[Why the strong variant is needed]
\label{rem:why-strong}
The only role of the decision oracle in the proof is to let $\Bdv$ recognise which of the
$q_H$ random oracle queries carries the answer. Without it, $\Bdv$ would have to guess, losing
a factor $q_H$; with the ordinary \ensuremath{\mathrm{GA-CDH}} assumption one therefore obtains the weaker
bound $q_s q_H \cdot \advantage{CDH}{\Gact,\Xset}(\Bdv)$. Both routes are available and both
avoid \Cref{prob:ddh}, which is the point.
\end{remark}

\begin{remark}[Forward secrecy]
\label{rem:forward-secrecy}
\Cref{thm:sk-ind} assumes the two parties of the tested session are uncorrupted. A separate
and weaker statement holds after corruption: if the adversary learns $a_i$ and $a_j$
\emph{after} the tested session has completed, the session key remains indistinguishable,
because the proof only used the hardness of computing $Z_{\mathrm{ee}}^{*} = (gh) \act x_0$
from $g \act x_0$ and $h \act x_0$, and knowledge of the static secrets does not help with
that. This is weak forward secrecy in the usual sense: it covers an adversary that was passive
during the session. We do not claim forward secrecy against an adversary that was active
during the session.
\end{remark}

\subsection{Confidentiality of the blinded ephemeral element}
\label{sec:blinding-security}

The property that distinguishes \MCSI{} from an unblinded key exchange over the same group
action is that a passive observer does not learn the ephemeral elements $T_A$ and $T_B$. We
state this as indistinguishability of the transcript from one carrying unrelated payloads.

Formally, consider the following game. A bit $\beta$ is chosen. Two uncorrupted parties
$P_i, P_j$ are set up and their static public keys given to the adversary $\Adv$. The
adversary then receives the transcripts of $q_s$ sessions between them, generated honestly if
$\beta = 0$, and generated with the payload of every message replaced by an independent
uniform element of $\Xset$ if $\beta = 1$. The adversary outputs a guess $\beta'$, and
$\advantage{blind}{\MCSI}(\Adv) = |\Pr[\beta' = \beta] - \tfrac{1}{2}|$.

\begin{theorem}[Blinding]
\label{thm:blinding}
For every adversary $\Adv$ in the game just described there are algorithms $\Bdv_1$ against
\Cref{prob:stcdh} and $\Bdv_2$ against the chosen plaintext security of $\Pi$ with
\[
  \advantage{blind}{\MCSI}(\Adv) \ \leq \
  \advantage{StCDH}{\Gact,\Xset}(\Bdv_1) \ + \ 2\,\advantage{ind\text{-}cpa}{\Pi}(\Bdv_2) .
\]
\end{theorem}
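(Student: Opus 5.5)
The plan is a three-game hybrid. The first hop makes the masking keys independent of everything the adversary sees, which costs the \ensuremath{\mathrm{GA-StCDH}} term. The remaining two hops use chosen plaintext security of $\Pi$: one replaces the real payloads by an unrelated fixed plaintext, the other replaces that plaintext by uniform elements of $\Xset$. Each of those hops costs one $\advantage{ind\text{-}cpa}{\Pi}$ term, which explains the factor $2$.

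\emph{Game 0} is the blinding game with $\beta = 0$, that is, honest transcripts. \emph{Game 1} replaces $(K_{1\to2}, K_{2\to1})$ by an independent uniform string. In the random oracle model the two games differ only if $\Adv$ queries $\Hash$ at $\mathrm{lbl} \cat \texttt{"mask"} \cat \encode(Z_{\mathrm{ss}}) \cat \encode(P_1) \cat \encode(P_2)$; call this event $\mathrm{Bad}$.

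To bound $\Pr[\mathrm{Bad}]$, I would build $\Bdv_1$ that receives $(x_0, g \act x_0, h \act x_0)$ and sets $\mathit{pk}_i = g \act x_0$ and $\mathit{pk}_j = h \act x_0$. It samples the masking keys at random and samples every ephemeral secret $u, w$ itself. It produces the $q_s$ transcripts by encrypting $\encode(u \act x_0)$ and $\encode(w \act x_0)$. For every query of the \texttt{"mask"} form with the right public keys, it calls $\mathcal{D}_g(\mathit{pk}_j, Z)$, and it outputs the first $Z$ that passes.

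I expect this step to be the main obstacle. First, $\Bdv_1$ must not need $g$ or $h$ anywhere. That holds here because the blinding game gives no session keys to the adversary and no corruptions of $P_i, P_j$. Second, the embedding must be tight with no guessing loss, since there is one static pair and hence only one point to embed. With both points settled, $\Pr[\mathrm{Bad}] \le \advantage{StCDH}{\Gact,\Xset}(\Bdv_1)$.

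\emph{Game 2} replaces each plaintext $\encode(T)$ by $\encode(x_0)$. \emph{Game 3} replaces it by $\encode(R)$ for a fresh uniform $R \in \Xset$. Game 3 is exactly the $\beta = 1$ world, except that the masking keys are random rather than hashed. Undoing the first hop for the $\beta = 1$ world would cost a second \ensuremath{\mathrm{GA-StCDH}} term, so I avoid it: on each side I would compare against a coupled world with random keys. The cleanest way is to argue that, given the static keys, both the $\beta = 0$ and $\beta = 1$ worlds differ from their random-key versions only on the same event $\mathrm{Bad}$. The two worlds behave identically until $\mathrm{Bad}$, so I would bound a single $\mathrm{Bad}$ event jointly and use the difference lemma once.

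For the hops $1\to2$ and $2\to3$, $\Bdv_2$ plays left-or-right chosen plaintext security of $\Pi$ with the direction keys as challenge keys. Two keys can be handled by a standard hybrid or by viewing the pair as a single key. $\Bdv_2$ submits pairs of plaintexts with nonces $r$ and associated data $\mathit{ad}$ that it computes itself; $\mathit{ad}_2$ contains $m_1$, which $\Bdv_2$ already holds. Each hop costs $\advantage{ind\text{-}cpa}{\Pi}(\Bdv_2)$, and taking $\Bdv_2$ to be the better of the two reductions gives the factor $2$.

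Summing the hops gives $|\Pr[\beta'=1 \mid \beta=0] - \Pr[\beta'=1 \mid \beta=1]| \le \advantage{StCDH}{\Gact,\Xset}(\Bdv_1) + 2\,\advantage{ind\text{-}cpa}{\Pi}(\Bdv_2)$. Halving this difference bounds the advantage $|\Pr[\beta' = \beta] - \tfrac12|$, so the stated inequality follows with room to spare.
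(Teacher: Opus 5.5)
Your proposal is correct in substance but organised differently from the paper's proof. The reduction $\Bdv_1$ is the same: the challenge is embedded as the two static keys, the \texttt{"mask"} query is recognised by $\mathcal{D}_g(\mathit{pk}_j, Z)$, and there is no guessing loss. The paper makes one key-replacement hop and then switches the true payloads \emph{directly} to uniform elements of $\Xset$. A reduction can do this in one step because it samples $R$ itself. It pays one $\advantage{ind\text{-}cpa}{\Pi}$ term per direction, and that is its factor $2$.

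The paper does not spell out how the hashed keys of the $\beta = 1$ world are handled. The clean reading is that the key hop is made in the blinding game with $\beta$ uniform, and $\Bdv_1$ samples $\beta$ itself. Then there is a single $\mathrm{Bad}$ event and a single $\advantage{StCDH}{\Gact,\Xset}(\Bdv_1)$ term. You raise this point explicitly, which is to your credit, and your ``joint $\mathrm{Bad}$'' argument is correct exactly in that form. Your factor $2$ comes instead from the detour through $\encode(x_0)$.

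Two of your constants are off, and they only work out because of the final halving.

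\begin{itemize}
\item In the $\beta$-conditioned chain you actually write, $\mathrm{Bad}$ is an event in two different experiments, since the adversary's queries depend on $\beta$. So the difference $|\Pr[\beta'=1\mid\beta=0]-\Pr[\beta'=1\mid\beta=1]|$ picks up $\Pr[\mathrm{Bad}\mid\beta=0]+\Pr[\mathrm{Bad}\mid\beta=1]$. For the $\Bdv_1$ that samples $\beta$ itself, this is $2\,\advantage{StCDH}{\Gact,\Xset}(\Bdv_1)$, not one such term.
\item With two independent direction keys, a hybrid costs one CPA term per direction per hop. Treating the pair as one key is a security notion for a different scheme, not for $\Pi$. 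So your two hops contribute four CPA terms to the difference, not two.
\end{itemize}

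Halving then gives exactly $\advantage{StCDH}{\Gact,\Xset}(\Bdv_1)+2\,\advantage{ind\text{-}cpa}{\Pi}(\Bdv_2)$ when the CPA advantage is a difference of probabilities, as in Bellare--Namprempre. That is the theorem, but with no room to spare. It would overshoot by a factor $2$ on the CPA term if that advantage were normalised as $|\Pr-\tfrac12|$, like the paper's own notions. Dropping the $\encode(x_0)$ game and switching to uniform payloads in one hop, as the paper does, halves the CPA cost and yields the stated bound under either convention.
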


\begin{proof}
In Game 0 the transcripts are real. In Game 1 the pair of masking keys
$(K_{A \to B}, K_{B \to A})$ is replaced by an independent uniform string of the same length.
Games 0 and 1 differ only if $\Adv$ queries the random oracle at
$\mathrm{lbl} \cat \texttt{"mask"} \cat Z_{\mathrm{ss}} \cat P_1 \cat P_2$,
and $Z_{\mathrm{ss}} = (a_i a_j) \act x_0$ is exactly a \ensuremath{\mathrm{GA-CDH}} value for the pair
$(\mathit{pk}_i, \mathit{pk}_j)$. An algorithm $\Bdv_1$ that receives a challenge
$(g \act x_0, h \act x_0)$, publishes it as the two static public keys, simulates the
transcripts using freshly chosen masking keys and freshly chosen ephemeral secrets, and uses
its decision oracle to recognise the query, by testing whether
$\mathcal{D}_g(\mathit{pk}_j, Z) = 1$ for the candidate value $Z$ appearing in it, solves
\Cref{prob:stcdh} whenever the query occurs. Hence $|\Pr[S_0] - \Pr[S_1]| \leq \advantage{StCDH}{\Gact,\Xset}(\Bdv_1)$.

In Game 1 the two masking keys are uniform and independent of everything else, and the two
directions use different keys, so the transcripts consist of $q_s$ encryptions under each of
two independent uniform keys with distinct nonces. Replacing the true payloads by independent
uniform elements of $\Xset$ is then bounded by two chosen plaintext distinguishing games
against $\Pi$, one per direction, giving the term
$2\,\advantage{ind\text{-}cpa}{\Pi}(\Bdv_2)$.
\end{proof}

\begin{remark}[Nonce collisions]
\label{rem:nonce}
The bound above assumes distinct nonces within each direction. With $\nu$-bit random nonces
and $q_s$ sessions between the same pair of parties, the probability of a collision is at most
$q_s^2 / 2^{\nu+1}$, which must be added to the bound. With $\nu = 128$ this is negligible for
any realistic $q_s$. Deployments that cannot rely on good randomness should either derive the
nonce deterministically from the ephemeral secret or use a nonce misuse resistant scheme,
since a repeated nonce under a fixed masking key exposes the exclusive-or of two ephemeral
elements.
\end{remark}

\subsection{Integrity of the blinded transport}
\label{sec:integrity}

\begin{theorem}[Transport integrity]
\label{thm:integrity}
Let $\Adv$ be an adversary that receives the static public keys of two uncorrupted parties
$P_i, P_j$, observes at most $q_s$ honest sessions between them, and then delivers a message
$m_1'$ of its choice to $P_j$. The probability that $P_j$ accepts $m_1'$ and $m_1'$ was not
output by $P_i$ in one of the observed sessions is at most
\[
  \advantage{StCDH}{\Gact,\Xset}(\Bdv_1) \ + \ \advantage{int\text{-}ctxt}{\Pi}(\Bdv_2)
\]
for algorithms $\Bdv_1, \Bdv_2$ running in essentially the time of $\Adv$. The same bound
holds for messages delivered to $P_i$ in the other direction.
\end{theorem}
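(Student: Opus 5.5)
The proof will be a two-hop game sequence of the same shape as the one in \Cref{thm:blinding}. The first hop replaces the masking keys by uniform keys at the cost of a \ensuremath{\mathrm{GA-StCDH}} term. The second hop bounds acceptance of a fresh forgery by the ciphertext-integrity advantage of $\Pi$. Let $S_k$ be the event that, in Game $k$, $P_j$ accepts $m_1'$ although $m_1'$ was not output by $P_i$.

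In Game 0 everything is real. In Game 1 the output $(K_{1\to 2}, K_{2\to 1})$ of the hash call in \Cref{eq:mask-keys} is replaced by an independent uniform string. Both honest parties use this string, so sessions still complete consistently. The two games differ only if $\Adv$ queries $\Hash$ at $\mathrm{lbl} \cat \texttt{"mask"} \cat \encode(Z_{\mathrm{ss}}) \cat \encode(P_1) \cat \encode(P_2)$.

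To bound that event, $\Bdv_1$ receives $(g \act x_0, h \act x_0)$ and publishes these as $\mathit{pk}_i, \mathit{pk}_j$. It picks the masking keys uniformly and simulates the $q_s$ honest sessions by sampling the ephemeral secrets $u, w$ itself and encrypting. No session key is returned to $\Adv$ in this game, so $\Bdv_1$ never needs $Z_{\mathrm{es}}$ or $Z_{\mathrm{se}}$. The same holds for the single decryption of $m_1'$, which uses only the masking key and $\Validate$. On every \texttt{"mask"} query carrying the correct public keys, $\Bdv_1$ calls $\mathcal{D}_g(\mathit{pk}_j, Z)$ and outputs $Z$ on success. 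This gives $|\Pr[S_0]-\Pr[S_1]| \le \advantage{StCDH}{\Gact,\Xset}(\Bdv_1)$.

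In Game 1 the key $K_{A\to B}$ is uniform and independent of $\Adv$'s view apart from the honest ciphertexts under it. $\Bdv_2$ plays the \textsc{int-ctxt} game for $\Pi$. It produces the honest $c_A$ values through its encryption oracle on $(r_A, \mathit{ad}_1, \encode(T_A))$. It encrypts the reverse-direction $c_B$ values under its own uniform $K_{B\to A}$, which is legitimate because the two directional keys are independent halves of a uniform string in Game 1. Finally it submits $m_1' = (r', c')$ with $\mathit{ad}' = \mathrm{lbl}\cat\encode(\mathit{pk}_i)\cat\encode(\mathit{pk}_j)\cat r'$ as its forgery.

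The step that needs care is showing that acceptance of an $m_1'$ not output by $P_i$ is a valid \textsc{int-ctxt} win. Suppose $m_1' \ne m_1^{(k)}$ for every honest session $k$. Then the triple $(r', \mathit{ad}', c')$ differs from every queried triple $(r_A^{(k)}, \mathit{ad}_1^{(k)}, c_A^{(k)})$: either $r'$ differs, which also changes $\mathit{ad}'$, or $c'$ differs. So whenever $\ADec$ accepts, $\Bdv_2$ has produced a fresh valid ciphertext. Reflection of some $c_B$ is harmless because it lies under the independent key $K_{B\to A}$; it is just another fresh triple under $K_{A\to B}$ and is covered by the same bound. The extra checks ($\decode$, $\Validate$) only make acceptance harder. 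Hence $\Pr[S_1] \le \advantage{int\text{-}ctxt}{\Pi}(\Bdv_2)$, and summing the two hops gives the bound. The direction toward $P_i$ is symmetric, with $\mathit{ad}_2$ and $K_{B\to A}$ in place of $\mathit{ad}_1$ and $K_{A\to B}$.
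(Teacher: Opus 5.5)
Your proposal is correct and takes the same route as the paper. The first hop replaces the masking keys by uniform strings at the cost of a \ensuremath{\mathrm{GA-StCDH}} term, exactly as in the proof of \Cref{thm:blinding}, and the second reduces acceptance of a new $m_1'$ to ciphertext integrity of $\Pi$; where the paper states the second hop in one line, you spell it out (producing the $c_A$ through the encryption oracle, sampling $K_{B\to A}$ yourself, and checking that a new $m_1'$ gives a fresh triple $(r', \mathit{ad}', c')$), but the decomposition is identical.
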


\begin{proof}
As in the proof of \Cref{thm:blinding}, replacing the masking keys by uniform strings costs
$\advantage{StCDH}{\Gact,\Xset}(\Bdv_1)$. Once the keys are uniform and unknown to $\Adv$, a
message that $P_j$ accepts is by definition a ciphertext that $\ADec_{K_{A \to B}}$ does not
reject, under associated data $\mathit{ad}_1$ determined by the message itself and by the two
static public keys. If the message was not produced by $P_i$ then it is a forgery in the sense
of ciphertext integrity, so the probability is at most
$\advantage{int\text{-}ctxt}{\Pi}(\Bdv_2)$.
\end{proof}

\Cref{thm:integrity} is what gives \MCSI{} its authentication property: only a holder of $a_i$
or of $a_j$ can produce a message that the other accepts, because only such a holder can
compute $Z_{\mathrm{ss}}$ and hence the masking keys. It is an authentication statement about
the transport, not a full authenticated key exchange statement about the session key, and the
difference is the subject of \Cref{sec:limits}.

Note also that the associated data of the second message includes the first message, so a
completed session binds the two messages together and an adversary cannot recombine messages
from different sessions. The session key derivation additionally hashes the full transcript,
which binds the derived key to both messages and to both identities.

\subsection{Resistance to known attacks}
\label{sec:attacks}

The proofs above reduce everything to \Cref{ass:stcdh}. This subsection surveys what is known
about attacks on that assumption for the instantiation of \Cref{sec:parameters}, and disposes
of several attacks that are frequently mentioned in connection with isogeny based schemes but
that do not apply here. Concrete security levels are discussed separately in
\Cref{sec:security-levels}, because the literature does not agree on them.

\subsubsection{Classical algorithms for vectorisation}
\label{sec:classical-attacks}

The best known classical approach to \Cref{prob:vectorisation} for supersingular curves over
$\Fp$ is to search for a path in the isogeny graph, either by a meet in the middle search from
both endpoints or by a random walk with distinguished points. Both cost
$O(\sqrt{\#\clgp})$ operations up to logarithmic factors, and since
$\#\clgp \approx \sqrt{p}$ by the Brauer--Siegel estimate this is $O(p^{1/4})$. Delfs and
Galbraith~\cite{DelfsGalbraith2016} give the algorithm for the $\Fp$-rational subgraph in this
form, and the cost of the individual steps is analysed by Adj et
al.~\cite{Adj2019}. There is no known classical algorithm that beats the square root bound.

\subsubsection{Pohlig--Hellman does not apply}
\label{sec:pohlig}

The Pohlig--Hellman algorithm~\cite{PohligHellman1978} solves a discrete logarithm in a cyclic
group of order $n$ by projecting the instance into the subgroups of prime power order dividing
$n$ and recombining with the Chinese remainder theorem. Its applicability rests on being able
to compute the projection, that is on being able to raise the challenge to the power $n/q$ for
each prime power $q \| n$.

For a group action there is no such operation. The adversary is given two elements of $\Xset$,
not an element of $\Gact$, and $\Xset$ carries no group structure of its own. Even when the
order and the structure of $\clgp$ are fully known, as they are for the CSIDH-512 parameters
after the computation of Beullens, Kleinjung and Vercauteren~\cite{CSIFiSh2019}, computing
$[\ida^{n/q}] \act x_0$ from $[\ida] \act x_0$ requires knowing $[\ida]$, which is the problem
to be solved. Consequently the smoothness or otherwise of $\#\clgp$ is not by itself a
weakness, and the standard requirement from elliptic curve cryptography that the group order
have a large prime factor has no analogue here.

The one place where the subgroup structure of $\clgp$ does have consequences is the decisional
problem. Genus theory attaches quadratic characters to $\clgp$ that can be evaluated on
elements of $\Xset$, and Castryck, Sot\'akov\'a and Vercauteren~\cite{CSV2020} used exactly
this to break \Cref{prob:ddh}; see \Cref{rem:ddh-broken}. Those characters give one bit of
information about the class, not the class itself, and they are not known to help with
\Cref{prob:vectorisation} or \Cref{prob:cdh}. Our proofs use only the computational
assumption, so this attack does not affect them.

\subsubsection{Torsion point attacks do not apply}
\label{sec:torsion-attacks}

The attacks that broke SIDH in 2022, due to Castryck and Decru~\cite{CastryckDecru2023} and,
in more general form, to Maino et al.~\cite{Maino2023} and Robert~\cite{Robert2023}, exploit
two pieces of auxiliary data that SIDH publishes: the images $\isog(P), \isog(Q)$ of a torsion
basis of known order under the secret isogeny, and the degree of that isogeny, which is a
fixed public parameter. Given both, one can glue the curves into a higher dimensional abelian
variety and recover the isogeny.

\MCSI{} publishes neither. The only data sent are elements of $\Xset$, that is single
Montgomery coefficients, and by \Cref{prop:montgomery-normal-form} such a coefficient is a
complete description of an $\Fp$-isomorphism class and carries no torsion information. The
degree of the secret isogeny is not a fixed public parameter either; it is
$\prod_i \ell_i^{|e_i|}$ for a secret exponent vector. This is the same situation as for
CSIDH, and it is the reason CSIDH survived the SIDH break.

It follows, and we say it explicitly because the opposite has been claimed for schemes of this
shape, that the blinding layer of \MCSI{} is \emph{not} a countermeasure against torsion point
attacks. There is nothing for it to protect against, since no torsion data is transmitted in
the first place. Blinding torsion images is a real and separate line of work, pursued by
Fouotsa, Moriya and Petit in M-SIDH and MD-SIDH~\cite{MSIDH2023}, and it addresses SIDH-like
schemes, not commutative group action schemes. What our blinding layer protects is the
identity of the ephemeral element, which is a confidentiality property, not a countermeasure
to key recovery.

\subsubsection{Quantum algorithms}
\label{sec:quantum-attacks}

Shor's algorithm~\cite{Shor1997} solves the hidden subgroup problem for finite abelian groups
in polynomial time, which breaks factoring and discrete logarithms. Vectorisation is not an
instance of that problem. It is an instance of the abelian hidden shift problem, for which the
best known quantum algorithms are those of Kuperberg~\cite{Kuperberg2005,Kuperberg2013} and
Regev~\cite{Regev2004}, whose running time is subexponential of the form
$\exp\bigl(O(\sqrt{\log \#\clgp})\bigr)$ rather than polynomial. The reduction of the isogeny
problem to abelian hidden shift, together with the resulting subexponential quantum attack, is
due to Childs, Jao and Soukharev~\cite{ChildsJaoSoukharev2014}.

Grover's algorithm~\cite{Grover1996} gives a square root speedup for unstructured search,
which would reduce a $2^{256}$ private key space to about $2^{128}$ quantum operations. This
is not the binding constraint: the hidden shift algorithms are asymptotically much better, and
it is their concrete cost that determines the parameter sizes. The concrete cost is contested;
see \Cref{sec:security-levels}.

\subsubsection{Adaptive attacks against static keys}
\label{sec:adaptive}

Because \MCSI{} uses static keys, a party that processes an element of $\Xset$ supplied by
another party and reuses its static secret is a target for adaptive attacks in the style of
Galbraith, Petit, Shani and Ti~\cite{GPST2016}: by submitting maliciously chosen curves and
observing whether the session succeeds, an attacker can extract the static secret one
constraint at a time. Two features of \MCSI{} together block this. First, by
\Cref{thm:integrity} an adversary that does not hold a static secret cannot get a chosen
element accepted at all. Second, the recipient runs $\Validate$ on every recovered element
before applying its static secret to it, which rejects curves outside $\Xset$; the procedure
is described in \Cref{sec:validation}. Neither feature protects against a peer that holds a
legitimate static key and behaves maliciously, which is the situation static key isogeny
protocols are known not to handle well without a full key encapsulation transform; see
\Cref{sec:limits}.

\subsubsection{Implementation attacks}
\label{sec:side-channels}

We make no claim about resistance to timing attacks, power analysis or fault injection. A
straightforward implementation of the class group action leaks the exponent vector through
timing, since the number of isogeny steps depends on it. Our own implementation is such an
implementation, and \Cref{sec:measured-leak} measures the leak rather than leaving it as a
remark: the running time and the one-norm of the key have a correlation of $0.90$, a single
timing recovers about one bit about that norm, and two hundred timings separate two keys whose
one-norms differ by five out of a possible $370$. \MCSI{} uses its static secret in every
session, so an adversary who can time many sessions is timing the same key repeatedly.
Constant time algorithms for the CSIDH action are
known~\cite{MeyerCamposReith2019,OnukiEtAl2019,CTIDH2021} and fault attacks against them have
been studied~\cite{Campos2020}; an implementation of \MCSI{} intended for deployment must use
them. This is listed among the open items in \Cref{sec:limits}.

\subsection{What we do not prove}
\label{sec:limits}

We list the gaps explicitly, since a reader is better served by a short accurate list than by
a long claim.

\begin{enumerate}
  \item \textbf{Security in the Canetti--Krawczyk or extended Canetti--Krawczyk
        models.} \Cref{thm:sk-ind} holds against an adversary that observes sessions and may
        corrupt parties other than the two in the tested session. It says nothing about an
        adversary that reveals ephemeral secrets, reveals session state, or mounts key
        compromise impersonation, all of which are within scope of the models of Canetti and
        Krawczyk~\cite{CanettiKrawczyk2001} and LaMacchia, Lauter and
        Mityagin~\cite{LaMacchia2007}. Proofs in those models exist for other group action
        based key exchanges~\cite{deKock2021,Kawashima2021} and it is plausible that their
        techniques adapt, but we have not carried out the adaptation and we do not claim the
        result.
  \item \textbf{Security against an active adversary in the full sense.}
        \Cref{thm:integrity} bounds the probability of injecting an accepted message, and this
        does rule out the most obvious active attacks. It is not the same as proving session
        key security in a model where the adversary controls message delivery, reorders
        sessions and interleaves them.
  \item \textbf{Forward secrecy of the blinding.} As noted in \Cref{sec:rationale}, an
        adversary who later obtains either static secret can recompute $Z_{\mathrm{ss}}$ and
        strip the blinding from recorded transcripts, recovering the ephemeral elements. Only
        the session key retains its guarantee in that situation, and only in the weak sense of
        \Cref{rem:forward-secrecy}. A blinding layer that is forward secret would require a
        different key schedule and we do not know of one that avoids the circularity of
        \Cref{sec:rationale}.
  \item \textbf{Security against a malicious peer holding a valid static key.} Validation
        rejects elements outside $\Xset$, but a peer that holds a legitimate static key can
        choose its ephemeral element adversarially within $\Xset$. Handling this properly
        calls for a Fujisaki--Okamoto style transform~\cite{FujisakiOkamoto1999,HHK2017}
        applied to the underlying key encapsulation, which we have not done here.
  \item \textbf{Concrete quantum security of the smallest parameter set.} See
        \Cref{sec:security-levels}. We state the disagreement in the literature rather than
        adopting one side of it.
  \item \textbf{Implementation security.} The reference implementation of \Cref{sec:impl} is
        not constant time. \Cref{sec:measured-leak} measures how far its running time follows
        the private key and how few measurements separate two keys, and reports which layers
        are free of secret dependent branches and which are not. We did not build a constant
        time implementation, we prove nothing about resistance to power analysis or fault
        injection, and we do not erase secrets from memory. See
        \Cref{sec:side-channels,sec:measured-leak}.
\end{enumerate}
\section{Parameter selection}
\label{sec:parameters}

\Cref{sec:protocol} and \Cref{sec:security} treat the group action abstractly. This section
instantiates it, and it does so at some length because the choice of prime is the one place
where an isogeny based design is most easily got wrong.

\subsection{Choice of the prime}
\label{sec:prime}

Recall from \Cref{sec:effective-action} that the ideals whose action can be evaluated
efficiently are those of the form $\idl = (\ell, \frob - 1)$ for an odd prime $\ell$ dividing
$p+1$, because for such $\ell$ the kernel $E[\idl]$ is generated by an $\Fp$-rational point of
order $\ell$ that is found by one scalar multiplication, and because V\'elu's formulae then
cost $\Theta(\ell)$ by \Cref{rem:velu-cost}. The prime must therefore be chosen so that $p+1$
has many small odd prime factors. Following~\cite{CSIDH2018} we take
\begin{equation}
  p = 4 \cdot \ell_1 \ell_2 \cdots \ell_n - 1
  \label{eq:prime-form}
\end{equation}
with $\ell_1, \dots, \ell_n$ distinct small odd primes. Any prime of this form satisfies
$p \equiv 3 \pmod 8$, so $\Ezero : y^2 = x^3 + x$ is supersingular and
\Cref{prop:montgomery-normal-form} applies, giving one field element public keys.

\begin{remark}[Mersenne primes and NIST curve primes are not suitable]
\label{rem:2521}
It is natural to reach for a prime that already appears in a standard, and
$p = 2^{521} - 1$, the prime of the NIST curve P-521~\cite{NIST_SP_800_186}, is an obvious
candidate. It is prime, it satisfies $p \equiv 7 \pmod 8$, and the curve
$y^2 = x^3 + x$ over it is supersingular. It is nevertheless useless for a commutative
supersingular isogeny scheme, for the following reason. Here $p + 1 = 2^{521}$, so the only
prime dividing $p+1$ is $2$ and the only efficiently evaluable ideal is
$\idl_2 = (2, \frob - 1)$. The class group has order
$\#\clgp \approx \sqrt{p} \approx 2^{260}$, so representing a general class as a power of the
single class $[\idl_2]$ needs an exponent of size up to $\ord([\idl_2])$. If $[\idl_2]$
generates $\clgp$, that order is about $2^{260}$ and evaluating the action would take on the
order of $2^{260}$ sequential $2$-isogeny steps, which is not a cost that can be tuned away.
If it does not generate $\clgp$, the reachable set is the orbit of a proper subgroup and the
effective key space collapses to the size of that subgroup. Neither outcome leaves a usable
scheme.

The general point is that the criteria for selecting a prime for elliptic curve discrete
logarithms and the criteria for selecting one for a class group action are unrelated. The
former asks for efficient field arithmetic and a group of near prime order; the latter asks
for $p+1$ to be smooth in the specific sense of \Cref{eq:prime-form}. A prime that is good for
P-521 carries no presumption of being good here, and in this case it is maximally bad.
Two-isogenies over $\Fp$ with $p \equiv 7 \pmod 8$ are not useless in general and are exploited
by CSURF~\cite{CSURF2020}, but CSURF still needs the odd part of $p+1$ for the rest of the
action.
\end{remark}

\subsection{Private keys and sampling}
\label{sec:keyspace}

A private key is an exponent vector $\bm{e} = (e_1, \dots, e_n)$ with
$e_i \in \{-m, \dots, m\}$, representing the class
$[\ida] = \prod_{i=1}^{n} [\idl_i]^{e_i}$. The number of vectors is $(2m+1)^n$ and the
parameters are chosen so that $(2m+1)^n \gtrsim \#\clgp \approx \sqrt{p}$, which is the
condition for the vectors to cover the class group. Evaluating the action costs at most
$\sum_i |e_i| \leq mn$ small degree isogeny steps.

For CSIDH-512 the concrete choice of~\cite{CSIDH2018} is $n = 74$, with $\ell_1, \dots,
\ell_{73}$ the smallest odd primes and $\ell_{74} = 587$, and $m = 5$. The key space then has
$11^{74} \approx 2^{256}$ elements and one evaluation of the action costs at most
$5 \cdot 74 = 370$ isogeny steps.

\begin{remark}[Sampling is not exactly uniform]
\label{rem:sampling}
\Cref{def:ega} asks for sampling statistically close to uniform on $\Gact$, and the security
statements of \Cref{sec:security} assume uniform secrets. Sampling an exponent vector
uniformly from a box does not induce the uniform distribution on $\clgp$, because distinct
vectors can represent the same class and the multiplicities are not equal. The original
proposal treats the induced distribution as close enough to uniform on heuristic grounds. It
can be made rigorous for CSIDH-512, where Beullens, Kleinjung and
Vercauteren~\cite{CSIFiSh2019} computed the class group and a reduced relation lattice, which
allows sampling a genuinely uniform class and rewriting it as a short vector; this turns the
restricted action into an effective one in the sense of \Cref{def:ega} and removes the
heuristic. For the larger parameter sets no such computation is available and the heuristic
remains. We note this because it is an assumption of our theorems that is easy to overlook.
\end{remark}

\subsection{Validation}
\label{sec:validation}

$\Validate$ takes a byte string, decodes it to a candidate $A \in \Fp$, and must decide
whether $\Mont{A} : y^2 = x^3 + Ax^2 + x$ represents an element of $\Xset$, that is whether it
is supersingular with $\End_{\Fp}(\Mont{A}) = \ZZ[\frob]$. By
\Cref{prop:montgomery-normal-form} the second condition is automatic once the first holds and
$p \equiv 3 \pmod 8$, so it suffices to test supersingularity. The procedure
of~\cite{CSIDH2018} does this by sampling a random point $P \in \Mont{A}(\Fp)$, computing
a divisor $d$ of its order from the known factorisation of $p+1$, and concluding
$\#\Mont{A}(\Fp) = p+1$ as soon as $d > 4\sqrt{p}$, since the Hasse interval then contains
only one multiple of $d$. The test also rejects the case $A^2 = 4$ and any string that does
not decode to an element of $\Fp$.

Validation is mandatory on every element of $\Xset$ received from another party, both the
static public keys at registration and the ephemeral elements in each session, for the reason
given in \Cref{sec:adaptive}. Castryck et al.\ report that validation costs about
$2.1$ ms for CSIDH-512 on an Intel Skylake i5 at $3.5$ GHz~\cite{CSIDH2018}, roughly
one twentieth of a group action evaluation on the same platform, so it is not a significant
part of the cost.

\subsection{Parameter sets}
\label{sec:param-sets}

\Cref{tab:params} lists the parameter sets we consider. All of the sizes in the table are
determined by the parameter choices rather than measured: the public key is the big-endian
encoding of a Montgomery coefficient and so occupies $\lceil \log_2 p / 8 \rceil$ bytes, and
each protocol message consists of a $16$-byte nonce, a public key sized ciphertext and a
$16$-byte authentication tag.

\begin{table}[t]
\centering
\caption{Parameter sets for \MCSI{}. Sizes are derived from the parameter choices, not
measured. The key space column gives the number of exponent vectors, which is chosen to be at
least $\#\clgp \approx \sqrt{p}$. The last column gives the total number of bytes sent by both
parties in one session, namely twice the sum of a $16$-byte nonce, an $L$-byte ciphertext and
a $16$-byte tag.}
\label{tab:params}
\begin{tabular}{lrrrrr}
\toprule
Parameter set & $\lceil \log_2 p \rceil$ & Public key $L$ (B) & Key space & Session traffic (B) & Source \\
\midrule
\MCSI-512  & 511  & 64  & $11^{74} \approx 2^{256}$ & 192  & \cite{CSIDH2018} \\
\MCSI-1024 & 1024 & 128 & $\approx 2^{512}$         & 320  & \cite{CSIDH2018} \\
\MCSI-1792 & 1792 & 224 & $\approx 2^{896}$         & 512  & \cite{CSIDH2018} \\
\MCSI-4096 & 4096 & 512 & $\approx 2^{2048}$        & 1088 & \cite{SQALE2022} \\
\bottomrule
\end{tabular}
\end{table}

The last row follows the recommendation of Ch\'avez-Saab, Chi-Dom\'inguez, Jaques and
Rodr\'iguez-Henr\'iquez~\cite{SQALE2022}, who argue that a $4096$-bit prime is what NIST
category 1 requires for this family and who use the exponent bound $m = 1$, so that each
$e_i \in \{-1,0,1\}$. With $m = 1$ the covering condition $3^n \gtrsim \sqrt{p}$ forces
$n \gtrsim 2048 / \log_2 3 \approx 1292$, so roughly thirteen hundred small primes are needed;
the resulting action evaluation is correspondingly more expensive.

\subsection{Security levels, and why we do not claim one}
\label{sec:security-levels}

The classical security of \Cref{prob:vectorisation} for these parameters is settled: the best
known algorithms cost $O(p^{1/4})$ as discussed in \Cref{sec:classical-attacks}, which is
about $2^{128}$ operations for a $511$-bit prime. The quantum security is not settled, and the
spread of published estimates is wide enough that quoting a single number would misrepresent
the state of knowledge.

The original proposal~\cite[Table 1]{CSIDH2018} gives, for CSIDH-512, quantum estimates
ranging from $2^{29}$ to $2^{139}$ depending on which cost model for the hidden shift
algorithms is used, and settles on treating CSIDH-512 as a category 1 parameter set.
Bernstein, Lange, Martindale and Panny~\cite{BLMP2019} gave the first concrete quantum circuit
for evaluating the CSIDH action, which is the expensive inner step of any hidden shift attack.
Bonnetain and Schrottenloher~\cite{BonnetainSchrottenloher2020} and
Peikert~\cite{Peikert2020} then analysed the outer attack in detail, the former with a
tailored quantum algorithm and the latter with a simulation of Kuperberg's collimation sieve,
and both concluded that the quantum cost of attacking CSIDH-512 is substantially below what
category 1 requires. Biasse et al.~\cite{Biasse2020} studied the trade-off between quantum and
classical circuit size for the same attack. Ch\'avez-Saab et
al.~\cite{SQALE2022} accepted the downward revision and proposed the much larger parameters of
the last row of \Cref{tab:params}, while also observing that the classical component of the
best attacks may itself be infeasible, which pulls in the other direction.

We therefore make no claim of the form ``\MCSI{}-512 achieves NIST category 1''. What we can
say is the following, and it is all we say. Under \Cref{ass:stcdh}, the reductions of
\Cref{sec:security} are tight up to the stated factors, so the security of \MCSI{} at a given
parameter set is the security of the underlying group action at that parameter set, and
nothing in the protocol adds or removes hardness. Readers who want a specific security
category should choose the parameter set according to whichever of the analyses above they
find most convincing, and \Cref{tab:params} is arranged so that the cost of being conservative
is visible: moving from a $511$-bit to a $4096$-bit prime multiplies the traffic by less than
six and the public key by eight, while the number of isogeny steps grows by a much larger
factor.
\section{Implementation and measurements}
\label{sec:impl}

\subsection{The reference implementation}
\label{sec:impl-status}

We implemented \MCSI{} twice. One implementation is in Python and follows the algorithms of
\Cref{sec:the-protocol} statement by statement with no attention to speed. The other is in
portable C11 and is the one we measure. Both are released under the MIT licence, together
with the known answer vectors and the programs that produced every number in this section, as
an ancillary file with this paper and at
\url{https://github.com/FurkanCifci/mcsi-key-exchange}. Every measured table and figure
below names the makefile target that reproduces it.

Writing the protocol twice is what gives those numbers whatever weight they have. The two
implementations share no code, and each layer is compared against the other value by value
rather than only at the end: the C field arithmetic reproduces $283$ values computed
independently in Python, and the two full protocol implementations produce byte identical
known answer vectors, from the two static public keys through the two masking keys and both
protocol messages to the session key, with the static secrets, the ephemeral secrets and the
nonces all fixed. We wrote both, so agreement between them catches transcription errors and
arithmetic errors but not a shared misreading of the specification. Two of the checks do not
have that weakness: the hash layer is checked against the SHA-256 test vectors that NIST publishes
alongside the standard~\cite{FIPS180-4} and against the HMAC-SHA-256 vectors of RFC
4231~\cite{RFC4231}, and the curve and isogeny arithmetic is checked
on a parameter set small enough to count points on every curve by brute force, where
supersingularity of each image curve can be confirmed directly rather than assumed.

The C test suite runs $552$ checks. Besides those vectors it covers the algebraic
identities the field arithmetic must satisfy, the agreement of the doubling and differential
addition formulae with the ladder, the vanishing of $[p+1]P$ on the base curve, the
commutativity of the class group action and the fact that a negated key undoes it, the
rejection of random Montgomery coefficients and of the two singular values $A = \pm 2$ by
\textsc{Validate}, a chi squared test on the distribution of sampled private keys, agreement
of the session keys, independence of the keys of two sessions between the same pair, and the
rejection of every one-bit change to either protocol message, all $768$ positions of each, and
to every byte of an authenticated ciphertext. One group of checks is worth naming separately,
because it tests the defence of \Cref{sec:validation} in the setting that motivates it: a peer
holding a legitimate static key knows the masking key, so it can place whatever it likes
inside a correctly tagged message. The suite builds such messages and confirms that
\textsc{Resp} and \textsc{Fin} reject a curve that is not supersingular and an encoding that
is not below $p$, and that an honest session between the same parties still succeeds
afterwards. The Python suite runs $57$ further checks on the toy parameter set and on
CSIDH-512.

Three further things were done to the C implementation rather than to the protocol. It builds
without warnings under gcc 13.3 and clang 18.1 with the warning set listed in the repository,
and runs clean under the address and undefined behaviour sanitizers. A fuzzer, built with
those sanitizers on, fed $200\,000$ random and mutated inputs to each of the places where
bytes from the wire reach the code, namely the decoding of a public key, the authenticated
decryption, \textsc{Resp} and \textsc{Fin}, together with $2\,000$ random Montgomery
coefficients through \textsc{Validate} and $201$ through \textsc{PeerCtx}; nothing crashed
and nothing forged was accepted. Line coverage of the library over the test suite and the
fuzzer together is $99.7$ percent of $662$ lines, the two lines never reached being the
failure return after thirty-two unsuccessful validation attempts and the branch that skips a
point already at infinity, neither of which a valid input can produce.

Two things this implementation is not. It is not constant time, and \Cref{sec:measured-leak}
reports how far its running time follows the private key, measured rather than asserted. It
is not optimised: it contains no assembly, no specialised squaring, no batched inversion and
none of the isogeny evaluation strategies that the fast implementations in the literature use.
We wrote it to establish that the protocol of this paper runs, that both parties reach the
same key, that it rejects what it should reject, and to put a defensible first number on what
it costs. It must not be deployed, and the file \texttt{SECURITY.md} that accompanies it says
so with the reasons.

\subsection{Sizes}
\label{sec:sizes}

\Cref{tab:sizes} compares the static public key size and the per session traffic of \MCSI{}
against the two key encapsulation mechanisms that bracket the design space, the lattice based
ML-KEM standardised in FIPS 203 and the code based Classic McEliece.

Two warnings about this table. First, the primitives are not interchangeable. ML-KEM and
Classic McEliece are unauthenticated key encapsulation mechanisms; \MCSI{} is an interactive
key exchange with static keys that provides implicit mutual authentication and that requires
those static keys to be distributed authentically. A protocol built from ML-KEM and offering
the same authentication service would need a signature or a second KEM operation on top, and
its sizes would grow accordingly. The table therefore compares sizes, not security services.
Second, the table says nothing about speed, and on speed the ordering is the reverse of the
ordering by size: a single class group action evaluation costs tens of milliseconds on the
platform reported in \Cref{sec:third-party}, which is far more than a lattice based key
encapsulation costs on comparable hardware. Anyone reading \Cref{tab:sizes} as an argument
for \MCSI{} should read \Cref{sec:third-party} immediately afterwards.

\begin{table}[t]
\centering
\caption{Static public key and per session traffic, in bytes. \MCSI{} rows are derived from
the parameter choices as described in \Cref{sec:param-sets}. ML-KEM figures are the
encapsulation key and ciphertext sizes of FIPS 203, Table 3. Classic McEliece figures are
computed from the parameters and encoding rules of the round 4 specification, namely
$mt\lceil (n-mt)/8 \rceil$ bytes for the public key and $\lceil mt/8 \rceil$ bytes for the
ciphertext. Traffic is the total number of bytes exchanged in one session, counting both
directions. For \MCSI{} the static public keys are distributed out of band and are not part
of session traffic, whereas for the two key encapsulation mechanisms the encapsulation key is
transmitted during the session and is counted; this favours \MCSI{} in the traffic column and
should be kept in mind.}
\label{tab:sizes}
\begin{tabular}{llrrl}
\toprule
Scheme & Parameter set & Public key (B) & Session traffic (B) & Source \\
\midrule
\MCSI{}            & \MCSI-512          & 64        & 192  & derived, \cite{CSIDH2018} \\
\MCSI{}            & \MCSI-1024         & 128       & 320  & derived, \cite{CSIDH2018} \\
\MCSI{}            & \MCSI-4096         & 512       & 1088 & derived, \cite{SQALE2022} \\
\midrule
ML-KEM             & ML-KEM-512         & 800       & 1568 & \cite{FIPS203} \\
ML-KEM             & ML-KEM-768         & 1184      & 2272 & \cite{FIPS203} \\
ML-KEM             & ML-KEM-1024        & 1568      & 3136 & \cite{FIPS203} \\
\midrule
Classic McEliece   & mceliece348864     & 261\,120  & 261\,216 & derived, \cite{ClassicMcEliece2022} \\
Classic McEliece   & mceliece6688128    & 1\,044\,992 & 1\,045\,200 & derived, \cite{ClassicMcEliece2022} \\
\bottomrule
\end{tabular}
\end{table}

\Cref{fig:sizes} shows the public key column on a logarithmic scale, which is the only scale
on which the three families can be drawn together.

\begin{figure}[t]
\centering
\begin{tikzpicture}
\begin{axis}[
  width=0.86\textwidth,
  height=6.2cm,
  ybar,
  bar width=15pt,
  ymode=log,
  log basis y={10},
  ymin=30, ymax=100000000,
  ylabel={Static public key (bytes, log scale)},
  symbolic x coords={M512,M1024,M4096,K512,K768,K1024,CM1,CM2},
  xtick=data,
  xticklabels={%
    \ensuremath{\mathrm{MCSI}}-512, \ensuremath{\mathrm{MCSI}}-1024, \ensuremath{\mathrm{MCSI}}-4096,
    ML-KEM-512, ML-KEM-768, ML-KEM-1024,
    mceliece348864, mceliece6688128},
  x tick label style={rotate=35, anchor=east, font=\footnotesize},
  y tick label style={font=\footnotesize},
  ylabel style={font=\small},
  nodes near coords,
  point meta=explicit symbolic,
  nodes near coords style={font=\tiny, rotate=90, anchor=west},
  enlarge x limits=0.08,
  grid=major,
  grid style={gray!25},
]
\addplot[fill=gray!35, draw=black!70] coordinates {
  (M512,64)  [64]
  (M1024,128) [128]
  (M4096,512) [512]
  (K512,800) [800]
  (K768,1184) [1\,184]
  (K1024,1568) [1\,568]
  (CM1,261120) [261\,120]
  (CM2,1044992) [1\,044\,992]
};
\end{axis}
\end{tikzpicture}
\caption{Static public key sizes from \Cref{tab:sizes}. The vertical axis is logarithmic. The
figure shows sizes only; it is not a comparison of speed, of security level or of the
functionality the schemes provide.}
\label{fig:sizes}
\end{figure}

\subsection{What the protocol costs}
\label{sec:measured-cost}

We report two kinds of cost. Counts of operations in $\Fp$ do not depend on the machine, the
compiler or how busy the host is, so they are the figures to compare against another
implementation of the same protocol. Wall clock timings depend on all three, so we give the
machine, the compiler and the flags with them and treat them as an order of magnitude rather
than a benchmark.

\Cref{tab:opcount} gives the operation counts, as medians over $25$ runs with freshly sampled
keys. The column $M$ counts multiplications in $\Fp$, squarings included and the
multiplications performed inside the exponentiations included as well. The columns $I$ and
$L$ count inversions and Legendre symbols, each of which our implementation performs as one
exponentiation, costing $767$ and $766$ multiplications respectively; together they account
for under two percent of $M$.

\begin{table}[t]
\centering
\caption{Operations in $\Fp$ per step of \MCSI{}-512, medians over $25$ runs with fresh keys.
$M$ counts multiplications, squarings and the multiplications inside the exponentiations
included. $I$ counts inversions and $L$ Legendre symbols. The counts for anything containing
a group action vary between runs, because the evaluation samples random points and discards
those whose order does not let it act; the spread over the $25$ runs is about $\pm 15$ percent
for one evaluation. Reproduced by \texttt{make opcount}.}
\label{tab:opcount}
\begin{tabular}{lllrrr}
\toprule
Step & Group actions & Validations & $M$ & $I$ & $L$ \\
\midrule
\textsc{Validate}          & 0 & 1 &   270\,200 & 0 & 0 \\
one evaluation of the action & 1 & 0 &   690\,000 & 1 & 14.5 \\
\midrule
\textsc{PeerCtx}           & 1 & 1 &   885\,700 & 1 & 13.6 \\
\textsc{Init}              & 1 & 0 &   698\,000 & 1 & 15.2 \\
\textsc{Resp}              & 4 & 1 & 2\,958\,200 & 4 & 57.3 \\
\textsc{Fin}               & 3 & 1 & 2\,227\,100 & 3 & 43.6 \\
\midrule
one session, both parties  & 8 & 2 & 5\,836\,700 & 8 & 111.6 \\
\bottomrule
\end{tabular}
\end{table}

The counts follow the structure of \Cref{alg:peerctx,alg:session} exactly, which is the first thing they are
good for. \textsc{Init} is one evaluation of the action. \textsc{Resp} validates the received
element and then evaluates the action four times, for its own ephemeral element and for
$Z_{\mathrm{ee}}$, $Z_{\mathrm{es}}$ and $Z_{\mathrm{se}}$. \textsc{Fin} validates and
evaluates three times. \textsc{PeerCtx} validates the peer's static key and evaluates once,
and it runs once per peer rather than once per session. A session between two parties who
already hold each other's contexts therefore costs eight evaluations of the action and two
validations in total, which is the factor of two over unauthenticated Diffie--Hellman that
\Cref{sec:cost} predicts.

\Cref{tab:timing} gives wall clock timings. The machine is an Intel Xeon at a nominal
$2.80$ GHz inside a shared virtual machine, running Linux 6.18 and glibc 2.39, and the
compiler is gcc 13.3 at \texttt{-O2 -std=c11}. Each row is the median of $41$ samples, with
the first and third quartiles beside it.

\begin{table}[t]
\centering
\caption{Wall clock cost of \MCSI{}-512 with the reference implementation. Median, first and
third quartile over $41$ samples, on an Intel Xeon at a nominal $2.80$ GHz in a shared virtual
machine, gcc 13.3 at \texttt{-O2}. This is plain portable C with no assembly and no attempt at
a fast implementation, on a machine we do not have exclusive use of. It is a first
measurement of this protocol, not a competitive benchmark, and \Cref{sec:third-party} gives
the figures other authors report for the underlying action on hardware of their own.
Reproduced by \texttt{make bench}.}
\label{tab:timing}
\begin{tabular}{lrrr}
\toprule
Step & Median & $q_1$ & $q_3$ \\
\midrule
one multiplication in $\Fp$        & $0.18$ $\mu$s & $0.17$ & $0.19$ \\
one inversion or Legendre symbol   & $136$ $\mu$s  & $134$  & $139$ \\
one $512$-bit Montgomery ladder    & $1.16$ ms & $1.14$ & $1.18$ \\
one $587$-isogeny                  & $0.85$ ms & $0.84$ & $0.87$ \\
\midrule
key generation, one action         & $130$ ms & $121$ & $134$ \\
\textsc{Validate}                  & $53$ ms  & $52$  & $54$ \\
\textsc{PeerCtx}, once per peer    & $196$ ms & $195$ & $200$ \\
\textsc{Init}                      & $127$ ms & $119$ & $134$ \\
\textsc{Resp}                      & $557$ ms & $537$ & $586$ \\
\textsc{Fin}                       & $445$ ms & $435$ & $461$ \\
one session, both parties          & $1132$ ms & $1100$ & $1160$ \\
\midrule
authenticated encryption of the $64$-byte body & $4.2$ $\mu$s & $4.2$ & $4.3$ \\
\bottomrule
\end{tabular}
\end{table}

\begin{figure}[t]
\centering
\begin{tikzpicture}
\begin{axis}[
  width=0.80\textwidth,
  height=6.4cm,
  xbar,
  bar width=13pt,
  xmin=0, xmax=1620,
  xlabel={median wall clock time (ms)},
  symbolic y coords={AEAD,Validate,Init,KeyGen,PeerCtx,Fin,Resp,Session},
  ytick=data,
  y dir=reverse,
  yticklabels={%
    AEAD on the body,
    \textsc{Validate},
    \textsc{Init},
    key generation,
    \textsc{PeerCtx},
    \textsc{Fin},
    \textsc{Resp},
    whole session},
  y tick label style={font=\footnotesize},
  x tick label style={font=\footnotesize},
  xlabel style={font=\small},
  nodes near coords,
  point meta=explicit symbolic,
  nodes near coords style={font=\scriptsize, anchor=west},
  enlarge y limits=0.09,
  grid=major,
  grid style={gray!25},
]
\addplot[fill=gray!35, draw=black!70] coordinates {
  (0.0042,AEAD)   [0.004 ms, no group action]
  (53,Validate)   [53 ms, 1 validation]
  (127,Init)      [127 ms, 1 action]
  (130,KeyGen)    [130 ms, 1 action]
  (196,PeerCtx)   [196 ms, 1 action, 1 validation]
  (445,Fin)       [445 ms, 3 actions, 1 validation]
  (557,Resp)      [557 ms, 4 actions, 1 validation]
  (1132,Session)  [1132 ms, 8 actions, 2 validations]
};
\end{axis}
\end{tikzpicture}
\caption{Where the time goes in \MCSI{}-512, from the medians of
\Cref{tab:timing}. The label on each bar gives the measured time and the
number of class group action evaluations and validations that step performs.
The cost of the protocol is the count of group actions and nothing else: the
whole symmetric layer, the top bar, is four microseconds against a session of
more than a second, and is not visible at this scale. \textsc{PeerCtx} is
paid once per peer rather than once per session. Reproduced by \texttt{make bench}.}
\label{fig:cost}
\end{figure}

The two tables agree with each other, which is the second thing the counts are good for:
$690\,000$ multiplications at $0.18$ $\mu$s each is $124$ ms, against $130$ ms measured for
one evaluation of the action, and the remainder is the additions and the loop overhead. The
same source compiled with clang 18.1 at \texttt{-O2} runs about a quarter faster, $848$ ms for
a session, and gcc at \texttt{-O3 -march=native} gains about five percent, $1076$ ms. We
report the gcc \texttt{-O2} figures because they are the ones the default build produces.

\subsection{Memory}
\label{sec:memory}

The report this paper grew out of claimed a memory figure that no measurement
supported, so we are careful here to say which numbers are exact, which are
measured, and which depend on the machine.

The objects a party has to keep are exact and follow from the parameter set. A private key
is $74$ signed exponents, so $74$ bytes; a public key is one field element, so $64$; a key
pair is $144$ bytes with the padding the compiler inserts. The cached context of
\Cref{alg:peerctx}, which holds the static secret, both static public keys, the shared static
curve and the two masking keys, is $304$ bytes and is kept once per peer. The per session
state is $170$ bytes and lives only until the session finishes. Nothing else persists.

The working memory is dominated by one thing, the scratch space of the isogeny routine. It
holds the $s = (\ell - 1)/2$ multiples of the kernel point and the two arrays of running
products used to avoid an inversion, so at the largest degree $\ell = 587$ it needs
$293 \cdot 128$ bytes for the multiples and $2 \cdot 294 \cdot 64$ for the products,
$75\,136$ bytes in all. That is the price of evaluating V\'elu's formulae without an
inversion, and it is fixed by the parameter set rather than by the implementation.

The measured figures agree with the design, and \texttt{make memory} reproduces them. The
library calls no allocator at all, so the heap use of a complete session is zero bytes; the deepest the stack goes during a session,
measured with valgrind's massif, is $7\,680$ bytes. The compiled library is $25\,897$ bytes
of code. A party therefore needs on the order of $100$ kilobytes in total, almost all of it
the isogeny scratch, and needs no dynamic memory, which is the property that matters if this
were ever to run somewhere small. Peak resident set for the whole process is about
$2.4$ megabytes, but that is the C library and the loader rather than \MCSI{}, and we report
it only because a reader will otherwise wonder.

\subsection{Third party timings for the underlying group action}
\label{sec:third-party}

The cost of \MCSI{} is four evaluations of the class group action per party per session, plus
one cached evaluation per peer, plus validation of each received element. Published timings
for the action itself therefore bound what \MCSI{} can achieve. We quote them with their
platforms attached, and we emphasise that figures measured on different processors by
different authors with different compilers are not directly comparable with each other.

Castryck et al.~\cite{CSIDH2018} report $40.8$ ms for one evaluation of the
CSIDH-512 action and $2.1$ ms for validation, measured on an Intel Skylake i5 clocked at
$3.5$ GHz with a proof of concept implementation that is not constant time. Later work has
improved on this and also, in the constant time setting, paid for the improvement: Meyer and
Reith~\cite{MeyerReith2018} gave faster formulae, Meyer, Campos and
Reith~\cite{MeyerCamposReith2019} and Onuki et al.~\cite{OnukiEtAl2019} gave constant time
algorithms, and CTIDH~\cite{CTIDH2021} introduced a different key space with a substantially
faster constant time evaluation. At the much larger parameters of the last row of
\Cref{tab:params}, Ch\'avez-Saab et al.~\cite{SQALE2022} report their own measurements on an
Intel Core i7-6700K at $4.00$ GHz.

Taking the CSIDH-512 figure at face value, a session of \MCSI{}-512 costs at least four
evaluations, or on the order of $0.16$ s per party on that platform, plus validation. Our own
implementation is about three times slower per evaluation than the figure Castryck et al.\
report, $130$ ms against $40.8$ ms, on a slower and shared machine and with none of the
optimisations their proof of concept already contains. We take the gap as the expected
distance between a first portable implementation and a tuned one, and we do not read it as
evidence about the protocol. Either way the conclusion is the same, and it does not depend on
which of the two numbers is used: \MCSI{} is far slower than a lattice based key
encapsulation, and no arrangement of the protocol changes that. The case for a scheme in this
family rests on the size column of \Cref{tab:sizes} and on the algebraic diversity of the
underlying assumption, not on speed, and we state it that way rather than the other way
round.

\subsection{The symmetric layer costs nothing by comparison}
\label{sec:symmetric-cost}

Each party performs two hash calls and two authenticated encryption operations on payloads of
$L$ bytes, where $L$ is at most $512$. \Cref{tab:timing} puts a number on what that costs:
$4.2$ $\mu$s for the authenticated encryption of a $64$-byte body, against $1132$ ms for
the session that contains four of them. The whole symmetric layer is therefore about one part
in $10^{5}$ of a session, and the masking is free in any sense that matters. The cost of
\MCSI{} relative to an unauthenticated key exchange over the same group action is the factor
of two in action evaluations identified in \Cref{sec:cost}, and essentially nothing else.

The same measurement gives the size of the effect that motivates the design in the first
place. Rejecting a message whose tag does not verify takes $2.4$ $\mu$s, because the
implementation stops at the tag and never generates the keystream, against $130$ ms for the
evaluation of the group action that a protocol without the masking layer would have performed
before it could tell. That is a factor of about $50\,000$, and it is what an unauthenticated
responder spends on behalf of anybody who can reach it.
\subsection{What the implementation shows about side channels}
\label{sec:measured-leak}

\Cref{sec:side-channels} states that we make no claim of resistance to timing attacks. With
an implementation in hand we can say something sharper than that, and we prefer a measurement
to a disclaimer.

We audited the code with the technique Langley published as
ctgrind~\cite{Langley2010ctgrind}: secret buffers are handed to valgrind's memcheck as
uninitialised memory, so that the tool reports every conditional branch, every memory index
and every division whose outcome depends on them. A report is an actual secret dependent
control flow found by the tool, not a judgement of ours. A small number of values are derived
from secrets but are public by design, the clearest being whether an authenticated decryption
succeeded, since the peer observes the answer either way; those are marked in the source and
excluded from the report. \Cref{tab:ctaudit} summarises what the audit found, and
\texttt{make ctcheck} reproduces it.

\begin{table}[t]
\centering
\caption{Result of the constant time audit of the reference implementation, using valgrind
3.22 on the build of \Cref{tab:timing}. The audit is reproducible from the repository.}
\label{tab:ctaudit}
\begin{tabular}{ll}
\toprule
Layer & Secret dependent control flow found \\
\midrule
arithmetic in $\Fp$ & none \\
SHA-256, HMAC and the authenticated encryption & none beyond the accept or reject decision \\
Montgomery ladder & the bit length of the scalar \\
evaluation of the class group action & the exponent vector, at three branches \\
\bottomrule
\end{tabular}
\end{table}

The first two rows were not free. The Montgomery multiplication originally ended with a
branch on the top word of its accumulator, and the Legendre symbol returned early when its
argument was zero; the first is dead code for a prime below $2^{511}$ and is now removed under
a compile time assertion on the size of the prime, and the second is now folded in with a
mask. The tag comparison in the authenticated decryption accumulates the difference of all
sixteen bytes before it is looked at, so it does not stop at the first mismatch.

The last row is not an oversight in the coding, and no amount of care with branches would
remove it. The evaluation of \Cref{sec:effective-action} performs a number of isogeny steps
equal to the one-norm of the key and decides which isogeny to take from the sign of each
exponent, so its running time follows the key by construction. This is exactly what the
constant time algorithms cited in \Cref{sec:side-channels} are for, and we did not implement
any of them.

\Cref{fig:leak} shows how much that costs in practice. With keys whose exponents all have the
same magnitude the time is a straight line in the one-norm, from $0.13$ ms for the all-zero
key to $256$ ms for the largest key the key space allows, a factor of about two thousand. On
$200$ keys drawn the way \textsc{KeyGen} draws them the one-norm and the running time have a
Pearson correlation of $0.90$ and a Spearman correlation of $0.90$, and a least squares fit
gives $0.68$ ms per isogeny step with a residual standard deviation of $4.5$ ms.

The last two numbers are the ones with a security reading. The one-norm of a uniform key in
$\{-5,\dots,5\}^{74}$ has standard deviation $13.77$, and our sample of $200$ keys gave
$13.84$. A single timing observation locates it to within a standard deviation of
$4.5 / 0.68$, that is $6.6$ steps, so one timing of one evaluation recovers about one bit
about the one-norm of the key. Host noise can only weaken an observed correlation, so this is
a lower bound on what a quieter machine would reveal, and an adversary able to separate the
individual isogeny steps learns much more than the one-norm.

\begin{figure}[t]
\centering
\begin{tikzpicture}
\begin{groupplot}[
  group style={group size=2 by 1, horizontal sep=1.6cm},
  width=0.47\textwidth,
  height=5.6cm,
  xlabel style={font=\small},
  ylabel style={font=\small},
  tick label style={font=\footnotesize},
  grid=major,
  grid style={gray!25},
]
\nextgroupplot[
  xlabel={one-norm $\sum_i |e_i|$},
  ylabel={time of one evaluation (ms)},
  xmin=-20, xmax=390, ymin=-15, ymax=280,
  title={\footnotesize all exponents of equal magnitude},
]
\addplot[black, mark=*, mark size=1.6pt] coordinates { (0,0.13) (74,52.17) (148,102.81) (222,155.09) (296,204.56) (370,255.56) };

\nextgroupplot[
  xlabel={one-norm $\sum_i |e_i|$},
  ylabel={time of one evaluation (ms)},
  xmin=162, xmax=235, ymin=95, ymax=160,
  title={\footnotesize keys as \textsc{KeyGen} samples them},
]
\addplot[only marks, mark=*, mark size=0.7pt, gray!70] coordinates {
  (165,100.87) (167,106.83) (168,106.03) (172,106.80) (175,102.62) (175,103.04) (175,106.36) (175,108.66)
  (177,114.06) (178,113.18) (180,103.70) (180,111.12) (180,113.22) (181,111.57) (181,119.62) (182,107.25)
  (182,110.03) (183,111.94) (183,112.86) (183,114.19) (184,112.10) (184,113.17) (184,117.21) (185,108.44)
  (185,108.88) (185,109.74) (186,109.06) (186,115.24) (186,119.55) (187,119.32) (187,122.46) (188,106.26)
  (188,110.52) (188,113.80) (188,115.27) (188,116.33) (188,118.31) (188,120.31) (188,121.76) (189,112.49)
  (189,112.50) (189,112.60) (189,114.58) (189,117.59) (190,109.13) (190,115.13) (190,118.69) (190,121.60)
  (190,124.84) (191,112.63) (191,115.45) (191,116.30) (191,121.76) (191,121.90) (192,110.14) (192,114.20)
  (192,115.30) (192,117.03) (192,118.91) (192,118.92) (192,122.38) (192,124.80) (193,112.42) (193,113.48)
  (193,122.86) (193,124.79) (194,111.89) (194,115.46) (194,116.34) (195,118.65) (195,123.42) (196,116.98)
  (196,118.14) (196,120.04) (196,124.17) (197,113.04) (197,122.27) (197,123.25) (197,123.42) (198,118.37)
  (198,119.65) (198,120.63) (198,121.04) (198,122.15) (198,124.40) (199,120.98) (199,122.00) (199,127.60)
  (200,115.55) (200,117.80) (200,120.91) (200,122.96) (200,123.04) (200,124.25) (200,128.92) (200,133.18)
  (201,122.17) (201,126.66) (201,128.46) (202,118.26) (202,119.60) (202,121.05) (202,123.71) (202,124.71)
  (202,125.52) (202,135.90) (203,118.74) (203,120.06) (203,120.95) (203,126.04) (203,129.57) (203,132.28)
  (204,125.07) (204,125.83) (204,128.91) (204,132.54) (205,127.92) (206,123.12) (206,123.13) (206,125.08)
  (206,127.08) (206,127.86) (206,132.42) (206,134.80) (206,136.70) (207,123.27) (207,127.69) (207,128.28)
  (207,131.32) (207,132.83) (207,134.78) (208,112.63) (208,132.96) (208,133.35) (209,125.99) (209,126.48)
  (209,127.68) (209,128.13) (209,129.01) (209,130.09) (209,131.66) (209,133.78) (209,134.40) (210,126.05)
  (210,131.90) (210,134.30) (210,135.04) (211,122.99) (211,128.22) (211,128.88) (211,132.52) (211,138.35)
  (212,131.22) (212,131.38) (212,132.46) (212,134.16) (212,137.70) (212,145.19) (213,128.00) (213,128.16)
  (213,133.45) (213,134.73) (213,139.07) (214,128.37) (214,131.93) (214,134.15) (214,138.63) (215,131.03)
  (215,132.16) (216,132.06) (216,132.25) (216,132.60) (217,128.63) (217,137.30) (217,143.37) (218,132.81)
  (219,130.93) (219,132.17) (219,132.50) (219,148.09) (220,133.10) (220,136.56) (220,136.63) (220,142.60)
  (221,137.82) (222,141.44) (223,139.00) (224,135.83) (224,144.74) (225,139.21) (225,140.55) (226,138.13)
  (226,139.83) (226,140.91) (226,144.05) (227,137.19) (228,146.68) (229,145.60) (230,148.20) (232,153.35)
};
\addplot[black, thick] coordinates { (162,97.78) (235,147.59) };
\end{groupplot}
\end{tikzpicture}
\caption{Running time of one evaluation of the class group action against the one-norm of the
private key, on the machine of \Cref{tab:timing}. Each point is the fastest of three runs of
the same key, which removes most of the noise the host contributes. Left: six keys whose
exponents all have the same magnitude, from the all-zero key to the largest key the key space
allows. Right: $200$ keys sampled the way \textsc{KeyGen} samples them, with the least squares
line through them. The measurement is of the reference implementation, which is not constant
time; a constant time evaluation would produce a horizontal line in both panels. Reproduced
by \texttt{make leak}.}
\label{fig:leak}
\end{figure}

The correlation of \Cref{fig:leak} says that the timing carries information about the key.
It does not say how many measurements an adversary needs, and that is the question a
practitioner asks. To answer it we ran the leakage assessment that the side channel
literature uses: time two classes of inputs, apply Welch's $t$-test to the two sets of
timings, and read a difference as detected when $|t|$ passes $4.5$. The classes are
interleaved at random rather than measured in two blocks, so a machine that slows down halfway
through the run cannot masquerade as a leak. \Cref{tab:tvla} gives the outcome and
\Cref{fig:tvla} shows how $|t|$ grows with the number of measurements.

\begin{table}[t]
\centering
\caption{Leakage assessment of \MCSI{}-512 on the machine of \Cref{tab:timing}. Class A is
the fixed input and class B the varying one; the two are interleaved at random. A difference
is reported when $|t|$ over all the samples passes $4.5$. The last column is the ratio of the
two sample variances, which catches a secret that changes the spread of the timings without
moving their centre. Reproduced by \texttt{make tvla}.}
\label{tab:tvla}
\begin{tabular}{llrrrl}
\toprule
Quantity & Two classes & $n$ & $\Delta$ median & $t$ & Verdict \\
\midrule
AEAD encrypt   & fixed against random key & $200\,000$ & $0.0$ ns & $0.77$ & no difference \\
AEAD decrypt   & fixed against random key & $200\,000$ & $0.0$ ns & $0.06$ & no difference \\
AEAD decrypt   & accept against reject    & $200\,000$ & $-904$ ns & $249$ & difference \\
\textsc{Validate} & two valid public keys & $400$ & $-0.03$ ms & $-0.12$ & no difference \\
group action   & fixed against random key & $400$ & $-4.1$ ms & $11.3$ & difference \\
\midrule
group action   & the same key twice       & $200$ & $+0.1$ ms & $-1.4$ & no difference \\
group action   & two keys, one-norm $180$ each & $200$ & $+0.4$ ms & $-0.8$ & no difference \\
group action   & one-norm $180$ against $182$ & $200$ & $+2.0$ ms & $-4.1$ & no difference \\
group action   & one-norm $180$ against $185$ & $200$ & $+3.2$ ms & $-12.5$ & difference \\
group action   & one-norm $180$ against $190$ & $200$ & $+5.1$ ms & $-17.5$ & difference \\
group action   & one-norm $180$ against $200$ & $200$ & $+8.3$ ms & $-18.5$ & difference \\
group action   & one-norm $180$ against $220$ & $200$ & $+23.7$ ms & $-46.4$ & difference \\
\bottomrule
\end{tabular}
\end{table}

Four things in that table are worth drawing out.

The symmetric layer shows nothing. After two hundred thousand measurements per class the
difference between a fixed key and random keys is zero at the resolution of the clock and
$|t|$ is below one, for both directions of the authenticated encryption. This is the
statistical counterpart of the audit result, and the two agree.

The accept or reject row is the control. The rejection path is about $900$ nanoseconds
faster, because it returns before generating the keystream, and the test sees that
immediately. We are not treating it as a finding: which of the two happened is exactly what
the protocol tells the peer in its next message. Its value here is that it demonstrates the
test can find a difference that is present, which is what makes the two rows above it worth
reading.

\textsc{Validate} shows nothing between two valid public keys. That is expected rather than
reassuring, since the routine walks the same list of $74$ primes either way, and it says
nothing about the timing of a rejection.

The last block is the one with teeth. Two keys, each fixed and used for every measurement in
its class, are told apart from $200$ timings once their one-norms differ by five out of a
possible $370$. A difference of two sits right at the threshold. Two keys with the same
one-norm are not separated at this sample size, which is consistent with the model of
\Cref{fig:leak} in which the running time follows the number of isogeny steps, though the
cost also depends on which primes carry the weight, since a $587$-isogeny costs two hundred
times a $3$-isogeny.

One methodological note, because it would otherwise mislead. The standard fixed against
random test, the fifth row, is a weak instrument here: it compares means, and a fixed key
drawn at random has a typical one-norm, so its mean sits near the mean of the random class.
What separates the two classes is the spread, which is why the variance ratio is close to
seven while $t$ is only $11.3$. An implementation of a group action should be assessed with
pairs of fixed keys, as in the last block, and not only with the usual fixed against random
pair.

\begin{figure}[t]
\centering
\definecolor{mcsiblue}{RGB}{0,114,178}
\definecolor{mcsiverm}{RGB}{213,94,0}
\definecolor{mcsigreen}{RGB}{0,158,115}
\definecolor{mcsipurple}{RGB}{204,121,167}
\begin{tikzpicture}
\begin{groupplot}[
  group style={group size=2 by 1, horizontal sep=1.5cm},
  width=0.46\textwidth,
  height=5.8cm,
  xmode=log,
  ymin=-1.5, ymax=52,
  ytick={0,4.5,10,20,30,40,50},
  yticklabels={$0$,$4.5$,$10$,$20$,$30$,$40$,$50$},
  ylabel={$|t|$},
  xlabel={measurements per class},
  xlabel style={font=\small},
  ylabel style={font=\small},
  tick label style={font=\footnotesize},
  legend style={font=\footnotesize, at={(0.5,-0.30)}, anchor=north,
                draw=none, fill=none, legend columns=1, row sep=-2pt},
  legend cell align=left,
  grid=major,
  grid style={gray!22},
]
\nextgroupplot[
  xmin=2800, xmax=240000,
  title={\footnotesize symmetric layer},
]
\fill[black!6] (axis cs:2800,-1.5) rectangle (axis cs:240000,4.5);
\addplot[mcsiblue, solid, line width=0.9pt, mark=*, mark size=1.9pt, mark options={solid, fill=mcsiblue}] coordinates { (3333,0.845) (6666,0.745) (9999,0.517) (13332,0.894) (16665,0.279) (23331,0.339) (29997,0.189) (36663,0.230) (46662,0.627) (59994,0.718) (76659,0.985) (96657,1.121) (123321,1.204) (156651,1.530) (199980,0.770) };
\addlegendentry{encryption, fixed against random key}
\addplot[mcsiverm, densely dashed, line width=0.9pt, mark=square*, mark size=1.9pt, mark options={solid, fill=mcsiverm}] coordinates { (3333,1.137) (6666,0.190) (9999,1.344) (13332,1.071) (16665,0.326) (23331,0.008) (29997,0.032) (36663,0.858) (46662,0.330) (59994,0.166) (76659,0.173) (96657,0.241) (123321,0.076) (156651,0.149) (199980,0.063) };
\addlegendentry{decryption, fixed against random key}
\draw[black!65, densely dotted, line width=0.8pt]
  (axis cs:2800,4.5) -- (axis cs:240000,4.5);

\nextgroupplot[
  xmin=2.7, xmax=240,
  title={\footnotesize class group action},
]
\fill[black!6] (axis cs:2.7,-1.5) rectangle (axis cs:240,4.5);
\addplot[mcsiblue, solid, line width=0.9pt, mark=*, mark size=1.9pt, mark options={solid, fill=mcsiblue}] coordinates { (3,1.464) (6,0.219) (9,0.006) (12,0.167) (18,0.604) (21,0.027) (27,0.788) (36,0.321) (45,0.905) (57,0.287) (75,0.803) (96,1.136) (120,1.331) (156,1.150) (198,1.318) };
\addlegendentry{the same key twice}
\addplot[mcsiverm, densely dashed, line width=0.9pt, mark=square*, mark size=1.9pt, mark options={solid, fill=mcsiverm}] coordinates { (3,0.902) (6,1.231) (9,0.248) (12,0.292) (18,0.848) (21,0.373) (27,1.632) (36,2.165) (45,2.034) (57,2.565) (75,0.825) (96,1.454) (120,2.150) (156,2.131) (198,0.824) };
\addlegendentry{two keys, one-norm 180 each}
\addplot[mcsigreen, solid, line width=0.9pt, mark=triangle*, mark size=1.9pt, mark options={solid, fill=mcsigreen}] coordinates { (3,0.249) (6,2.440) (9,3.620) (12,3.391) (18,3.527) (21,4.321) (27,4.966) (36,5.717) (45,6.927) (57,8.121) (75,8.787) (96,8.482) (120,9.512) (156,10.848) (198,12.490) };
\addlegendentry{one-norm 180 against 185}
\addplot[mcsipurple, densely dashdotted, line width=0.9pt, mark=diamond*, mark size=1.9pt, mark options={solid, fill=mcsipurple}] coordinates { (3,10.609) (6,16.869) (9,23.644) (12,26.579) (18,23.579) (21,27.066) (27,31.669) (36,38.833) (45,34.826) (57,40.867) (75,46.916) (96,41.356) (120,42.002) (156,40.009) (198,46.059) };
\addlegendentry{one-norm 180 against 220}
\draw[black!65, densely dotted, line width=0.8pt]
  (axis cs:2.7,4.5) -- (axis cs:240,4.5);
\end{groupplot}
\end{tikzpicture}
\caption{Welch's $t$ statistic against the number of measurements per class,
on the machine of \Cref{tab:timing}, with the same vertical scale in both
panels and the detection threshold of $4.5$ marked. The horizontal axis is
logarithmic and its range differs between the panels, which is the point of
the figure: the symmetric layer is still under the threshold after two hundred
thousand measurements per class, while two keys of the class group action
cross it after a few dozen. The shaded band is the region in which no
difference is detected. Each curve is drawn at sample counts spaced evenly
along the logarithmic axis. The control, an authenticated decryption timed
with a valid tag against a corrupted one, is left out because it reaches
$|t| = 249$ and would leave the panel; it appears in \Cref{tab:tvla} and shows
the test detecting a difference that is really there. Reproduced by
\texttt{make tvla}.}
\label{fig:tvla}
\end{figure}

For \MCSI{} specifically this is worse than it would be for a scheme that used its static key
once. The static secret is used in \textsc{PeerCtx} when the context for a peer is built, and
then again in every session: \textsc{Resp} evaluates the action with it to obtain
$Z_{\mathrm{es}}$ and \textsc{Fin} evaluates the action with it to obtain $Z_{\mathrm{se}}$.
An adversary who can time many sessions with the same peer is timing the same static key over
and over, and averaging removes the noise that limits a single observation. Any deployment
therefore needs a constant time evaluation of the action, not as a refinement but as a
precondition.

\section{Related work}
\label{sec:related}

\subsection{Commutative isogeny actions}

The idea of using the action of an ideal class group on a set of elliptic curves as a
cryptographic primitive is due independently to Couveignes~\cite{Couveignes2006}, who
formulated it abstractly as a hard homogeneous space, and to Rostovtsev and
Stolbunov~\cite{RostovtsevStolbunov2006}. Both worked with ordinary curves, where the action
is available but slow, since the small primes that make the action cheap must divide the order
of the curve and cannot be chosen freely. Castryck, Lange, Martindale, Panny and
Renes~\cite{CSIDH2018} moved the construction to supersingular curves over $\Fp$, where
\Cref{prop:endo-commutative} supplies a commutative endomorphism ring and where the prime can
be chosen so that $p+1$ is smooth. The result, CSIDH, is the group action we instantiate.

Subsequent work has refined the same action rather than replaced it. CSURF~\cite{CSURF2020}
works on the surface of the volcano and adds a $2$-isogeny direction.
CSI-FiSh~\cite{CSIFiSh2019} computed the class group and a reduced relation lattice for the
CSIDH-512 parameters, which converts the restricted action into a genuine effective group
action and enables signatures; SCALLOP~\cite{SCALLOP2023} attacks the scaling problem that
this computation runs into at larger parameters. On the implementation
side~\cite{MeyerReith2018,MeyerCamposReith2019,OnukiEtAl2019,CTIDH2021} give progressively
faster and constant time evaluations, \cite{SqrtVelu2020} reduces the asymptotic cost of a
single large degree step, and \cite{SQALE2022} explores the very large parameters that the
quantum cryptanalysis of \Cref{sec:security-levels} may require. Alamati, De Feo, Montgomery
and Patranabis~\cite{AlamatiEtAl2020} abstracted the interface, and it is their vocabulary of
effective and restricted effective group actions that we use in \Cref{def:ega}.

\subsection{SIDH, its break, and masking of torsion data}

The other branch of isogeny based cryptography starts from Jao and De
Feo~\cite{JaoDeFeo2011,DeFeoJaoPlut2014}, who work with supersingular curves over $\Fpsq$
where the endomorphism ring is a quaternion order and the isogeny graph is not commutative.
To make a Diffie--Hellman analogue work in that setting, SIDH publishes the images of a
torsion basis under the secret isogeny. Galbraith, Petit, Shani and Ti~\cite{GPST2016} showed
early that this auxiliary data is dangerous when static keys are reused. In 2022 Castryck and
Decru~\cite{CastryckDecru2023} turned the torsion images into a full key recovery attack, and
Maino et al.~\cite{Maino2023} and Robert~\cite{Robert2023} generalised it, Robert removing the
last restrictions and giving a polynomial time algorithm.

Several responses attempt to keep the SIDH structure while hiding the torsion data. The most
developed is M-SIDH and MD-SIDH by Fouotsa, Moriya and Petit~\cite{MSIDH2023}, which scales
the torsion images by a secret value, respectively also hides the degree, and shows what
parameter growth this costs. We draw attention to this line of work because the phrase
``masking torsion points'' has been used for it, and our use of the word masking is a
different thing altogether: we transmit no torsion data, so there is nothing of that kind to
mask, and what our blinding layer hides is the identity of an ephemeral curve. The distinction
is spelled out in \Cref{sec:torsion-attacks}. The commutative branch, including \MCSI{}, was
unaffected by the 2022 attacks for the structural reason that it publishes no torsion images
and fixes no isogeny degree.

Isogeny based cryptography over $\Fpsq$ continues in other directions, notably the signature
scheme SQISign~\cite{SQISign2020}, which uses the quaternion endomorphism ring constructively
rather than trying to hide it.

\subsection{Authenticated key exchange from group actions}

\MCSI{} is not the first authenticated key exchange built on the CSIDH action, and we do not
claim the idea. De Kock, Gj{\o}steen and Veroni~\cite{deKock2021} give a two message
authenticated key exchange from hard homogeneous spaces with a tight security proof, and
Kawashima, Takashima, Aikawa and Takagi~\cite{Kawashima2021} give one exploiting the random
self-reducibility of the CSIDH problem. Both are proved secure in strong models, which
\Cref{thm:sk-ind} is not; both send their ephemeral curves in the clear, which \MCSI{} does
not. Our contribution relative to that work is narrow and can be stated in one sentence: we
add a blinding layer keyed by the static-static shared value, we prove what it buys
(\Cref{thm:blinding} and \Cref{thm:integrity}), and we are explicit about what it costs, which
is one extra cached group action evaluation per peer, the loss of forward secrecy for the
blinding itself, and the requirement that static keys be distributed authentically before any
session begins.

The general study of what can be built from cryptographic group actions is
in~\cite{AlamatiEtAl2020}, which also contains the group action analogues of the standard
Diffie--Hellman assumptions that we use in \Cref{sec:hard-problems}. The observation that the
decisional assumption fails for class group actions of non-prime discriminant is due to
Castryck, Sot\'akov\'a and Vercauteren~\cite{CSV2020}, and it is the reason our proofs are
phrased in the random oracle model over a computational assumption rather than in the standard
model over a decisional one.

\subsection{Hiding public key material in key exchange}

The pattern of encrypting a session's ephemeral public key under a value derived from long
term keys is standard outside isogenies. It appears in authenticated key exchange designs
that mix static and ephemeral Diffie--Hellman contributions, and the accounting of which
combination provides authentication and which provides forward secrecy goes back to the
analyses of Canetti and Krawczyk~\cite{CanettiKrawczyk2001} and LaMacchia, Lauter and
Mityagin~\cite{LaMacchia2007}. What we contribute here is not the pattern but its transfer to a
group action, where the arithmetic that makes the two parties agree is
\Cref{thm:class-action} rather than exponentiation, and where the proof cannot use a decisional
assumption for the reason just given.
\section{Conclusion and open problems}
\label{sec:conclusion}

We have specified \MCSI{}, a two message key exchange over the CSIDH class group action in
which each party's ephemeral public element is transmitted under an authenticated encryption
whose key is derived from the static-static shared value. We proved that the protocol is
correct with zero error (\Cref{thm:correctness}), that its session key is indistinguishable
from random against a passive adversary under the strong parallelisation assumption in the
random oracle model (\Cref{thm:sk-ind}), that the blinded ephemeral elements are hidden from
such an adversary (\Cref{thm:blinding}), and that an adversary holding neither static secret
cannot get a message accepted (\Cref{thm:integrity}). We also showed that a public byte
substitution layer of the kind sometimes proposed for constructions of this shape changes no
security notion at all (\Cref{lem:sbox}).

We also implemented the protocol twice, in portable C and in Python, checked the two against
each other by known answer vectors, and reported what a session costs in operations in $\Fp$,
which do not depend on the machine, in wall clock time, which does, and in memory, which the
protocol fixes almost entirely through the scratch space of the isogeny formulae
(\Cref{sec:measured-cost,sec:memory}). We audited that implementation for secret dependent
branches and measured what its timing gives away, and we report both results in full,
including the part that is bad news (\Cref{sec:measured-leak}).

It is as important to say what the paper does not establish. \Cref{sec:limits} lists six
items: security in the Canetti--Krawczyk or extended Canetti--Krawczyk models, security
against a fully active adversary, forward secrecy of the blinding, security against a
malicious peer with a valid static key, the concrete quantum security of the smaller parameter
sets, and resistance to implementation attacks. None of these is claimed anywhere in the
paper, and the first four are the natural next pieces of work.

Three further points deserve to be recorded, because they are the conclusions we would most
want a reader building something similar to take away.

First, the commutativity that makes a Diffie--Hellman analogue possible is a property of the
ideal class group acting on a torsor, and not a property of isogenies. Constructions that
iterate an isogeny step and hope that the two orders of composition agree do not have it, and
no correctness theorem of the kind in \Cref{sec:correctness} can be proved for them.

Second, a blinding layer inside a key exchange must be keyed by material that exists before
the session, or it is circular. Once that is accepted, the design space is small, and the
option we took has visible and stateable costs: static keys, an authentic distribution channel
for them, and no forward secrecy for the blinding.

Third, the choice of prime in a commutative isogeny scheme is not free and is not inherited
from elsewhere. \Cref{rem:2521} shows that the prime of a standard elliptic curve, chosen by
criteria that are entirely reasonable for discrete logarithms, admits no usable class group
action at all. The prime must be built for the action, as in \Cref{eq:prime-form}.

We close with the open problems in the order we would attack them. Proving security in a model
with ephemeral key reveal, most plausibly by adapting the techniques
of~\cite{deKock2021,Kawashima2021} to a protocol whose first message is encrypted, is the most
valuable. Constructing a blinding layer that is forward secret, or proving that none exists
without additional setup, is the most interesting. Producing a constant time implementation is the
most overdue. The one released with this paper is not one, and \Cref{sec:measured-leak} shows
what that costs: two hundred timings separate two keys whose one-norms differ by five parts in
$370$, and because \MCSI{} uses its static secret in every session, the same key can be
measured again and again. The audit and the measurements in this paper are the specification
of what such an implementation would have to fix, and reproducing them on a replacement is a
matter of running the same makefile targets. Until that piece exists the protocol described
here is a specification with a working reference, and not something anyone should deploy.

\bibliographystyle{unsrt}
\bibliography{mcsi}

\end{document}